
\title{Turing machines deciders, part I}
\author{
The bbchallenge Collaboration\thanks{\url{https://bbchallenge.org}} \and
Justin Blanchard \and
Konrad Deka \and
Nathan Fenner \and
Tony Guilfoyle \and
Iijil \and
Maja Kądziołka \and
Pavel Kropitz \and
Shawn Ligocki \and
Pascal Michel \and
Mateusz Na\'{s}ciszewski \and
Tristan Stérin
}

\documentclass[a4paper,british]{article}

\usepackage{babel}
\usepackage[utf8]{inputenc}
\usepackage[T1]{fontenc}
\usepackage[margin=1in]{geometry}
\usepackage[hidelinks]{hyperref}
\usepackage{caption}
\usepackage{floatpag}
\usepackage{subcaption}
\usepackage{tikz}

\usepackage{algorithm}
\usepackage[noend]{algpseudocode}

\usepackage{graphicx}
\usepackage{mathtools}

\usepackage{amsmath,amsfonts,amssymb,amsthm}

\theoremstyle{definition} 
\newtheorem{theorem}{Theorem}[section]
\newtheorem{definition}{Definition}[section]
\newtheorem{lemma}{Lemma}[section]

\newtheorem{corollary}{Corollary}[section]
\numberwithin{equation}{section}

\theoremstyle{definition} 

\newtheorem{example}{Example}[section]
\newtheorem{remark}{Remark}[section]

\usepackage{xassoccnt}
\DeclareCoupledCountersGroup{theorems}
\DeclareCoupledCounters[name=theorems]{theorem,definition,lemma,proposition,corollary,remark,observation,example}

\usepackage{microtype,xspace,wrapfig,multicol} 
\usepackage[textsize=tiny,color=lightgray]{todonotes} 
\usepackage[normalem]{ulem} 
\usepackage{stmaryrd}

\newcommand{\tabi}{\hspace{\algorithmicindent}}

\newcommand{\N}{\mathbb{N}}
\newcommand{\Z}{\mathbb{Z}}

\newcommand{\lhead}[1]{\stackrel{#1}\triangleleft}
\newcommand{\rhead}[1]{\stackrel{#1}\triangleright}

\usepackage{xcolor}

\definecolor{colorA}{RGB}{255,0,0}
\definecolor{colorB}{RGB}{255,128,0}
\definecolor{colorC}{RGB}{0,0,255}
\definecolor{colorD}{RGB}{0,255,0}
\definecolor{colorE}{RGB}{255,0,255}

\usepackage{titling}
\setlength{\droptitle}{-8em}

\begin{document}
\date{}

\maketitle
\vspace{-3.7em}
\begin{abstract}
  The Busy Beaver Challenge (or bbchallenge) aims at collaboratively solving the following conjecture: ``$S(5) = 47{,}176{,}870$'' [Radó, 1962]\nocite{Rado_1962}, [Marxen and Buntrock, 1990]\nocite{Marxen_1990}, [Aaronson, 2020]\nocite{BusyBeaverFrontier}. This conjecture says that if a 5-state Turing machine runs for more than 47,176,870 steps without halting, then it will never halt -- starting from the all-0 tape. Proving this conjecture amounts to deciding whether 181,385,789  Turing machines with 5 states halt or not -- starting from the all-0 tape \cite{bbchallenge2025}. To do so, we write \textit{deciders}: programs that take as input a Turing machine and output either \texttt{HALT}, \texttt{NONHALT}, or \texttt{UNKNOWN}. Each decider is specialised in recognising a particular type of non-halting behavior.\footnote{Because the halting problem is undecidable, there is no \textit{universal} decider, i.e.~that never returns \texttt{UNKNOWN}.}

  After two years of work, the Busy Beaver Challenge achieved its goal in July 2024 by delivering a proof of ``$S(5) = 47{,}176{,}870$'' formalised in Coq\footnote{Coq has been renamed Rocq. Also, the formalised proof is available at \url{https://github.com/ccz181078/Coq-BB5}.} \cite{bbchallenge2025}. In this document, we present deciders that were developed before the Coq proof and which \textbf{were not used} in the proof;\footnote{Apart from the verifier part of Finite Automata Reduction, Section~\ref{sec:finite-automata-reduction}, which is used in \cite{bbchallenge2025}.} nonetheless, they are relevant techniques for analysing Turing machines. Part II of this work is the decider section of our paper showing ``$S(5) = 47{,}176{,}870$'' \cite{bbchallenge2025}, presenting the deciders that were used in the Coq proof.

\end{abstract}

\setcounter{tocdepth}{2}
\tableofcontents

\newpage
\section{Conventions}\label{sec:conventions}

\begin{table}[h!]
  \centering
  \begin{tabular}{lll}
      & 0     & 1   \\
    A & 1RB   & 1LC \\
    B & 1RC   & 1RB \\
    C & 1RD   & 0LE \\
    D & 1LA   & 1LD \\
    E & - - - & 0LA
  \end{tabular}
  \caption{Transition table of the current 5-state busy beaver champion: it halts after 47,176,870 steps.\\\url{https://bbchallenge.org/1RB1LC_1RC1RB_1RD0LE_1LA1LD_---0LA&status=halt}}
\end{table}\label{table:bb5}

The set $\mathbb{N}$ denotes $\{0,1,2\dots\}$.

\paragraph*{Turing machines.}The Turing machines that are studied in the context of bbchallenge use a binary alphabet and a single bi-infinite tape. Machine transitions are either undefined (in which case the machine halts) or given by (a) a symbol to write (b) a direction to move (right or left) and (c) a state to go to. Table~\ref{table:bb5} gives the transition table of the current 5-state busy beaver champion. The machine halts after 47,176,870 steps (starting from all-0 tape) when it reads a 0 in state E, which is undefined.

A \textit{configuration} of a Turing machine is defined by the 3-tuple: (i) state (ii) position of the head (iii) content of the memory tape. In the context of bbchallenge, \textit{the initial configuration} of a machine is always (i) state is A, i.e. the first state to appear in the machine's description (ii) head's position is 0 (iii) the initial tape is all-0 -- i.e. each memory cell is containing 0. We write $c_1 \vdash_\mathcal{M} c_2$ if a configuration $c_2$ is obtained from $c_1$ in one computation step of machine $\mathcal{M}$. We omit $\mathcal{M}$ if it is clear from context. We let $c_1 \vdash^s c_2$ denote a sequence of $s$ computation steps, and let $c_1 \vdash^* c_2$ denote zero or more computation steps. 
We write $c_1 \vdash \bot$ if the machine halts after executing one computation step from configuration $c_1$. In the context of bbchallenge, halting happens when an undefined machine transition is met i.e. no instruction is given for when the machine is in the state, tape position and tape corresponding to configuration $c_1$.

When discussing concrete configurations, we write
$0^\infty\; s_1\; \cdots\; s_{k-1}\; [s_k]_q\; s_{k+1}\; \cdots\; s_n\; 0^\infty$
to mean the configuration where the machine is in state $q$, with the head
positioned on the symbol $s_k$, and the tape both starts and end by an infinite sequence of 0s, represented $0^\infty$. Thus, the initial configuration of the machine
can be written as $0^\infty\; [0]_A\; 0^\infty$.

\paragraph*{Directional Turing machines.} We will sometimes prefer to think of the tape head as being between
symbols. Thus, we write $l \lhead{q} r$, with $l,r\in\{0,1\}^*$ to mean that the head is at the rightmost symbol
of $l$, and $l \rhead{q} r$ to mean that the head is at the leftmost symbol of $r$.
For example, $0^\infty\; [1]_A\; 0^\infty$ can be written as
$0^\infty\; 1 \lhead A 0^\infty$ or $0^\infty \rhead A 1\; 0^\infty$.

\paragraph*{Space-time diagram.} We use space-time diagrams to give a visual representation of the behavior of a given machine. The space-time diagram of machine $\mathcal{M}$ is an image where the $i^\text{th}$ row of the image gives:
\begin{enumerate}
  \item The content of the tape after $i$ steps (black is 0 and white is 1).
  \item The position of the head is colored to give state information using the following colours for 5-state machines: \textcolor{colorA}{A},  \textcolor{colorB}{B},  \textcolor{colorC}{C},  \textcolor{colorD}{D},  \textcolor{colorE}{E}.
\end{enumerate}


\newpage
\section{Cyclers}\label{sec:cyclers}

\begin{figure}[h!]
  \centering
  \includegraphics[width=0.4\textwidth]{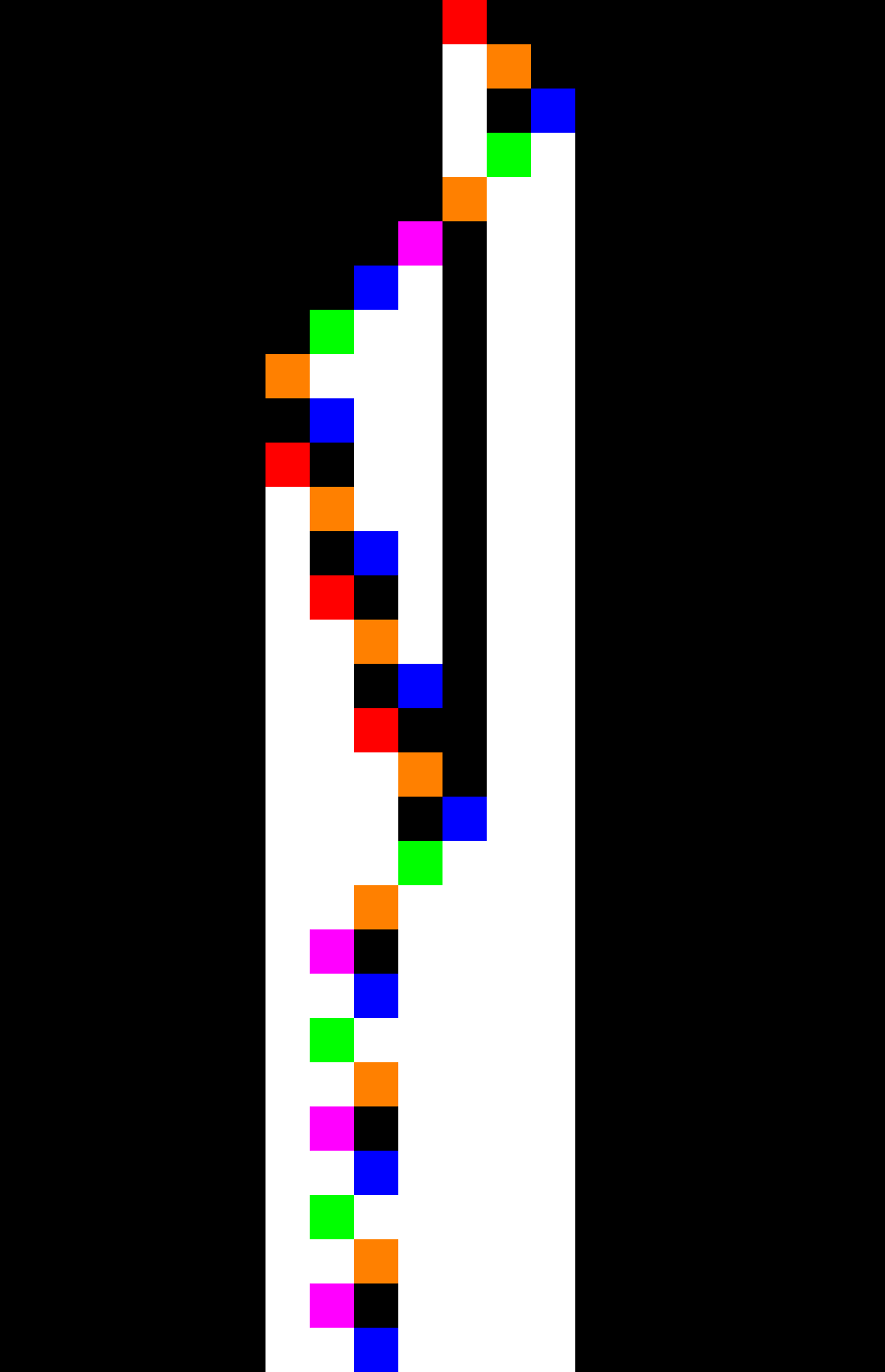}
  \hspace{2ex}
  \includegraphics[width=0.4\textwidth]{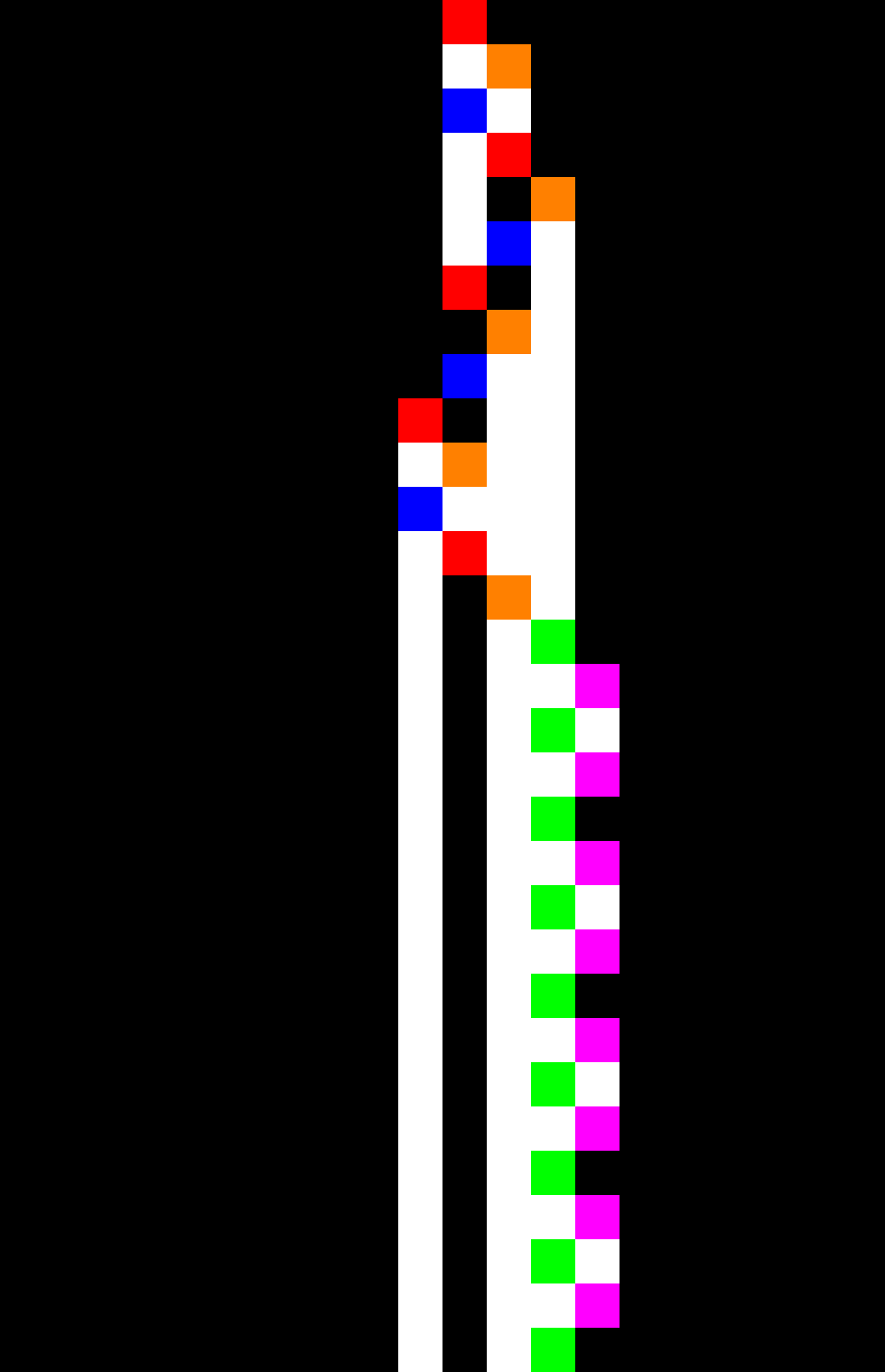}
  \caption{Space-time diagrams of the 30 first steps of bbchallenge's machines \#279,081 (left) and \#4,239,083 (right) which are both ``Cyclers'': they eventually repeat the same configuration for ever. \\
    Access the machines at \url{https://bbchallenge.org/279081} and
    \url{https://bbchallenge.org/4239083}.}\label{fig:cyclers}
\end{figure}

The goal of this decider is to recognise Turing machines that cycle through the same configurations forever. Such machines never halt. The method is simple: remember every configuration seen by a machine and return \texttt{true} if one is visited twice. A time limit (maximum number of steps) is also given for running the test in practice: the algorithm recognises any machine whose cycle fits within this limit\footnote{In practice, for machines with 5 states the decider was run with 1000 steps time limit.}.

\begin{example}\normalfont
  Figure~\ref{fig:cyclers} gives the space-time diagrams of the 30 first iterations of two ``Cyclers'' machines: bbchallenge's machines \#279,081 (left) and \#4,239,083 (right). Refer to \url{https://bbchallenge.org/279081} and
  \url{https://bbchallenge.org/4239083} for their transition tables. From these space-time diagrams we see that the machines eventually repeat the same configuration.
\end{example}

\subsection{Pseudocode}

We assume that we are given a Turing Machine type \textbf{TM} that encodes the transition table of a machine as well as a procedure \textbf{TuringMachineStep}(machine,configuration) which computes the next configuration of a Turing machine from the given configuration or \textbf{nil} if the machine halts at that step. The pseudocode is given in Algorithm~\ref{alg:cyclers}.

\begin{algorithm}
  \caption{{\sc decider-cyclers}}\label{alg:cyclers}

  \begin{algorithmic}[1]

    \State \textbf{struct} Configuration \{
    \State \tabi\textbf{State} state
    \State \tabi\textbf{int} headPosition
    \State \tabi\textbf{int $\boldsymbol{\to}$ int} tape
    \State \}
    \State
    \Procedure{\textbf{bool} {\sc decider-cyclers}}{\textbf{TM} machine,\textbf{int} timeLimit}
    \State \textbf{Configuration} currConfiguration = \{.state = \textcolor{colorA}{A}, .headPosition = 0, .tape = \{0:0\}\}
    \State \textbf{Set$\boldsymbol{<}$Configuration$\boldsymbol{>}$} configurationsSeen = \{\}
    \State \textbf{int} currTime = 0

    \While{currTime $\leq$ timeLimit}
    \If{currConfiguration \textbf{in} configurationsSeen} \label{j:alg:cyclers}
    \State \textbf{return} true
    \EndIf
    \State configurationsSeen.\textbf{insert}(currConfiguration) \label{i:alg:cyclers}

    \State currConfiguration = \textbf{TuringMachineStep}(machine,currConfiguration)
    \State currTime += 1

    \If{currConfiguration == \textbf{nil}}
    \State \textbf{return} false //machine has halted, it is not a Cycler
    \EndIf
    \EndWhile

    \State \textbf{return} false
    \EndProcedure

  \end{algorithmic}
\end{algorithm}

\subsection{Correctness}

\begin{theorem}\label{th:cyclers} Let $\mathcal{M}$ be a Turing machine and $t \in \mathbb{N}$ a time limit. Let $c_0$ be the initial configuration of the machine. There exists $i\in\mathbb{N}$ and $j\in\mathbb{N}$ such that $c_0 \vdash^i c_i \vdash^{j-i} c_i$ with $i < j \leq t$ if and only if {\sc decider-cyclers}($\mathcal{M}$,$t$) returns \texttt{true} (Algorithm~\ref{alg:cyclers}).
\end{theorem}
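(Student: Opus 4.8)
The plan is to tie the internal state of the algorithm to the configuration sequence of $\mathcal{M}$. Since \textbf{TuringMachineStep} is deterministic, there is a well-defined sequence $c_0 \vdash c_1 \vdash c_2 \vdash \cdots$ that is defined for exactly as many steps as $\mathcal{M}$ runs without halting from $c_0$, and \textbf{TuringMachineStep}$(\mathcal{M},c_k)$ equals $c_{k+1}$ when $\mathcal{M}$ runs more than $k$ steps and \textbf{nil} when $\mathcal{M}$ halts at step $k+1$. The core of the argument is a loop invariant: \emph{if the body of the \textbf{while} loop is entered for the $n$-th time} (counting from $n=0$), \emph{then} \texttt{currTime} $=n$, \texttt{currConfiguration} $=c_n$ (in particular $\mathcal{M}$ has not halted within its first $n$ steps), and \texttt{configurationsSeen} $=\{c_0,\dots,c_{n-1}\}$. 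I would prove this by induction on $n$. The base case $n=0$ is read off the initialisation. For the step, observe that to enter iteration $n+1$ the decider must pass through iteration $n$ without returning: the membership test must fail (so $c_n\notin\{c_0,\dots,c_{n-1}\}$, and after the insertion \texttt{configurationsSeen} becomes $\{c_0,\dots,c_n\}$), and the subsequent call to \textbf{TuringMachineStep} must return a non-\textbf{nil} value, which is then $c_{n+1}$, with \texttt{currTime} incremented to $n+1$.

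For the direction ``existence of $i<j\le t$ with $c_0\vdash^i c_i\vdash^{j-i} c_i$ implies \texttt{true}'': composing the two runs gives $c_0\vdash^{j} c_i$, so $c_j=c_i$ and $\mathcal{M}$ runs for at least $j$ steps without halting. Let $j^\star=\min\{m:\ c_m\in\{c_0,\dots,c_{m-1}\}\}$; this set contains $j$ because $c_j=c_i$ with $i<j$, hence $j^\star\le j\le t$. By minimality of $j^\star$ the decider cannot return \texttt{true} in any iteration $m<j^\star$, and since $\mathcal{M}$ does not halt within the first $j^\star\le j$ steps it cannot return \texttt{false} there either; together with \texttt{currTime} staying $\le j^\star\le t$, an easy induction on $m\le j^\star$ shows iteration $j^\star$ is actually entered. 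There the invariant gives \texttt{currConfiguration} $=c_{j^\star}\in\{c_0,\dots,c_{j^\star-1}\}=$ \texttt{configurationsSeen}, so the decider returns \texttt{true}.

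For the converse, assume {\sc decider-cyclers}$(\mathcal{M},t)$ returns \texttt{true}. The only \texttt{return true} sits inside the membership test, say during iteration $n$; since the loop was entered, \texttt{currTime} $=n\le t$, and since no earlier iteration returned (in particular none returned \texttt{false} via \textbf{nil}) the invariant applies, giving \texttt{currConfiguration} $=c_n$ and \texttt{configurationsSeen} $=\{c_0,\dots,c_{n-1}\}$ with $c_n$ a member, i.e.\ $c_n=c_i$ for some $i<n$. Setting $j=n$ yields $c_0\vdash^i c_i$ and $c_i\vdash^{j-i} c_n=c_i$ with $i<j\le t$, as required. (Note the decider always terminates, so ``does not return \texttt{true}'' is the same as ``returns \texttt{false}'', and the two implications really do give the stated biconditional.)

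The routine-but-delicate part is the bookkeeping in the loop invariant: the off-by-one coming from the membership test being run against \texttt{configurationsSeen} \emph{before} the current configuration is inserted, and the fact that the loop has two distinct exit routes (a halting transition detected as \textbf{nil}, versus exhausting \texttt{timeLimit}), only the first of which must be shown not to fire before iteration $j^\star$ in the forward direction. The only genuine modelling point is to fix, once and for all, that determinism of \textbf{TuringMachineStep} makes $c_0,c_1,\dots$ well-defined, so that ``$\mathcal{M}$ runs for at least $j$ steps'' and ``$c_j$ is defined'' are interchangeable; everything else is direct verification against the pseudocode.
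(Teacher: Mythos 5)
Your proof is correct and follows essentially the same argument as the paper's, which simply observes that the algorithm records $c_0,\dots,c_t$ and returns \texttt{true} exactly when one configuration recurs (identifying $i$ with the insertion at l.\ref{i:alg:cyclers} and $j$ with the membership hit at l.\ref{j:alg:cyclers}). Your version merely makes this rigorous via an explicit loop invariant and a careful treatment of the two loop-exit routes, which is a faithful formalisation rather than a different approach.
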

\begin{proof}
  This follows directly from the behavior of {\sc decider-cyclers}($\mathcal{M}$,$t$): all configurations from $c_0$ to $c_t$ are recorded and the algorithm returns \texttt{true} if and only if one is visited twice. This mathematically translates to
  there exists $i\in\mathbb{N}$ and $j\in\mathbb{N}$ such that $c_0 \vdash^i c_i \vdash^{j-i} c_i$ with $i < j \leq t$, which is what we want. Index $i$ corresponds to the first time that $c_i$ is seen (l.\ref{i:alg:cyclers} in Algorithm~\ref{alg:cyclers}) while index $j$ corresponds to the second time that $c_i$ is seen (l.\ref{j:alg:cyclers} in Algorithm~\ref{alg:cyclers}).
\end{proof}

\begin{corollary}
  Let $\mathcal{M}$ be a Turing machine and $t \in \mathbb{N}$ a time limit. If {\sc decider-cyclers}($\mathcal{M}$,$t$) returns \texttt{true} then the behavior of $\mathcal{M}$ from all-0 tape has been decided: $\mathcal{M}$ does not halt.
\end{corollary}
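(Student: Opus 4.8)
The plan is to derive the corollary directly from Theorem~\ref{th:cyclers} together with the determinism of Turing machine computation. Since {\sc decider-cyclers}$(\mathcal{M},t)$ returns \texttt{true}, Theorem~\ref{th:cyclers} supplies indices $i<j\le t$ with $c_0\vdash^i c_i\vdash^{j-i}c_i$. Writing $p=j-i\ge 1$, the claim to establish is that the computation of $\mathcal{M}$ started from $c_0$ is infinite, i.e.~never reaches $\bot$.

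First I would record that a genuine step sequence $c_i\vdash^{p}c_i$ with $p\ge 1$ certifies that every transition used along it is defined (no $\vdash\bot$ occurs within one period), and similarly that the prefix $c_0\vdash^i c_i$ uses only defined transitions. Next, using that $\vdash_\mathcal{M}$ is deterministic — each configuration has exactly one successor, which is either another configuration or $\bot$ — I would prove by induction on $n$ that $c_i\vdash^{np}c_i$ for all $n\in\mathbb{N}$: the base case $n=0$ is trivial and the inductive step concatenates $c_i\vdash^{np}c_i$ with $c_i\vdash^{p}c_i$. Consequently, for every $m\ge 0$ the configuration reached from $c_i$ after $m$ steps exists: write $m=np+r$ with $0\le r<p$, go around the loop $n$ times to return to $c_i$, then take the first $r$ steps of the loop, which are defined. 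Hence the computation from $c_i$, and therefore the computation from $c_0$ (which passes through $c_i$ after $i$ defined steps), is infinite.

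Finally I would connect this to the stated conclusion: $c_0$ is by definition the initial configuration of $\mathcal{M}$ on the all-$0$ tape, so ``the computation from $c_0$ never reaches $\bot$'' is precisely the statement that $\mathcal{M}$ does not halt from the all-$0$ tape, and ``the behavior has been decided'' because the algorithm's \texttt{true} output is, by Theorem~\ref{th:cyclers}, equivalent to the existence of such a cycle. The only mild subtlety — the ``main obstacle'', such as it is — is making the determinism step fully precise: one must note that {\sc TuringMachineStep} returns \textbf{nil} exactly when the relevant transition is undefined, so that ``$\mathcal{M}$ halts'' and ``the current configuration has no successor configuration'' coincide; granting this, the remainder is the routine induction sketched above.
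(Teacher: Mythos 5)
Your proposal is correct and follows essentially the same route as the paper: invoke Theorem~\ref{th:cyclers} to obtain the cycle $c_0 \vdash^i c_i \vdash^{j-i} c_i$, iterate the loop to conclude $c_i$ is visited infinitely often, hence $\mathcal{M}$ never halts. Your added care about determinism and the definedness of each transition along the loop only makes explicit what the paper's one-line argument leaves implicit.
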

\begin{proof}
  By Theorem~\ref{th:cyclers}, there exists $i\in\mathbb{N}$ and $j\in\mathbb{N}$ such that $c_0 \vdash^i c_i \vdash^{j-i} c_i$ with $i < j \leq t$. It follows that for all $k\in\mathbb{N}$, $c_0 \vdash^{i+k(j-i)} c_i$. The machine never halts as it will visit $c_i$ infinitely often.
\end{proof}

\subsection{Implementation}

The decider was coded in \texttt{golang} and is accessible at this link: \url{https://github.com/bbchallenge/bbchallenge-deciders/tree/main/decider-cyclers}.



\newpage
\section{Translated Cyclers}\label{sec:translated-cyclers}

\begin{figure}[h!]
  \centering
  \includegraphics[width=0.54\textwidth]{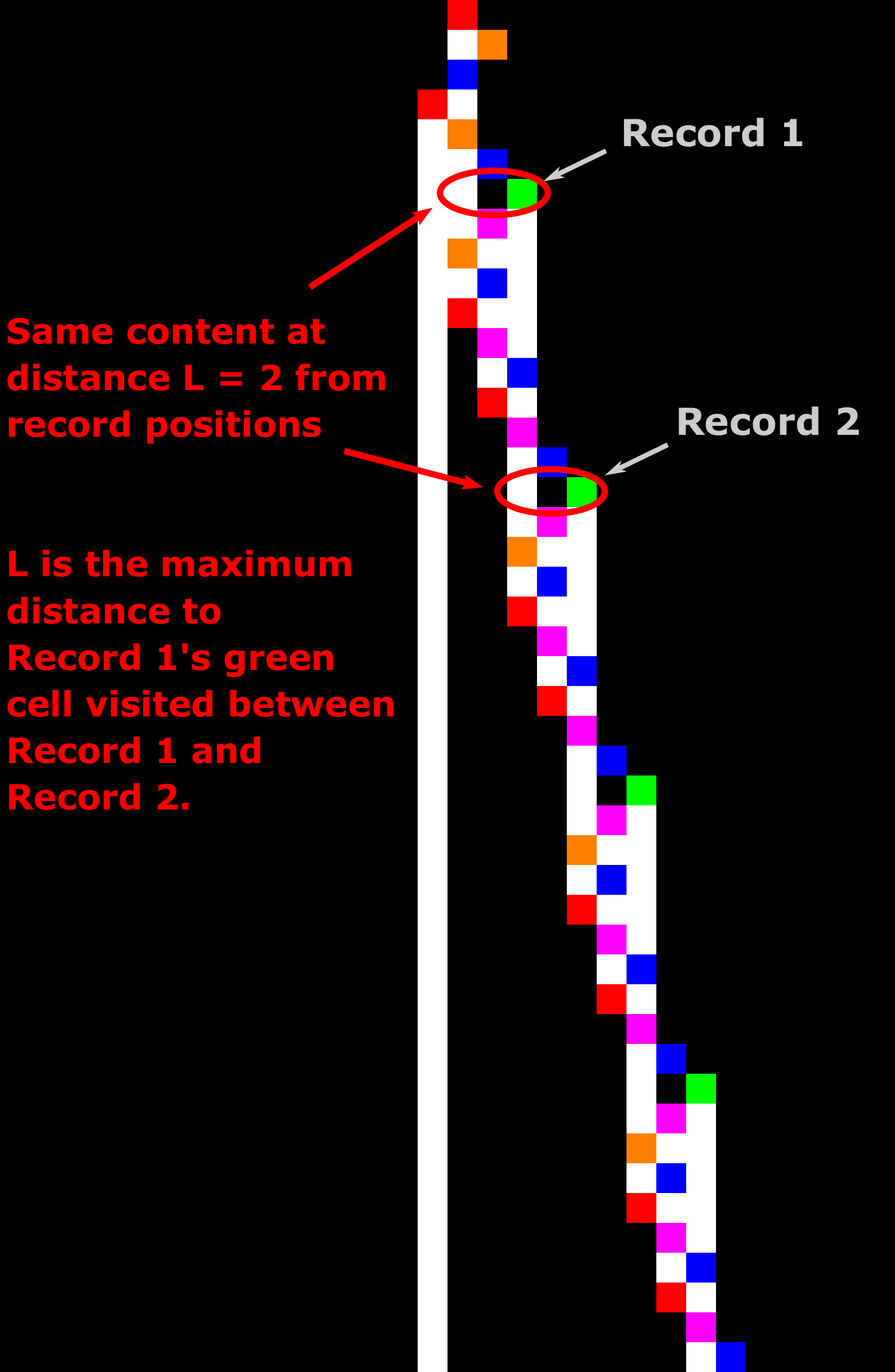}

  \caption{Example ``Translated cycler'': 45-step space-time diagram of bbchallenge's machine \#44,394,115. See \url{https://bbchallenge.org/44394115}. The same bounded pattern is being translated to the right forever. The text annotations illustrate the main idea for recognising ``Translated Cyclers'': find two configurations that break a record (i.e. visit a memory cell that was never visited before) in the same state (here state \textcolor{colorD}{D}) such that the content of the memory tape at distance $L$ from the record positions is the same in both record configurations. Distance $L$ is defined as being the maximum distance to record position 1 that was visited between the configuration of record 1 and record 2.}\label{fig:translated-cyclers}
\end{figure}

The goal of this decider is to recognise Turing machines that translate a bounded pattern forever. We call such machines ``Translated cyclers''. They are close to ``Cyclers'' (Section~\ref{sec:cyclers}) in the sense that they are only repeating a pattern but there is added complexity as they are able to translate the pattern in space at the same time, hence the decider for Cyclers cannot directly apply here.

The main idea for this decider is illustrated in Figure~\ref{fig:translated-cyclers} which gives the space-time diagram of a ``Translated cycler'': bbchallenge's machine \#44,394,115 (c.f. \url{https://bbchallenge.org/44394115}). The idea is to find two configurations that break a record (i.e. visit a memory cell that was never visited before) in the same state (here state \textcolor{colorD}{D}) such that the content of the memory tape at distance $L$ from the record positions is the same in both record configurations. Distance $L$ is defined as being the maximum distance to record position 1 that was visited between the configuration of record 1 and record 2. In those conditions, we can prove that the machine will never halt. Similar ideas have been developed in \cite{Lin1963}.

The translated cycler of Figure~\ref{fig:translated-cyclers} features a relatively simple repeating pattern and transient pattern (pattern occurring before the repeating patterns starts). The translated cycler of Figure~\ref{fig:translated-cyclers-more} features a significantly more complex pattern. The method for detecting the behavior is the same but more resources are needed.

\begin{figure}
  \centering
  \includegraphics[width=0.7\textwidth]{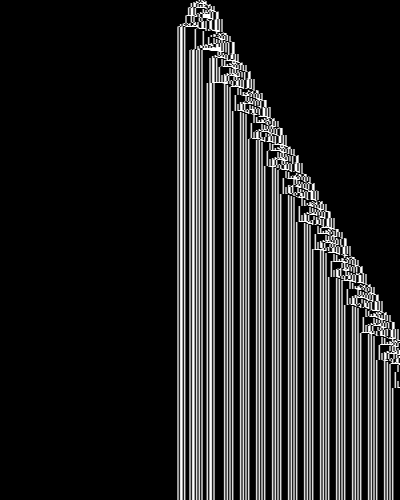}

  \caption{More complex ``Translated cycler'': 10,000-step space-time diagram (no state colours) of bbchallenge's machine \#59,090,563. See \url{https://bbchallenge.org/59090563}.}\label{fig:translated-cyclers-more}
\end{figure}

\subsection{Pseudocode}

We assume that we are given a Turing Machine type \textbf{TM} that encodes the transition table of a machine as well as a procedure \textbf{TuringMachineStep}(machine,configuration) which computes the next configuration of a Turing machine from the given configuration or \textbf{nil} if the machine halts at that step.

One minor complication of the technique described above is that one has to track record-breaking configurations on both sides of the tape: a configuration can break a record on the right or on the left. Also, in order to compute distance $L$ (see above or Definition~\ref{def:distL}) it is useful to add to memory cells the information of the last time step at which it was visited.

We also assume that we are given a routine {\sc get-extreme-position}(tape,sideOfTape) which gives us the rightmost or leftmost position of the given tape (well defined as we always manipulate finite tapes).

\begin{algorithm}
  \caption{{\sc decider-translated-cyclers}}\label{alg:translated-cyclers}

  \begin{algorithmic}[1]
    \State \textbf{const int} RIGHT, LEFT = 0, 1
    \State \textbf{struct} ValueAndLastTimeVisited \{
    \State \tabi\textbf{int} value
    \State \tabi\textbf{int} lastTimeVisited

    \State \}
    \State \textbf{struct} Configuration \{
    \State \tabi\textbf{State} state
    \State \tabi\textbf{int} headPosition
    \State \tabi\textbf{int $\boldsymbol{\to}$ ValueAndLastTimeVisited} tape
    \State \}
    \State

    \Procedure{\textbf{bool} {\sc decider-translated-cyclers}}{\textbf{TM} machine,\textbf{int} timeLimit}
    \State \textbf{Configuration} currConfiguration = \{.state = \textcolor{colorA}{A}, .headPosition = 0, .tape = \{0:\{.value = 0, .lastTimeVisited = 0\}\}\}
    \State // 0: right records, 1: left records
    \State \textbf{List$\boldsymbol{<}$Configuration$\boldsymbol{>}$}
    recordBreakingConfigurations[2] = [[],[]]
    \State \textbf{int} extremePositions[2] = [0,0]
    \State \textbf{int} currTime = 0

    \While{currTime $<$ timeLimit}
    \State \textbf{int} headPosition = currConfiguration.headPosition
    \State currConfiguration.tape[headPosition].lastTimeVisited = currTime
    \If{headPosition $>$ extremePositions[RIGHT] \textbf{or} headPosition $<$ extremePositions[LEFT]}
    \State \textbf{int} recordSide = (headPosition $>$ extremePositions[RIGHT]) ? RIGHT : LEFT
    \State extremePositions[recordSide] = headPosition
    \If{{\sc aux-check-records}(currConfiguration, recordBreakingConfigurations[recordSide], recordSide)}
    \State \textbf{return} true
    \EndIf
    \State recordBreakingConfigurations[recordSide].\textbf{append}(currConfiguration)
    \EndIf

    \State currConfiguration = \textbf{TuringMachineStep}(machine,currConfiguration)
    \State currTime += 1

    \If{currConfiguration == \textbf{nil}}
    \State \textbf{return} false //machine has halted, it is not a Translated Cycler
    \EndIf
    \EndWhile

    \State \textbf{return} false
    \EndProcedure

  \end{algorithmic}
\end{algorithm}
\begin{algorithm}
  \begin{algorithmic}[1]
    \caption{{\sc compute-distance-L} and {\sc aux-check-records}}\label{alg:translated-cyclers-aux}

    \Procedure{\textbf{int} {\sc compute-distance-L}}{\textbf{Configuration} currRecord, \textbf{Configuration} olderRecord, \textbf{int} recordSide}
    \State \textbf{int} olderRecordPos = olderRecord.headPosition
    \State \textbf{int} olderRecordTime = olderRecord.tape[olderRecordPos].lastTimeVisited
    \State \textbf{int} currRecordTime = currRecord.tape[currRecord.headPosition].lastTimeVisited
    \State \textbf{int} distanceL = 0
    \For{\textbf{int} pos \textbf{in} currRecord.tape}
    \If{pos $>$ olderRecordPos \textbf{and} recordSide == RIGHT}
    \textbf{continue}
    \EndIf
    \If{pos $<$ olderRecordPos \textbf{and} recordSide == LEFT}
    \textbf{continue}
    \EndIf

    \State \textbf{int} lastTimeVisited = currRecord.tape[pos].lastTimeVisited
    \If{lastTimeVisited $\geq$ olderRecordTime \textbf{and} lastTimeVisited $\leq$ currRecordTime}
    \State distanceL = \textbf{max}(distanceL,\textbf{abs}(pos-olderRecordPos))
    \EndIf

    \EndFor
    \State \textbf{return} distanceL
    \EndProcedure
    \State
    \Procedure{\textbf{bool} {\sc aux-check-records}}{\textbf{Configuration} currRecord, \textbf{List$\boldsymbol{<}$Configuration$\boldsymbol{>}$} olderRecords, \textbf{int} recordSide}

    \For{\textbf{Configuration} olderRecord \textbf{in} olderRecords}
    \If{currRecord.state != olderRecord.state}
    \State \textbf{continue}
    \EndIf
    \State \textbf{int} distanceL = {\sc compute-distance-L}(currRecord,olderRecord,recordSide)
    \State \textbf{int} currExtremePos = {\sc get-extreme-position}(currRecord.tape,recordSide)
    \State \textbf{int} olderExtremePos = {\sc get-extreme-position}(olderRecord.tape,recordSide)
    \State \textbf{int} step = (recordSide == RIGHT) ? -1 : 1
    \State \textbf{bool} isSameLocalTape = true
    \For{\textbf{int} offset = 0; \textbf{abs}(offset) $\leq$ distanceL; offset += step}
    \If{currRecord.tape[currExtremePos+offset].value != \newline olderRecord.tape[olderExtremePos+offset].value}
    \State isSameLocalTape = false
    \State \textbf{break}
    \EndIf
    \EndFor
    \If{isSameLocalTape}
    \State \textbf{return} true
    \EndIf
    \EndFor
    \State \textbf{return} false
    \EndProcedure

  \end{algorithmic}
\end{algorithm}

\subsection{Correctness}

\begin{definition}[record-breaking configurations]
  Let $\mathcal{M}$ be a Turing machine and $c_0$ its busy beaver initial configuration (i.e. state is 0, head position is 0 and tape is all-0).
  Let $c$ be a configuration reachable from $c_0$, i.e. $c_0 \vdash^* c$.
  Then $c$ is said to be \textit{record-breaking} if the current head position had never been visited before. Records can be broken to the \textit{right} (positive head position) or to the left (negative head position).
\end{definition}

\begin{definition}[Distance $L$ between record-breaking configurations]\label{def:distL}
  Let $\mathcal{M}$ be a Turing machine and $r_1,r_2$ be two record-breaking configurations on the same side of the tape at respective times $t_1$ and $t_2$ with $t_1 < t_2$. Let $p_1$ and $p_2$ be the tape positions of these records. Then, distance $L$ between $r_1$ and $r_2$ is defined as $\max\{|p_1 - p|\}$ with $p$ any position visited by $\mathcal{M}$ between $t_1$ and $t_2$ that is not beating record $p_1$ (i.e. $p \leq p_1$ for a record on the right and $p \geq p_1$ for a record on the left).
\end{definition}

\begin{lemma}\label{lem:translated-cyclers}
  Let $\mathcal{M}$ be a Turing machine. Let $r_1$ and $r_2$ be two configurations that broke a record in the same state and on the same side of the tape at respective times $t_1$ and $t_2$ with $t_1 < t_2$. Let $p_1$ and $p_2$ be the tape positions of these records. Let $L$ be the distance between $r_1$ and $r_2$ (Definition~\ref{def:distL}). If the content of the tape in $r_1$ at distance $L$ of $p_1$ is the same as the content of the tape in $r_2$ at distance $L$ of $p_2$ then $\mathcal{M}$ never halts.
\end{lemma}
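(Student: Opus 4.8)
The plan is to show that the local configuration seen at time $t_2$ is an exact spatial translate of the one seen at time $t_1$, that the computation between the two records only ever touches cells inside the window of radius $L$, and hence that the same block of computation repeats forever, shifted right by $p_2 - p_1$ each time.

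First I would reduce to the case of records broken on the right (the left case follows by reflecting the tape) and fix notation: $d := p_2 - p_1 > 0$, $q$ the common state of $r_1$ and $r_2$, and $w$ the content of $r_1$ on the cells $[p_1-L, p_1]$. Two elementary facts set things up. (i) By the record property, in $r_1$ every cell strictly right of $p_1$ is $0$ and in $r_2$ every cell strictly right of $p_2$ is $0$; together with the hypothesis that $r_2$ on $[p_2-L,p_2]$ equals $r_1$ on $[p_1-L,p_1]$, this means that the half-tape of $r_2$ from $p_2-L$ rightward is precisely the half-tape of $r_1$ from $p_1-L$ rightward translated by $d$, and both configurations have their head at the right end of the $w$-block in state $q$. (ii) By Definition~\ref{def:distL}, during the $t_2-t_1$ steps from $r_1$ (at time $t_1$) to $r_2$ (at time $t_2$) the head never occupies a cell of index $\le p_1$ lying more than $L$ to the left of $p_1$; that is, the whole computation segment keeps the head at indices $\ge p_1 - L$.

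The core is a locality-and-shift lemma, provable by a routine induction on the number of steps: if a run $c \vdash^{s} c'$ keeps the head at indices $\ge m$ throughout and halts at no intermediate step, then any configuration $\tilde c$ agreeing with $c$ in state, head position, and tape content on $[m,\infty)$ produces a run $\tilde c \vdash^{s} \tilde c'$ with the same properties, where $\tilde c'$ agrees with $c'$ in state, head position, and content on $[m,\infty)$; moreover, Turing machine steps commute with translating the entire tape. I would apply this with $c = r_1$, $m = p_1-L$, $s = t_2-t_1$, taking $\tilde c$ to be $r_2$ translated left by $d$, which by (i) agrees with $r_1$ in state, head position, and content on $[p_1-L,\infty)$; translating the conclusion back by $d$ then shows that $t_2-t_1$ steps from $r_2$ reach a configuration $r_3$ that relates to $r_2$ exactly as $r_2$ relates to $r_1$ — state $q$, head at $p_1 + 2d$, content $w$ on $[p_1+2d-L,\, p_1+2d]$ and $0$ to the right — with the run never halting. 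Iterating, a straightforward induction produces configurations $R_n$ reached from $c_0$ at time $t_1 + n(t_2 - t_1)$, with $R_0 = r_1$, $R_1 = r_2$, each $R_n$ in state $q$ with head at $p_1 + nd$ and content $w$ on $[p_1+nd-L,\, p_1+nd]$ followed by zeros, and the machine running without halting from $R_n$ to $R_{n+1}$ in $t_2 - t_1 \ge 1$ steps. Since $t_2 - t_1 \ge 1$, $\mathcal{M}$ runs forever.

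I expect the main obstacle to be the locality-and-shift lemma together with verifying that the window content really returns to $w$ at every iteration: that step combines the record property (zeros beyond $p_2$ in $r_2$) with the hypothesis (the $[p_2-L,p_2]$-window of $r_2$ equals the $[p_1-L,p_1]$-window of $r_1$) in exactly the right way, and it is where the precise definition of $L$ — the maximal leftward excursion below $p_1$ between the two record times — is essential, so I would keep explicit track of indices and pin down the inclusive/exclusive reading of ``between $t_1$ and $t_2$'' in Definition~\ref{def:distL}. The rest is bookkeeping.
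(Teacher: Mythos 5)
Your proposal is correct and follows essentially the same route as the paper's proof: the paper likewise observes that the window $[p_2-L,p_2]$ of $r_2$ together with the all-zero record half-tape is a translate of the corresponding region of $r_1$, that by the definition of $L$ the computation between the two records never leaves that region, and hence that the behaviour repeats shifted forever. Your locality-and-shift lemma and the explicit induction on $R_n$ are simply the rigorous filling-in of the paper's more informal ``the machine will reproduce the same behavior translated at position $p_2$'' step.
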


\begin{proof}
  Let's suppose that the record-breaking configurations are on the right-hand side of the tape. By the hypotheses, we know the machine is in the same state in $r_1$ and $r_2$ and that the content of the tape at distance $L$ to the left of $p_1$ in $r_1$ is the same as the content of the tape at distance $L$ to the left of $p_2$ in $r_2$. Note that the content of the tape to the right of $p_1$ and $p_2$ is the same: all-0 since they are record positions. Furthermore, by Definition~\ref{def:distL}, we know that distance $L$ is the maximum distance that $\mathcal{M}$ can travel to the left of $p_1$ between times $t_1$ and $t_2$. Hence that after $r_2$, since it will read the same tape content the machine will reproduce the same behavior as it did after $r_1$ but translated at position $p_2$: after $t_2 - t_1$ steps, there will be a record-breaking configuration $r_3$ such that the distance between record-breaking configurations $r_2$ and $r_3$ is also $L$ (Definition~\ref{def:distL}). Hence the machine will keep breaking records to the right forever and will not halt. Analogous proof for records that are broken to the left.
\end{proof}

\begin{theorem}\label{th:translated-cyclers}
  Let $\mathcal{M}$ be a Turing machine and $t$ a time limit. The conditions of Lemma~\ref{lem:translated-cyclers} are met before time $t$ if and only if {\sc decider-translated-cyclers}($\mathcal{M}$,$t$) outputs \texttt{true} (Algorithm~\ref{alg:translated-cyclers}).
\end{theorem}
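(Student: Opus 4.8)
The plan is to prove both directions of the equivalence by showing that the auxiliary routines of Algorithms~\ref{alg:translated-cyclers} and~\ref{alg:translated-cyclers-aux} transcribe, essentially line for line, the bookkeeping demanded by Definition~\ref{def:distL} and the hypotheses of Lemma~\ref{lem:translated-cyclers}. Fix the run $c_0 \vdash c_1 \vdash \cdots$ of $\mathcal{M}$ from the all-$0$ tape. First I would establish the loop invariant of the main procedure: entering the iteration with \texttt{currTime} $= k$ we have \texttt{currConfiguration} $= c_k$, the two lists \texttt{recordBreakingConfigurations} hold (copies of) all record-breaking configurations of time $< k$ on the right, respectively the left, in increasing time order, and --- after the line that updates \texttt{lastTimeVisited} --- every cell $p$ occurring in \texttt{currConfiguration.tape} stores, as its \texttt{lastTimeVisited} field, the value $\max\{\,k' \le k : \text{the head is at } p \text{ at step } k'\,\}$, cell $0$ being initialised to $0$; this maximum is well defined for every touched cell because the head moves by one position per step, so the set of touched cells is always an interval. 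I would also dispatch the halting case up front: if $\mathcal{M}$ halts before time $t$ the decider returns \texttt{false}, and by the contrapositive of Lemma~\ref{lem:translated-cyclers} its hypotheses cannot be met before time $t$ either, so the equivalence is trivial; hence from now on assume $\mathcal{M}$ does not halt before $t$.

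The core step is to check that {\sc compute-distance-L}, called on $r_2$ (current record, at time $t_2$, position $p_2$), $r_1$ (older record, at time $t_1 < t_2$, position $p_1$) and the record side, returns exactly the quantity $L$ of Definition~\ref{def:distL}. By the invariant, \texttt{olderRecordTime} $= t_1$ (cell $p_1$ was fresh when $r_1$ broke its record) and \texttt{currRecordTime} $= t_2$; moreover every \texttt{lastTimeVisited} entry of $r_2$'s tape is $\le t_2$, so a cell $p$ not beating record $p_1$ was visited by $\mathcal{M}$ at some step in $[t_1, t_2]$ if and only if its \texttt{lastTimeVisited} field lies in $[t_1, t_2]$. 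The two \texttt{continue} guards remove precisely the cells that beat record $p_1$ (those with $p > p_1$ on the right, $p < p_1$ on the left), so the value accumulated in \texttt{distanceL} is $\max\{|p_1 - p|\}$ over the admissible visited cells $p$, i.e.\ $L$. Given this, {\sc aux-check-records} reads \texttt{currExtremePos} $= p_2$ and \texttt{olderExtremePos} $= p_1$ from {\sc get-extreme-position} (a record position is the extreme touched cell of its snapshot), and its inner loop compares \texttt{value} at $p_2 + \text{offset}$ with \texttt{value} at $p_1 + \text{offset}$ for every offset toward the interior with $|\text{offset}| \le L$; since untouched cells have value $0$, a missing dictionary entry behaves like a $0$ cell and no access is ill-defined, so this loop leaves \texttt{isSameLocalTape} equal to \texttt{true} exactly when the content of the tape in $r_1$ at distance $L$ of $p_1$ equals the content of the tape in $r_2$ at distance $L$ of $p_2$. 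Combined with the explicit same-state test at the top of {\sc aux-check-records}, a pair $(r_1, r_2)$ causes that routine to return \texttt{true} precisely when it witnesses the hypotheses of Lemma~\ref{lem:translated-cyclers}.

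Finally I would assemble the equivalence from the invariant. For the ``only if'' direction: if {\sc decider-translated-cyclers}($\mathcal{M}$,\,$t$) returns \texttt{true}, it did so inside a call to {\sc aux-check-records} on some record $r_2$ at a time $t_2 < t$ that found an older, same-state record $r_1$ with \texttt{isSameLocalTape}, and by the previous paragraph $(r_1, r_2)$ satisfies the hypotheses of Lemma~\ref{lem:translated-cyclers} before time $t$. For the ``if'' direction: if some witnessing pair $(r_1, r_2)$ exists with $t_1 < t_2 < t$, then, since $\mathcal{M}$ does not halt before $t$, the main loop reaches iteration $t_2$; by the invariant $r_1$ has already been appended to the relevant list (as $t_1 < t_2$) and $r_2$ is passed to {\sc aux-check-records} before being appended itself, so the pair is inspected and the routine returns \texttt{true} --- unless the decider had already returned \texttt{true} for an earlier pair, which is equally acceptable. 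The main obstacle, and essentially the only non-mechanical point, is justifying the \texttt{lastTimeVisited} invariant together with the equivalence ``last visit time in $[t_1, t_2]$'' $\Leftrightarrow$ ``visited somewhere in $[t_1, t_2]$'': this rests on all stored last-visit times being $\le t_2$ and on the interval structure of the touched cells, and it is exactly what forces {\sc compute-distance-L} to agree with Definition~\ref{def:distL}. I would also note the boundary convention: ``before time $t$'' means records occurring at steps strictly below $t$, matching the loop guard \texttt{currTime} $<$ \texttt{timeLimit}.
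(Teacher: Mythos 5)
Your proposal is correct and follows essentially the same route as the paper's own proof: a direct verification that the main loop, {\sc aux-check-records}, and {\sc compute-distance-L} implement exactly the record-tracking, same-state check, distance-$L$ computation of Definition~\ref{def:distL}, and local-tape comparison required by Lemma~\ref{lem:translated-cyclers}. Your version is simply more rigorous, making explicit the loop invariant, the \texttt{lastTimeVisited} correctness argument, and the halting-case dispatch that the paper leaves implicit.
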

\begin{proof}
  The algorithm consists of a main function {\sc decider-translated-cyclers} (Algorithm~\ref{alg:translated-cyclers}) and two auxiliary functions {\sc compute-distance-L} and {\sc aux-check-records} (Algorithm~\ref{alg:translated-cyclers-aux}).

  The main loop of {\sc decider-translated-cyclers} (Algorithm~\ref{alg:translated-cyclers} l.18) simulates the machine with the particularity that (a) it keeps track of the last time it visited each memory cell (l.20) and (b) it keeps track of all record-breaking configurations that are met (l.21) before reaching time limit $t$. When a record-breaking configuration is found, it is compared to all the previous record-breaking configurations on the same side in seek of the conditions of Lemma~\ref{lem:translated-cyclers}. This is done by auxiliary routine {\sc aux-check-records} (Algorithm~\ref{alg:translated-cyclers-aux}).

  Auxiliary routine {\sc aux-check-records} (Algorithm~\ref{alg:translated-cyclers-aux}, l.14) loops over all older record-breaking configurations on the same side as the current one (l.15), and only examines older configurations that are in the same state as the current one (l.16). It computes distance $L$ (Definition~\ref{def:distL}) between the older and the current record-breaking configuration (l.18). This computation is done by auxiliary routine {\sc compute-distance-L}.

  Auxiliary routine {\sc compute-distance-L} (Algorithm~\ref{alg:translated-cyclers-aux}, l.1) uses the ``pebbles'' that were left on the tape to give the last time a memory cell was seen (field \texttt{lastTimeVisited}) in order to compute the farthest position from the old record position that was visited before meeting the new record position (l.10). Note that we discard intermediate positions that beat the old record position (l.7-8) as we know that the part of the tape after the record position in the old record-breaking configuration is all-0, same as the part of the tape after current record position in the current record-breaking position (part of the tape to the right of the red-circled green cell in Figure~\ref{fig:translated-cyclers}).

  Thanks to the computation of {\sc compute-distance-L} the routine {\sc aux-check-records} is able to check whether the tape content at distance $L$ of the record-breaking position in both record-holding configurations is the same or not (Algorithm~\ref{alg:translated-cyclers-aux}, l.23). The routine returns \texttt{true} if they are the same and the function {\sc decider-translated-cyclers} will return \texttt{true} as well in cascade (Algorithm~\ref{alg:translated-cyclers} l.24). That scenario is reached if and only if the algorithm has found two record-breaking configurations on the same side that satisfy the conditions of Lemma~\ref{lem:translated-cyclers}, which is what we wanted.
\end{proof}

\begin{corollary}
  Let $\mathcal{M}$ be a Turing machine and $t \in \mathbb{N}$ a time limit. If {\sc decider-translated-cyclers}($\mathcal{M}$,$t$) returns \texttt{true} then the behavior of $\mathcal{M}$ from all-0 tape has been decided: $\mathcal{M}$ does not halt.
\end{corollary}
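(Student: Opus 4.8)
The plan is to derive this immediately by chaining the two correctness results already established for the translated-cyclers decider, exactly as the analogous corollary for Cyclers followed from Theorem~\ref{th:cyclers}. First I would invoke Theorem~\ref{th:translated-cyclers}: since {\sc decider-translated-cyclers}($\mathcal{M}$,$t$) returned \texttt{true}, the equivalence in that theorem tells us that the conditions of Lemma~\ref{lem:translated-cyclers} are met before time $t$. Concretely, there exist two configurations $r_1$ and $r_2$ that break a record in the same state and on the same side of the tape, at times $t_1 < t_2$, with respective record positions $p_1$ and $p_2$, such that for $L$ the distance between $r_1$ and $r_2$ (Definition~\ref{def:distL}) the content of the tape in $r_1$ at distance $L$ of $p_1$ coincides with the content of the tape in $r_2$ at distance $L$ of $p_2$.

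Next I would apply Lemma~\ref{lem:translated-cyclers} directly to these $r_1, r_2$: its conclusion is precisely that $\mathcal{M}$ never halts. Since $r_1$ and $r_2$ are reached from the all-0 initial configuration $c_0$ (the decider simulates $\mathcal{M}$ from $c_0$), this statement about $\mathcal{M}$ is exactly the claim that the behavior of $\mathcal{M}$ from the all-0 tape has been decided as non-halting, which is what we want.

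I do not anticipate any real obstacle here: the corollary is a pure consequence of results already proved, and the only thing to be slightly careful about is matching the hypotheses of Lemma~\ref{lem:translated-cyclers} (two record-breaking configurations, same state, same side, matching local tape content at distance $L$) with the "conditions of Lemma~\ref{lem:translated-cyclers}" as phrased in Theorem~\ref{th:translated-cyclers} — but these are by construction the same conditions, so the chaining is immediate.
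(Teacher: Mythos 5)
Your proposal is correct and follows exactly the paper's argument, which simply combines Theorem~\ref{th:translated-cyclers} (the \texttt{true} output implies the conditions of Lemma~\ref{lem:translated-cyclers} are met) with Lemma~\ref{lem:translated-cyclers} (those conditions imply $\mathcal{M}$ never halts). Your version just spells out the chaining in more detail than the paper's one-line proof.
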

\begin{proof}
  Immediate by combining Lemma~\ref{lem:translated-cyclers} and Theorem~\ref{th:translated-cyclers}.
\end{proof}

\subsection{Implementation}

The decider was coded in \texttt{golang} and is accessible at this link: \url{https://github.com/bbchallenge/bbchallenge-deciders/tree/main/decider-translated-cyclers}.



\newpage
\section{Backward Reasoning}\label{sec:backward-reasoning}

\begin{figure}
  \centering
  \begin{subfigure}[m]{0.45\textwidth}
    \centering
    \includegraphics[width=\textwidth]{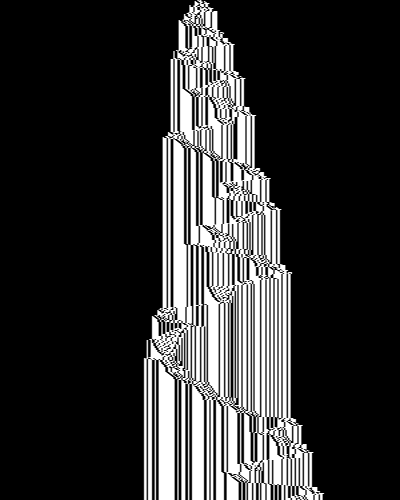}
    \caption{10,000-step space-time diagram of bbchallenge's machine \#55,897,188. \url{https://bbchallenge.org/55897188}}
    \label{fig:y equals x}
  \end{subfigure}
  \hfill
  \begin{subfigure}[m]{0.45\textwidth}
    \centering
    \begin{tabular}{lll}
                            & 0                       & 1   \\
      \textcolor{colorA}{A} & 1R\textcolor{colorB}{B} & 0LD \\
      \textcolor{colorB}{B} & 1L\textcolor{colorC}{C} & 0RE \\
      \textcolor{colorC}{C} & - - -                   & 1LD \\
      D                     & 1LA                     & 1LD \\
      E                     & 1RA                     & 0RA
    \end{tabular}

    \caption{Transition table of machine \#55,897,188.}

  \end{subfigure}

  \begin{subfigure}[m]{1\textwidth}
    \vspace{5ex}
    \centering
    \includegraphics[width=0.9\textwidth]{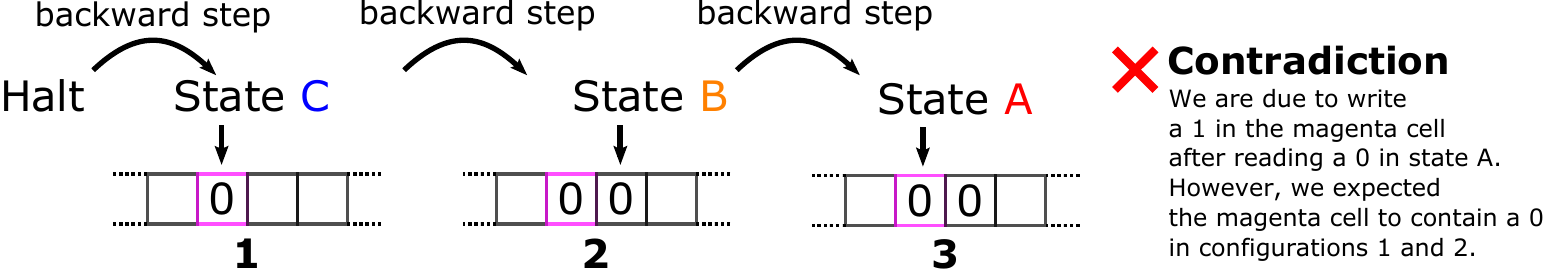}

    \caption{Contradiction reached after 3 backward steps: machine \#55,897,188 cannot reach its halting configuration hence it does not halt.}

  \end{subfigure}

  \caption{Applying backward reasoning on bbchallenge's machine \#55,897,188. (a) 10,000-step space-time diagram of machine \#55,897,188. The \textit{forward} behavior of the machine looks very complex. (b) Transition table. (c) We are able to deduce that the machine will never halt thanks to only 3 backward reasoning steps: because a contradiction is met, it is impossible to reach the halting configuration in more than 3 steps -- and, by (a), the machine did not halt in 10,000 steps starting from all-0 tape.}
  \label{fig:backward-reasoning}
\end{figure}

Backward reasoning, as described in \cite{Marxen_1990}, takes a different approach than what has been done with deciders in Sections~\ref{sec:cyclers} and \ref{sec:translated-cyclers}. Indeed, instead of trying to recognise a particular kind of machine's behavior, the idea of backward reasoning is to show that, independently of the machine's behavior, the halting configurations are not reachable. In order to do so, the decider simulates the machine \textit{backwards} from halting configurations until it reaches some obvious contradiction.

Figure~\ref{fig:backward-reasoning} illustrates this idea on bbchallenge's machine \#55,897,188. From the space-time diagram, the \textit{forward} behavior of the machine from all-0 tape looks to be extremely complex, Figure~\ref{fig:backward-reasoning}a. However, by reconstructing the sequence of transitions that would lead to the halting configuration (reading a 0 in state \textcolor{colorC}{C}), we reach a contradiction in only 3 steps, Figure~\ref{fig:backward-reasoning}c. Indeed, the only way to reach state \textcolor{colorC}{C} is to come from the right in state \textcolor{colorB}{B} where we read a 0. The only way to reach state \textcolor{colorB}{B} is to come from the left in state \textcolor{colorA}{A} where we read a 0. However, the transition table (Figure~\ref{fig:backward-reasoning}b) is instructing us to write a 1 in that case, which is not consistent with the 0 that we assumed was at this position in order for the machine to halt.

Backward reasoning in the case of Figure~\ref{fig:backward-reasoning} was particularly simple because there was only one possible previous configuration for each backward step -- i.e. there is only one transition that can reach state \textcolor{colorC}{C} and same for state \textcolor{colorB}{B}. In general, this is not the case and the structure created by backward reasoning is a tree of configurations instead of just a chain. If all the leaves of a backward reasoning tree of depth $D$ reach a contradiction, we know that if the machine runs for $D$ steps from all-0 tape then the machine cannot reach a halting configuration and thus does not halt.

\subsection{Pseudocode}

\begin{algorithm}
  \caption{{\sc decider-backward-reasoning}}\label{alg:backward-reasoning}

  \begin{algorithmic}[1]
    \State{\textbf{const int} RIGHT, LEFT = 0, 1}
    \State \textbf{struct} Transition \{
    \State \tabi\textbf{State} state
    \State \tabi\textbf{int} read, write, move
    \State \}
    \State \textbf{struct} Configuration \{
    \State \tabi\textbf{State} state
    \State \tabi\textbf{int} headPosition
    \State \tabi\textbf{int $\boldsymbol{\to}$ int} tape
    \State \tabi\textbf{int} depth
    \State \}

    \State

    \Procedure{\textbf{Configuration} {\sc apply-transition-backwards}}{\textbf{Configuration} conf,\textbf{Transition} t}
    \State \textbf{int} reversedHeadMoveOffset = (t.move == RIGHT) ? -1 : 1
    \State \textbf{int} previousPosition = conf.headPosition+reversedHeadMoveOffset
    \If{previousPosition \textbf{in} conf.tape \textbf{and} conf.tape[previousPosition] != t.write}
    \State \textbf{return} \textbf{nil} // Backward contradiction spotted
    \EndIf
    \State \textbf{Configuration} previousConf = \{.state = t.state, .headPosition = previousPosition, .tape = conf.tape, .depth = conf.depth + 1\}
    \State previousConf.tape[previousPosition] = t.read
    \State \textbf{return} previousConf
    \EndProcedure
    \State
    \Procedure{\textbf{bool} {\sc decider-backward-reasoning}}{\textbf{TM} machine,\textbf{int} maxDepth}

    \State \textbf{Stack$\boldsymbol{<}$Configuration$\boldsymbol{>}$} configurationStack
    \For{\textbf{Pair$\boldsymbol{<}$State, int$\boldsymbol{>}$} (state,read) \textbf{in} {\sc get-undefined-transitions}(machine)}
    \State \textbf{Configuration} haltingConfiguration = \{.state = state, .headPosition = 0, .tape = \{0: read\}, .depth = 0\}
    \State configurationStack.\textbf{push}(haltingConfiguration)
    \EndFor
    \While{!configurationStack.\textbf{empty}()}
    \State \textbf{Configuration} currConf = configurationStack.\textbf{pop}()
    \If{currConf.depth $>$ maxDepth} \textbf{return} false \EndIf

    \For{\textbf{Transition} transition \textbf{in} {\sc get-transitions-reaching-state}(machine,currConf.state)}
    \State \textbf{Configuration} previousConf = {\sc apply-transition-backwards}(currConf, transition)
    \State // If no contradiction
    \If{previousConf != \textbf{nil}}
    \State configurationStack.\textbf{push}(previousConf)
    \EndIf
    \EndFor
    \EndWhile

    \State \textbf{return} true
    \EndProcedure

  \end{algorithmic}
\end{algorithm}

We assume that we are given routine {\sc get-undefined-transitions}(machine) which returns the list of (state,readSymbol) pairs of all the undefined transitions in the machine's transition table, for instance [(\textcolor{colorC}{C},0)] for the machine of Figure~\ref{fig:backward-reasoning}b. We also assume that we are given routine {\sc get-transitions-reaching-state}(machine,targetState) which returns the list of all machine's transitions that go to the specified target state, for instance [(\textcolor{colorA}{A},1,0LD),(\textcolor{colorC}{C},1,1LD),(D,1,1LD)] for target state D in the machine of Figure~\ref{fig:backward-reasoning}b. These two routines contain very minimal logic as they only lookup in the description of the machine for the required information.

\subsection{Correctness}

\begin{theorem}\label{th:backward-reasoning}
  Let $\mathcal{M}$ be a Turing machine and $D\in\mathbb{N}$.
  Then, {\sc decider-backward-reasoning}($\mathcal{M}$,$D$) returns \texttt{true} if and only if no undefined transition of $\mathcal{M}$ can be reached in more than $D$ steps.
\end{theorem}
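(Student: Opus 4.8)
The plan is to introduce a notion of \emph{partial configuration} — a state, a head position, and a tape specified on only a finite set of positions — and to match the tree of partial configurations explored by {\sc decider-backward-reasoning} against the genuine forward computations of $\mathcal{M}$ that end in a halt. Say that an ordinary configuration $c$ of $\mathcal{M}$ \emph{refines} a partial configuration $P$ if, after translating $c$'s tape so that the two heads coincide, $c$ has the same state and head position as $P$ and $c$'s tape agrees with $P$'s tape wherever the latter is defined. With this vocabulary, the partial configurations that the decider initially pushes are exactly, one per undefined transition $(q,a)$ of $\mathcal{M}$, the partial configuration ``state $q$, head at $0$, cell $0$ holds $a$'', and an ordinary configuration refines it precisely when it is a halting configuration reading $a$ in state $q$. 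Let $\mathcal{T}$ denote the tree whose nodes are the partial configurations reachable from one of these initial ones by repeated calls to {\sc apply-transition-backwards} that never return \textbf{nil}; the \texttt{depth} field of a node is its distance from the corresponding root.

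First I would isolate a single ``one backward step'' lemma about {\sc apply-transition-backwards}. Let $t$ be the transition of $\mathcal{M}$ that goes to state $q$ from state $p$, reading $a$, writing $b$, moving in direction $m$, and let $P$ be a partial configuration in state $q$. Then {\sc apply-transition-backwards}$(P,t)$ returns a non-\textbf{nil} partial configuration $P'$ exactly when $P$ leaves undefined, or assigns the value $b$ to, the cell the head moved out of; and in that case (i) every ordinary configuration refining $P'$ makes one forward step, using transition $t$, to an ordinary configuration refining $P$, and (ii) conversely, if $e'\vdash e$ is a forward step of $\mathcal{M}$ using transition $t$ and $e$ refines $P$, then the call does not return \textbf{nil} and $e'$ refines $P'$. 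Checking this is just tracking the head-move reversal (\texttt{reversedHeadMoveOffset}), the single cell that {\sc apply-transition-backwards} rewrites, and its contradiction test against an honest forward step; the translation convention is reconciled by noting that {\sc apply-transition-backwards} places $P'$'s head exactly where $e'$'s head sits.

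From this lemma I would derive the central equivalence, for every $d\in\mathbb{N}$:
\[
  \mathcal{T}\text{ has a node at depth }d
  \iff
  \exists\, c,c'\text{ with } c\vdash^{d}c'\text{ and }c'\vdash\bot .
\]
For ``$\Rightarrow$'': a depth-$d$ node is witnessed by a path of non-contradicting backward steps from some root; choosing any ordinary $c$ that refines the depth-$d$ node (fill unspecified cells with $0$) and applying (i) along the path yields $c\vdash^{d}c'$ where $c'$ refines the root, so $c'\vdash\bot$. For ``$\Leftarrow$'': given $c\vdash^{d}c'$ with $c'\vdash\bot$, translate so $c'$ has its head at $0$; then $c'$ refines the initial node $P_0$ for the undefined transition it exhibits, and applying (ii) up the computation $c = e_d\vdash e_{d-1}\vdash\cdots\vdash e_0 = c'$ — at each step using the transition actually taken, which reaches the right state — produces a path $P_0\to\cdots\to P_d$ in $\mathcal{T}$ with $P_d$ at depth $d$. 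It remains to read off the decider's output: a short analysis of its depth-first loop shows that it returns \texttt{false} precisely when a partial configuration of \texttt{depth} $>D$ is popped, that this happens exactly when some depth-$D$ node of $\mathcal{T}$ is popped and has a backward image, and that otherwise the loop pops exactly the nodes of $\mathcal{T}$ of depth $\le D$ and halts with \texttt{true} (termination being guaranteed since each node has at most $2\cdot(\#\text{states})$ backward images and the explored depth is bounded). Hence {\sc decider-backward-reasoning}$(\mathcal{M},D)$ returns \texttt{true} iff $\mathcal{T}$ has no node of depth $>D$; since $\mathcal{T}$ is connected upward, a node at depth $>D$ exists iff one at depth $D+1$ exists, and since a halt in $s>D$ steps truncates to one in exactly $D+1$ steps, the displayed equivalence turns this into ``no undefined transition of $\mathcal{M}$ can be reached in more than $D$ steps'', as required. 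The degenerate cases $D=0$ and ``$\mathcal{M}$ has no undefined transition'' are immediate.

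The step I expect to be the main obstacle is the ``one backward step'' lemma: getting the direction of the head move, the rewritten tape cell, the meaning of {\sc apply-transition-backwards}'s consistency check, and the head-position/translation bookkeeping all to line up with a genuine forward step on totally-defined tapes, in both directions simultaneously. Once that correspondence is nailed down, the induction giving the central equivalence and the routine inspection of the depth-first search fall out directly.
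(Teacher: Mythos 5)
Your proposal is correct and follows the same approach as the paper's proof, which analyses the backward-simulation tree maintained by the DFS and identifies a contradiction-free backward path of length $d$ with a forward halting computation of length $d$. The paper's own argument is an informal sketch of exactly this correspondence; your refinement relation and the bidirectional ``one backward step'' lemma supply the formal content it leaves implicit, and your reading of the depth check and of ``reached in more than $D$ steps'' matches the intended one.
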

\begin{proof}
  The tree of backward configurations is maintained in a DFS fashion through a stack (Algorithm~\ref{alg:backward-reasoning}, l.23). Initially, the stack is filled with the configurations where only one tape cell is defined and state is set such that the corresponding transition is undefined (i.e. the machine halts after that step), l.24-26.

  Then, the main loop runs until either (a) the stack is empty or (b) one leaf exceeded the maximum allowed depth, l.27 and l.29. Note that running the algorithm with increased maximum depth increases its chances to contradict all branches of the backward simulation tree. At each step of the loop, we remove the current configuration from the stack and we try to apply all the transitions that lead to this configuration backwards by calling routine {\sc apply-transition-backwards}(configuration, transition).

  The only case where it is not possible to apply a transition backwards, i.e. the case where a contradiction is reached, is when the tape symbol at the position where the transition comes from (i.e. to the right if transition movement is left and vice-versa) is defined but is not equal to the write instruction of the transition. Indeed, that means that the future (i.e. previous backward steps) is not consistent with the current transition's write instruction. This logic is in l.16. Otherwise, we can construct the previous configuration (i.e. next backward step) and augment depth by 1. We then stack this configuration in the main routine (l.34).

  The algorithm returns \texttt{true} if and only if the stack ever becomes empty which means that all leaves of the backward simulation tree of depth $D$ have reached a contradiction and thus, no undefined transition of the machine is reachable in more than $D$ steps.
\end{proof}

\begin{corollary}
  Let $\mathcal{M}$ be a Turing machine and $D\in\mathbb{N}$. If {\sc decider-backward-reasoning}($\mathcal{M}$,$D$) returns \texttt{true} and machine $\mathcal{M}$ can run $D$ steps from all-0 tape without halting then the behavior of $\mathcal{M}$ from all-0 tape has been decided: $\mathcal{M}$ does not halt.
\end{corollary}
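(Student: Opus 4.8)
The plan is to combine Theorem~\ref{th:backward-reasoning} with the hypothesis that $\mathcal{M}$ survives $D$ steps, splitting the question ``when could $\mathcal{M}$ possibly halt?'' into the two complementary ranges ``within the first $D$ steps'' and ``after more than $D$ steps'', and ruling out each.

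First I would recall from Section~\ref{sec:conventions} that $\mathcal{M}$ halting from the all-0 tape means exactly that, during the run from $c_0 = 0^\infty\; [0]_A\; 0^\infty$, some reached configuration triggers an undefined transition (i.e.\ some $c$ with $c_0 \vdash^* c$ and $c \vdash \bot$). I would also note that the conclusion of Theorem~\ref{th:backward-reasoning} is universally quantified over tape contents --- the backward search is seeded with the halting configurations in which only the head cell is specified, and a contradiction on every branch at depth $\le D$ rules out \emph{any} run of length exceeding $D$ that ends in an undefined transition. Hence it applies in particular to the run started from the all-0 tape: if {\sc decider-backward-reasoning}$(\mathcal{M},D)$ returns \texttt{true}, then no undefined transition of $\mathcal{M}$ is reached more than $D$ steps into the run from $c_0$.

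Then I would argue by contradiction, assuming that $\mathcal{M}$ halts from the all-0 tape, say after $N$ steps. If $N \le D$, this contradicts the hypothesis that $\mathcal{M}$ can perform $D$ steps from the all-0 tape without halting. If $N > D$, then an undefined transition of $\mathcal{M}$ is reached in more than $D$ steps, contradicting Theorem~\ref{th:backward-reasoning} (applicable since the decider returned \texttt{true}). Either way we reach a contradiction, so $\mathcal{M}$ never halts from the all-0 tape; its behavior is thereby decided as \texttt{NONHALT}.

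There is no genuine obstacle here --- the statement is ``immediate'' in the same sense as the corollaries in Sections~\ref{sec:cyclers} and \ref{sec:translated-cyclers}. The only points deserving a word of care are (i) the precise meaning of ``reached in more than $D$ steps'' versus ``runs $D$ steps without halting'', so that the cases $N \le D$ and $N > D$ genuinely partition all possibilities with no off-by-one gap, and (ii) making explicit that the conclusion of Theorem~\ref{th:backward-reasoning}, being independent of the initial tape, specializes to the all-0 tape used for the Busy Beaver problem.
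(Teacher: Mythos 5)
Your proof is correct and follows essentially the same route as the paper's: invoke Theorem~\ref{th:backward-reasoning} to rule out halting after more than $D$ steps, and use the hypothesis that $\mathcal{M}$ survives $D$ steps to rule out halting within the first $D$ steps. Your explicit case split on the halting time $N$ and the remark about the theorem's conclusion being independent of the initial tape are just slightly more careful phrasings of the paper's own (very brief) argument.
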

\begin{proof}
  By Theorem~\ref{th:backward-reasoning} we know that no undefined transition of $\mathcal{M}$ can be reached in more than $D$ steps. Hence, if machine $\mathcal{M}$ can run $D$ steps from all-0 tape without halting, it will be able to run the next $D+1^{\text{th}}$ step. From there, the machine cannot halt or it would contradict the fact that halting trajectories have at most $D$ steps. Hence, $\mathcal{M}$ does not halt from all-0 tape.
\end{proof}

\subsection{Implementation}\label{sec:backward-reasoning-results}

The decider was coded in \texttt{golang} and is accessible at this link: \url{https://github.com/bbchallenge/bbchallenge-deciders/blob/main/decider-backward-reasoning}. Note that collaborative work allowed to find a bug in the initial algorithm that was implemented\footnote{Thanks to collaborators \url{https://github.com/atticuscull} and \url{https://github.com/modderme123}.}.



\newcommand{\HS}{Halting Segment\xspace}

\newpage
\section{Halting Segment}\label{sec:halting-segment}

\paragraph{Acknowledgement.} Sincere thanks to bbchallenge's contributor Iijil who initially presented this method and the first implementation\footnote{See: \url{https://discuss.bbchallenge.org/t/decider-halting-segment}.}. Other contributors have contributed to this method by producing alternative implementations (see Section~\ref{sec:hs-implem}) or discussing and writing the formal proof presented here: Mateusz Naściszewski (Mateon1), Nathan Fenner, Tony Guilfoyle, Justin Blanchard, Tristan Stérin (cosmo), and, Pavel Kropitz (uni).

\subsection{Overview}

\begin{figure}[h!]
  \centering
  \includegraphics[width=1\textwidth]{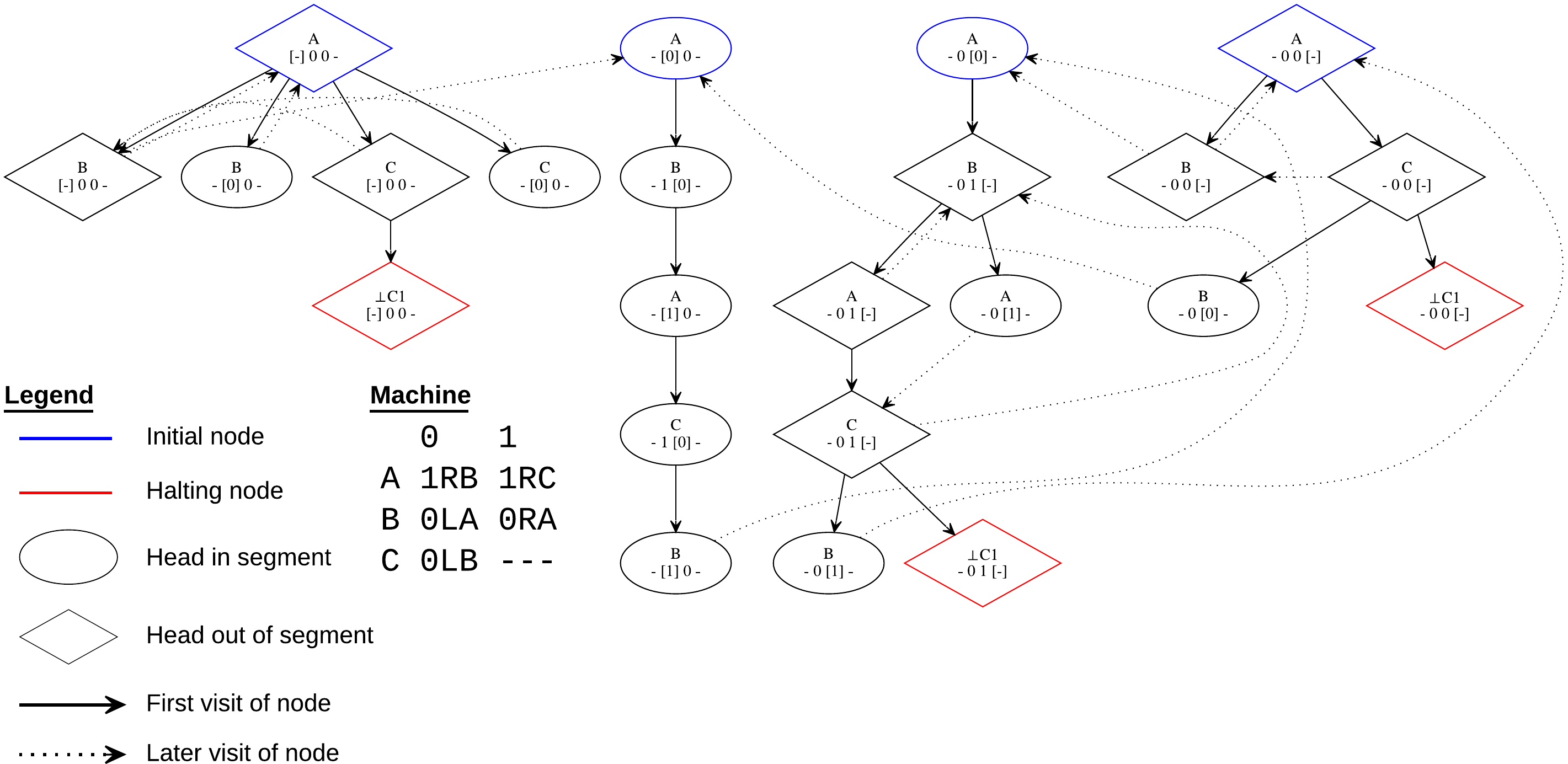}
  \caption{\HS graph for the 3-state machine \url{https://bbchallenge.org/1RB1RC_0LA0RA_0LB---} and segment size 2, see Definition~\ref{def:hs-graph}. Nodes of this graph correspond to \textit{segment configurations} (Definition~\ref{def:hs-conf}), i.e. configurations of the machine on a finite segment (here, of size 2). In a node, the machine's head position is represented between brackets and the symbol \texttt{-} represents the outside of the segment (either to the left or to the right). Nodes where the machine's head is within the segment (circle shape) only one have child corresponding to the next step of the machine and nodes where the head is outside of the segment (diamond shape) may have multiple children corresponding to all the theoretically possible ways (deduced from the machine's transition table) that the machine can enter the segment back or continue to stay out of it. In order to improve readability, edges that revisit a node are dotted. The machine presented here does not halt because the halting nodes (red outline) that are reachable from the initial nodes (blue outline) do not cover all the positions of the segment (there is no halting node for any of the two internal positions of the segment), by contraposition of Theorem~\ref{th:hs}. }\label{fig:hs}
\end{figure}

The idea of the \HS technique is to simulate a Turing machine on a finite segment of tape. When the machine leaves the segment in a certain state, we consider all the possible ways that it can re-enter the segment or stay out of it, based on the machine's transition table. For a given machine and segment size, this method naturally gives rise to a graph, the \HS graph (formally defined in Definition~\ref{def:hs-graph}).

Figure~\ref{fig:hs} gives the \HS graph of the 3-state machine\footnote{We chose a 3-state machine in order to have a graph of reasonable size.} \url{https://bbchallenge.org/1RB1RC_0LA0RA_0LB---} for segment size 2. Let's describe this graph in more details:

\begin{itemize}
  \item Nodes correspond to \textit{segment configurations} (Definition~\ref{def:hs-conf}), i.e. the state in which the machine is together with the content of the segment and the position of the head in the segment (or outside of it). For instance, the leftmost node in blue and diamond shape in Figure~\ref{fig:hs} is \texttt{A [-] 0 0 -} which means that the machine is in state A, that the segment currently contains \texttt{0 0} and that the machine's head is currently outside of the segment, to the left of it.

  \item Initial nodes (blue outline) correspond to all segment configurations that match the initial configuration of the machine (all-0 tape and state A). There are $n+2$ initial nodes with $n$ the size of the segment. Halting nodes (red outline) give the segment configurations where the machine has halted together with the halting transition that was used. For instance, in Figure~\ref{fig:hs}, the leftmost halting node \texttt{$\bot$ C1 [-] 0 0 -} signifies that the machine has halted ($\bot$), using halting transition \texttt{C1} (reading a 1 in state C), to the left of the segment which contains \texttt{0 0}.

  \item Nodes with a circle shape correspond to segment configurations where the tape's head is \textbf{inside} the segment. Such nodes only have one child, which corresponds to the next machine configuration.

  \item Nodes with a diamond shape correspond to segment configurations where the head is \textbf{outside} the segment. These nodes may have several children corresponding to all the ways that the head, in the current state, can stay outside of the segment or enter it back. For instance, the leftmost node in blue and diamond shape in Figure~\ref{fig:hs}, \texttt{A [-] 0 0 -}, has 4 children: \texttt{B [-] 0 0 -} and \texttt{B - [0] 0 -} and \texttt{C [-] 0 0 -} and \texttt{C - [0] 0 -}. This is because the transitions of the machine in state A are \texttt{1RB} and \texttt{1RC} and that the move \texttt{R} allows either to enter the segment back or to continue being out of it (if the head is far from the segment's left frontier). Note that the write symbol \texttt{1} of the transitions are ignored since we do not keep track of the tape outside of the segment.

  \item In order to increase the readability of Figure~\ref{fig:hs}, only one entrant edge for each node has been drawn with a solid line, corresponding to the first visit of that node in the particular order that the graph was visited. Later visits were drawn with a dotted line.
\end{itemize}

What is special about the \HS graph? We show in Theorem~\ref{th:hs} that if a machine halts, then, for all segment sizes, its \HS graph contains a set of halting nodes (red outline), for the same halting transition, that covers the entire segment and its outside, i.e. such that there is at least one such node per segment's position and outside of it (left and right). By contraposition, if there is no set of covering halting nodes for a halting transition, the machine does not halt. In Figure~\ref{fig:hs}, we deduce that machine \url{https://bbchallenge.org/1RB1RC_0LA0RA_0LB---} does not halt since the halting nodes of halting transition \texttt{C1} are \texttt{$\bot$ C1 [-] 0 0 -}, \texttt{$\bot$ C1 - 0 1 [-]} and \texttt{$\bot$ C1 - 0 0 [-]} which does not cover the entire segment (both internal segment positions are not covered).

Interestingly, \HS is the method that was used by Newcomb Greenleaf to prove\footnote{\url{http://turbotm.de/~heiner/BB/TM4-proof.txt}} that Marxen \& Buntrock's chaotic machine\footnote{\url{https://bbchallenge.org/76708232}} \cite{Marxen_1990} does not halt.

\subsection{Formal proof}

\begin{definition}[Segment configurations]\label{def:hs-conf}
  Let the \textit{segment size} be $n \in \N$. A \textit{segment configuration} is a 3-tuple: (i) state, (ii) $w \in \{0,1\}^n$ which is the segment's content and (iii) the position of the machine's head is an integer $p \in \llbracket -1, n \rrbracket$ where positions $\llbracket 0,n \llbracket$ correspond to the interior of the segment, position $-1$ for outside to the left and $n$ for outside to the right. \textit{Halting segment configurations} are segment configurations where the state is $\bot$ and with an additional information (iv) of which halting transition of the machine has been used to halt.
\end{definition}

\begin{example}
  In Figure~\ref{fig:hs} we have $n=2$ and, the leftmost node in blue and diamond shape corresponds to segment configuration \texttt{A [-] 0 0 -} (i) state A, (ii) $w = \texttt{00}$ and (iii) $p = -1$. The rightmost node in red and diamond shape corresponds to halting segment configuration $\bot$ \texttt{C1 - 0 0 [-]} (i) state $\bot$, (ii) $w = \texttt{00}$, (iii) $p = 2$ and (iv) halting transition \texttt{C1}.
\end{example}

\begin{definition}[\HS graph]\label{def:hs-graph}
  Let $\mathcal{M}$ be a Turing machine and $n \in \N$ a segment size. The \HS graph for $\mathcal{M}$ and $n$ is a directed graph where the nodes are segment configurations (Definition~\ref{def:hs-conf}). The graph is generated from $n+2$ \textit{initial nodes} (blue outline in Figure~\ref{fig:hs}) that are all in state A with segment content $0^n$ ($n$ consecutive 0s) but where the head is at each of the $n+2$ possible positions, one per each initial node, see the blue nodes in Figure~\ref{fig:hs} for an example.
  Then, edges that go out of a given node $r$ are defined as follows:
  \begin{itemize}
    \item If $r$'s head position is inside the segment (circle nodes in Figure~\ref{fig:hs}), then $r$ only has one child corresponding to the next simulation step for machine $\mathcal{M}$. For instance, in Figure~\ref{fig:hs}, node \texttt{A - [0] 0 -} has a unique child \texttt{B - 1 [0] -}, following machine's transition \texttt{A0} which is \texttt{1RB}. That child can be a halting segment configuration if the transition to take is halting.
    \item If $r$'s head position is outside the segment (diamond nodes in Figure~\ref{fig:hs}), then, we consider each transition of $r$'s state. There are three cases:
          \begin{enumerate}
            \item If the transition is halting, we add a child to $r$ which is the halting segment configuration node corresponding to this transition. For instance, in Figure~\ref{fig:hs}, \texttt{C [-] 0 0 -} has halting child $\bot$ \texttt{C1 [-] 0 0 -} corresponding to halting transition \texttt{C1}.
            \item If the transition's movement goes further away from the segment (e.g. we are to the left of the segment, $p=-1$, and the transition movement is \texttt{L}), we add one child for this transition that only differs from its parent in the new state that it moves into. For instance, in Figure~\ref{fig:hs}, \texttt{A - 0 0 [-]} has child \texttt{B - 0 0 [-]} for transition \texttt{A0} which is \texttt{1RB}.
            \item If the transition's movement goes in the direction of the segment (e.g. we are to the left of the segment, $p=-1$, and the transition movement is \texttt{R}), we add two children for this transition. One corresponding to the case where that movement is made at the border of the segment and allows to re-enter the segment and the other one corresponding to the case where that movement is made farther away from the border and does not re-enter yet. For instance, in Figure~\ref{fig:hs}, node \texttt{A [-] 0 0 -} has children \texttt{B [-] 0 0 -} and \texttt{B - [0] 0 -} for transition \texttt{A0} which is \texttt{1RB}.
          \end{enumerate}
  \end{itemize}

  Halting nodes are nodes corresponding to halting segment configurations (red outline in Figure~\ref{fig:hs}).

\end{definition}

\begin{theorem}[\HS]\label{th:hs}
  Let $\mathcal{M}$ be a Turing machine and $n \in \N$ a segment size. Let $G$ be the \HS graph for $\mathcal{M}$ and $n$ (Definition~\ref{def:hs-graph}). If $\mathcal{M}$ halts in halting transition $T$ when started from state A and all-0 tape, then $G$ must contain a halting node for transition $T$ for each of the $n+2$ possible values of the head's position $p \in \llbracket -1, n \rrbracket$.
\end{theorem}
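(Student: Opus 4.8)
The plan is to exploit the real, infinite-tape halting computation of $\mathcal{M}$ and \emph{project} it onto a suitably positioned length-$n$ window of the tape. Concretely, let $c_0 \vdash^s c_s \vdash \bot$ be the halting run, so that $\mathcal{M}$ halts using transition $T$ with its head at some absolute cell $h_s$. Fix a target head position $p_0 \in \llbracket -1, n \rrbracket$ and choose the window $W = \llbracket a, a+n \llbracket$ of $n$ consecutive cells with $a := h_s - p_0$ if $p_0$ is an interior position, $a := h_s + 1$ if $p_0 = -1$, and $a := h_s - n$ if $p_0 = n$; in every case the clamping map $\pi(h) := \min(n, \max(-1, h-a))$ satisfies $\pi(h_s) = p_0$. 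Define the projection $P(c)$ of a configuration $c$ to be the segment configuration whose state is that of $c$, whose content is the restriction of $c$'s tape to $W$, and whose head position is $\pi$ applied to $c$'s head; the projection $P(\bot)$ of the halting configuration additionally records the halting transition $T$.

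The core of the argument is to show that $P(c_0), P(c_1), \dots, P(c_s), P(\bot)$ is a path in $G$ from an initial node to a halting node for $T$ with head position $p_0$. First, $P(c_0)$ has state A, content $0^n$, and head position $\pi(0) \in \llbracket -1, n \rrbracket$, hence is one of the $n+2$ initial nodes. Next, for each $i < s$ one shows $P(c_i) \to P(c_{i+1})$ is an edge of $G$, by cases on whether the real head $h_{c_i}$ lies inside $W$. If it lies inside, $P(c_i)$ is a circle node: the scanned symbol equals the corresponding entry of the projected content, so the unique real step and the unique graph edge update state, in-window content, and head position identically (the head position possibly becoming clamped to $-1$ or $n$). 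If it lies outside, $P(c_i)$ is a diamond node: the real transition taken is one of the transitions of that state considered by Definition~\ref{def:hs-graph}, and the real step either moves further away from $W$ (matching the single state-only child) or moves toward $W$, in which case it re-enters $W$ exactly when the head sat on the border cell — and that dichotomy is precisely the two children the definition provides, so one of them equals $P(c_{i+1})$. Throughout, a write performed while the head is outside $W$ leaves the restriction of the tape to $W$ unchanged, which keeps the projected content consistent. Finally, the halting step $c_s \vdash \bot$ uses the undefined transition $T$: if $h_{c_s}$ is inside $W$ this yields the halting child of the circle node $P(c_s)$, and if it is outside it yields the halting child added in case~(1) for a diamond node; either way $P(\bot)$ is a child of $P(c_s)$, is a halting node for transition $T$, and has head position $\pi(h_s) = p_0$. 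Since $p_0 \in \llbracket -1, n \rrbracket$ was arbitrary, $G$ contains such a node for each of the $n+2$ positions.

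I expect the main obstacle to be bookkeeping rather than conceptual: pinning down the clamping map and the choice of $a$ exactly at the two boundary targets $p_0 = -1$ and $p_0 = n$, and carefully matching the ``re-enters / does not re-enter the segment'' branching of Definition~\ref{def:hs-graph} against the real one-step move when the head is on or adjacent to the border of $W$. One should also note that in this model halting occurs on an undefined transition \emph{without} writing, so the content recorded in $P(\bot)$ is simply $c_s$'s tape restricted to $W$, with no modification. No genuinely difficult step is anticipated; the whole proof reduces to verifying that $P$ carries each real computation step to a legal edge of $G$.
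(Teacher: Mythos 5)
Your proof is correct and takes essentially the same approach as the paper: both position the segment window so that the halting head projects to the target position $p_0$ and then show that the projection of the actual computation trace is a path in $G$ from an initial node to the desired halting node. The only difference is presentational — the paper describes building the chain backward from the halting configuration while you build it forward — and your version is in fact considerably more explicit about the clamping map and the inside/outside case analysis than the paper's.
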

\begin{proof}
  Consider the trace of configurations of $\mathcal{M}$ (full configurations, not segment configurations, as defined in Section~\ref{sec:conventions}) from the initial configuration (state A and all-0 tape) to the halting configuration which happens using halting transition $T$. Starting from the halting configuration, construct the halting segment configuration (with segment size $n$) for $T$ using any position $p \in \llbracket -1, n \rrbracket$ in the segment and fill the segment's content from what is written on the tape around the head in the halting configuration of $\mathcal{M}$. From there, work your way up to the initial configuration: at each step construct the associated segment configuration. This sequence of segment configurations constitute a set of nodes in the \HS graph $G$ of $\mathcal{M}$ for segment size $n$ such that each node points to the next one. At the top of that chain there will be a node matching the initial configuration: state A, all-0 segment and head position somewhere in $\llbracket -1, n \rrbracket$, i.e. an initial node.

  Hence we have shown that all halting nodes for transition $T$ for each of the $n+2$ possible values of the head's position $p \in \llbracket -1, n \rrbracket$ are reachable from some initial node(s).
\end{proof}

\begin{remark}
  By contraposition of Theorem~\ref{th:hs}, if, for all halting transitions $T$ there is at least one halting node (red outline in Figure~\ref{fig:hs}) for some position in the segment that is not reachable from one of the initial node (blue outline in Figure~\ref{fig:hs}) then the machine does not halt. That way, in Figure~\ref{fig:hs}, we can conclude that machine \url{https://bbchallenge.org/1RB1RC_0LA0RA_0LB---} does not halt since the halting nodes of halting transition \texttt{C1} are \texttt{$\bot$ C1 [-] 0 0 -}, \texttt{$\bot$ C1 - 0 1 [-]} and \texttt{$\bot$ C1 - 0 0 [-]} which does not cover the entire segment (both internal segment positions are not covered).

  Note that if all of the segment's positions are covered for some halting transition, we cannot conclude that the machine does not halt, but it does not mean that the machine necessarily halts either.
\end{remark}

\begin{remark}
  Some non-halting machines cannot be decided using \HS for any segment size. Such a machine is for instance \url{https://bbchallenge.org/1RB---_1LC0RB_1LB1LA}.
\end{remark}

\subsection{Implementations}\label{sec:hs-implem}

Here are the implementations of the method that were realised. Almost all of them construct the \HS graph from the halting nodes (backward implementation) instead of from the initial nodes (forward implementation):

\begin{enumerate}
  \item Iijil's who originally proposed the method, \url{https://github.com/bbchallenge/bbchallenge-deciders/tree/main/decider-halting-segment}, and was independently reproduced by Tristan Stérin (cosmo) \url{https://github.com/bbchallenge/bbchallenge-deciders/tree/main/decider-halting-segment-reproduction} (backward implementation)
  \item Mateusz Naściszewski (Mateon1)'s: \url{https://gist.github.com/mateon1/7f5e10169abbb50d1537165c6e71733b} (forward implementation)
  \item Nathan Fenner's which has the interesting feature of being written in a language for formal verification (Dafny): \url{https://github.com/Nathan-Fenner/bbchallenge-dafny-deciders/blob/main/halting-segment.dfy} (backward implementation)
  \item Tony Guilfoyle: \url{https://github.com/TonyGuil/bbchallenge/tree/main/HaltingSegments} (backward implementation)
\end{enumerate}

Iijil's implementation (1) is a bit different from what is presented in this document because the \HS graph is constructed backward (i.e. from the halting nodes instead of from the initial nodes). Also, the method adopts a lazy strategy consisting in testing only odd segment sizes (up to size $n_\text{max}$) and placing the head's position at the center of the tape. Finally, the information of state is not stored for nodes where the head is outside the segment. These implementation choices make the implementation a bit weaker than what was presented here.



\newpage
\section{Finite Automata Reduction (FAR)}\label{sec:finite-automata-reduction}

\paragraph{Acknowledgement.} Sincere thanks to bbchallenge's contributor Justin Blanchard who initially presented this method and the first implementation\footnote{See: \url{https://discuss.bbchallenge.org/t/decider-finite-automata-reduction/}.}. Others have contributed to this method by producing alternative implementations (see Section~\ref{sec:far-implem}) or discussing and writing the formal proof presented here: Tony Guilfoyle, Tristan Stérin (cosmo), Nathan Fenner, Mateusz Naściszewski (Mateon1), Konrad Deka, Iijil, Shawn Ligocki. 

\subsection{Method overview}\label{far-overview}

The core idea of the method presented in this section is to find, for a given Turing machine, a regular language that contains the set of the machine's eventually-halting configurations (with finitely many 1s). Then, provided that the all-0 configuration is not in the regular language, we know that the machine does not halt.

A dual idea has been explored by other authors under the name Closed Tape Languages (CTL) as described in S. Ligocki's blog \cite{ShawnCTL} and credited to H. Marxen in collaboration with J. Buntrock.
The CTL technique for proving a Turing machine doesn't halt is to exhibit a set $C$ of configurations such that:

\begin{enumerate}
  \item $C$ contains the all-0 initial configuration\footnote{
          Criteria 1--2 give a strict definition; in \cite{ShawnCTL}, $C$ only needs to contain some descendant of the initial configuration and some descendant of the successor to each $c\in C$.
          In that case, the set of ancestor configurations to those in $C$ meets the strict definition.
        }
  \item $C$ is \textit{closed} under transitions: for any $c \in C$, the configuration one step later belongs to $C$\addtocounter{footnote}{-1}\addtocounter{Hfootnote}{-1}\footnotemark
  \item $C$ does not contain any halting configuration
\end{enumerate}

If such a set $C$ exists then the machine does not halt.
The CTL approach has proven to be practical and powerful when we search for $C$ among regular languages \cite{ShawnCTL} \cite{BruteforceCTL}.

Here, we develop an original \textit{co-CTL} technique\footnote{By co-CTL we mean a set whose complement is a CTL, characterized by closure criteria inverse---or equivalently converse---to 1--3. In other words, a co-CTL contains all halting configurations, any configuration which can \emph{precede} any member configuration by one TM transition, and not the initial configuration.}, based on the algebraic description of Nondeterministic Finite Automata (NFA), for finding a regular language which contains a machine's eventually halting configurations (in general a superset).

One important aspect of the technique is that, given a Turing machine and its constructed NFA---if found---it is a computationally simple task to verify that the NFA's language does indeed recognise all eventually-halting configurations (with finitely many 1s) of the machine.

\usetikzlibrary {automata, positioning}

\begin{figure}
  \begin{center}
    \begin{tikzpicture}[scale=0.2]
      \tikzstyle{every node}+=[inner sep=0pt]
      \draw [black] (32.4,-20.7) circle (3);
      \draw (32.4,-20.7) node {$X$};
      \draw [black] (19.9,-32.9) circle (3);
      \draw (19.9,-32.9) node {$Y$};
      \draw [black] (43.9,-32.9) circle (3);
      \draw (43.9,-32.9) node {$Z$};
      \draw [black] (43.9,-32.9) circle (2.4);
      \draw [black] (12.2,-32.9) -- (16.9,-32.9);
      \fill [black] (16.9,-32.9) -- (16.1,-32.4) -- (16.1,-33.4);
      \draw [black] (32.4,-13.6) -- (32.4,-17.7);
      \fill [black] (32.4,-17.7) -- (32.9,-16.9) -- (31.9,-16.9);
      \draw [black] (31.917,-23.654) arc (-16.01031:-75.38142:12.771);
      \fill [black] (31.92,-23.65) -- (31.22,-24.28) -- (32.18,-24.56);
      \draw (29.58,-29.75) node [below] {$\alpha$};
      \draw [black] (20.894,-35.718) arc (47.15723:-240.84277:2.25);
      \draw (18.84,-40.19) node [below] {$\alpha$};
      \fill [black] (18.27,-35.4) -- (17.36,-35.65) -- (18.09,-36.33);
      \draw [black] (20.283,-29.932) arc (165.64786:102.96041:12.277);
      \fill [black] (29.42,-21.01) -- (28.53,-20.7) -- (28.76,-21.68);
      \draw (22.58,-23.71) node [above] {$\beta$};
      \draw [black] (34.46,-22.88) -- (41.84,-30.72);
      \fill [black] (41.84,-30.72) -- (41.66,-29.79) -- (40.93,-30.48);
      \draw (37.62,-28.27) node [left] {$\beta$};
      \draw [black] (34.454,-18.529) arc (164.32314:-123.67686:2.25);
      \draw (39.49,-17.55) node [right] {$\beta$};
      \fill [black] (35.37,-21.01) -- (36.01,-21.71) -- (36.28,-20.74);
      \draw [black] (45.807,-35.201) arc (67.3925:-220.6075:2.25);
      \draw (46.66,-40.1) node [below] {$\alpha,\beta$};
      \fill [black] (43.23,-35.81) -- (42.47,-36.36) -- (43.39,-36.74);
      \draw [black] (41.27,-34.338) arc (-65.18014:-114.81986:22.321);
      \fill [black] (22.53,-34.34) -- (23.05,-35.13) -- (23.47,-34.22);
      \draw (31.9,-36.9) node [below] {$\alpha$};
    \end{tikzpicture}
  \end{center}
  \caption{Example Nondeterministic Finite Automaton (NFA) with 3 states X, Y and Z, alphabet $\mathcal{A} = \{\alpha,\beta\}$, initial states X and Y, and accepting state Z. The linear-algebra representation of this NFA is given in Example~\ref{ex:nfa}. Example accepted words are: $\beta$, $\alpha\beta$, $\alpha\alpha\beta\beta$. Example rejected words are: $\alpha$, $\alpha\alpha$, $\alpha\alpha\alpha$.}\label{fig:example_nfa}
\end{figure}
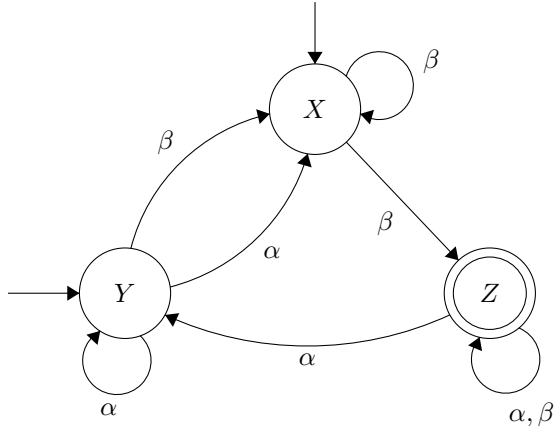

\begin{figure}

  \centering
  \includegraphics*[width=\textwidth]{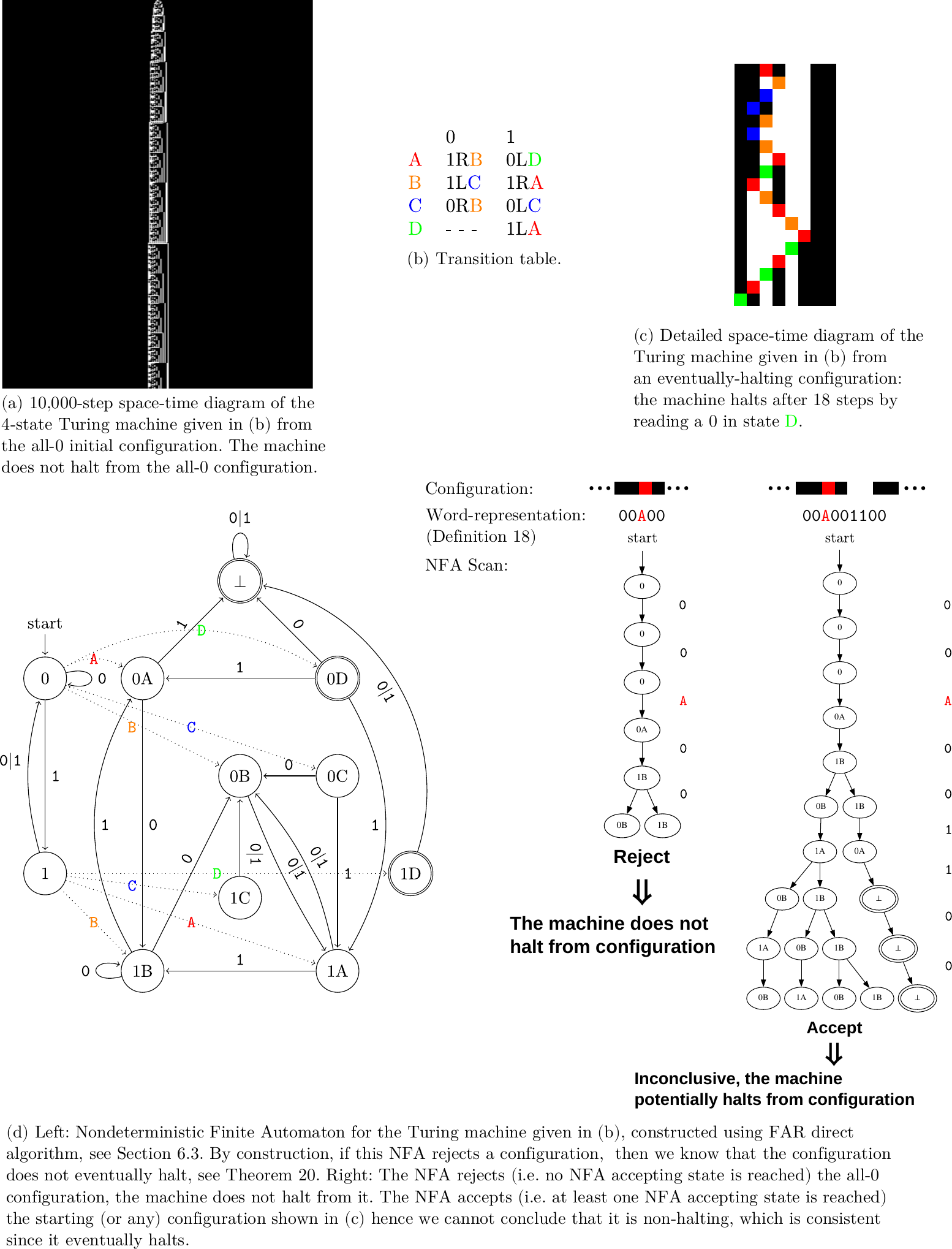}

  \caption{\small A Nondeterministic Finite Automaton, used as follows to decide a 4-state Turing machine\protect\footnotemark:
    (a) Space-time diagram showing the first few descendants of the all-0 configuration for the machine. The machine actually runs forever from the all-0 configuration, adopting a ``counting'' behavior.
    (b) Transition table for the TM.
    (c) The TM halts in 18 steps from a different configuration; these 18 rows depict \emph{eventually-halting} configurations.
    (d) A Nondeterministic Finite Automaton, constructed using the direct FAR algorithm (Section~\ref{far-algo-direct}), that recognises at least all eventually-halting configurations (with finitely many 1s) of the machine. Inputting the top row of (c), encoded as word \texttt{00A001100} (see Definition~\ref{def:wordc}), the NFA transitions by reading each successive symbol of the input, through NFA states: 0, 0, 0A, 1B, \{0B, 1B\}, \{1A, 0A\}, \{0B, 1B, $\bot$\}, \{1A, 0B, 1B, $\bot$\} and finally \{0B, 1A, 0B, 1B, $\bot$\}. Since NFA state $\bot$ is accepting (doubly circled in (d)), the NFA accepts \texttt{00A001100}, classifying this configuration as potentially eventually-halting. However, the NFA does not accept input \texttt{A0}, which corresponds to the all-0 configuration, hence this TM cannot halt from there.}

  \thisfloatpagestyle{empty}
  \label{fig:finite-automata-reduction}
\end{figure}

\footnotetext{\url{https://bbchallenge.org/1RB0LD_1LC1RA_0RB0LC_---1LA}, the machine exhibits a non-trivial counting behavior.}

\subsection{Potential-halt-recognizing automata}
\newcommand{\M}{\mathcal{M}}
\newcommand{\T}{^T}
\newcommand{\row}{\text{row}}
\label{far-defs-recognizer}
For a given Turing machine, we aim at building an NFA that recognises at least all its eventually-halting configurations (with finitely many 1s). In other words, the NFA recognises configurations that \textit{potentially} eventually halt, which is why we call the NFA \textit{potential-halt-recognizing}. Importantly, if the NFA does not recognise the all-0 initial configuration then we know that the Turing machine does not halt from it. Figure~\ref{fig:finite-automata-reduction} gives a potential-halt-recognizing NFA for a 4-state Turing machine, constructed using the results of Section~\ref{far-algo-direct}.

Let's first recall how Nondeterministic Finite Automata (\textbf{NFA}) can be described using linear algebra. Let $\mathbf{2}$ denote the Boolean semiring\footnote{A semiring is a ring without the requirement to have additive inverses, e.g. the set of natural numbers $\N=\{0,1,2\dots\}$ is a semiring.} $\{0,1\}$ with operations $+$ and $\cdot$ respectively implemented by $\operatorname{OR}$ and $\operatorname{AND}$ \cite{CUNINGHAMEGREEN1991251}.
Let $\M_{m,n}$ be the set of matrices with $m$ rows and $n$ columns over $\mathbf{2}$. We may define a Nondeterministic Finite Automaton (NFA) with $n$ states and alphabet $\mathcal{A}$ as a tuple $(q_0, \{T_\gamma\}_{\gamma \in \mathcal{A}}, a)$ where $q_0 \in \M_{1,n}$ and $a \in \M_{1,n}$ respectively represent the initial states and accepting states of the NFA. (i.e. if the $i^\text{th}$ state of the NFA is an initial state then the $i^\text{th}$ entry of $q_0$ is set to 1 and the rest are 0, and the $i^\text{th}$ entry of $a$ is set to 1 if and only if the $i^\text{th}$ state of the NFA is accepting), and where transitions are matrices $T_\gamma\in \M_{n,n}$ for each $\gamma\in\mathcal{A}$ (i.e. the entry $(i,j)$ of matrix $T_\gamma$ is set to 1 iff the NFA transitions from state $i$ to state $j$ when reading $\gamma$). Furthermore, for any word $u=\gamma_1\dots\gamma_\ell \in \mathcal{A}^*$, let $T_u = T_{\gamma_1} T_{\gamma_2} \dots T_{\gamma_\ell}$ be the state transformation resulting from reading word $u$ (Note: $T_\epsilon = I$). A word $u=\gamma_1\dots\gamma_\ell \in \mathcal{A}^*$ is accepted by the NFA iff there exists a path from an initial state to an accepting state that is labelled by the symbols of $u$, which algebraically translates to $q_0 T_u a\T = 1$ with $a\T \in \M_{n,1}$ the transposition of $a$.

\begin{example}\label{ex:nfa}
  The NFA depicted in Figure~\ref{fig:example_nfa}, with states X, Y, Z and alphabet $\mathcal{A}=\{\alpha,\beta\}$, is algebraically encoded as follows: $q_0 = (1,1,0)$, $a=(0,0,1)$, $T_\alpha=\begin{bmatrix}
      0 & 0 & 0 \\
      1 & 1 & 0 \\
      0 & 1 & 1
    \end{bmatrix}$ and $T_\beta= \begin{bmatrix}
      1 & 0 & 1 \\
      1 & 0 & 0 \\
      0 & 0 & 1
    \end{bmatrix}$. The reader can check that words $\beta$, $\alpha\beta$ and $\alpha\alpha\beta\beta$ are accepted, i.e. $q_0T_\beta a\T = 1$, $q_0T_\alpha T_\beta a\T = 1$ and $q_0T_\alpha T_\alpha T_\beta T_\beta a \T = 1$. But, words $\alpha$, $\alpha\alpha$ and $\alpha\alpha\alpha$ are rejected, i.e. $q_0T_\alpha a\T = 0$, $q_0T_\alpha T_\alpha a\T = 0$ and $q_0T_\alpha T_\alpha T_\alpha a \T = 0$.
\end{example}

Now, we describe how we transform Turing machine configurations that have finitely many 1s into finite words that will be read by our NFA. First recall that a Turing machine configuration is defined by the 3-tuple: (i) state in which the machine is (ii) position of the head (iii) content of the memory tape, see Section~\ref{sec:conventions}. Then, a word-representation of a configuration is defined by:

\begin{definition}[Word-representations of a configuration]\label{def:wordc}
  Let $c$ be a Turing machine configuration with finite support, i.e. there are finitely many 1s on the memory tape of the configuration. A word-representation of the configuration $c$ is a word $\hat{c}$ constructed by concatenating (from left to right) the symbols of any finite region of the tape that contains all the 1s, and adding the state (a letter between A and E in the case of 5-state TMs) just before the position of the head.
\end{definition}

\begin{example}
  A word-representation of the configuration on the top row of Figure~\ref{fig:finite-automata-reduction}(c), is $\hat{c} = \texttt{00A001100}$.
\end{example}

Note that two word-representations of the same configuration will only differ in the number of leading and trailing 0s that they have. Hence, if $\mathcal{L}$ is the regular language of the NFA that we wish to construct to recognise the eventually-halting configurations (with finitely many 1s) of a given TM, it is natural that we require the following:
\begin{align*}
  u \in \mathcal{L} \iff 0u \in \mathcal{L} &  & \text{(leading zeros ignored)}
  \\
  u \in \mathcal{L} \iff u0 \in \mathcal{L} &  & \text{(trailing zeros ignored)}
\end{align*}

These are implied by the following, generally stronger, conditions on the transition matrix $T_0 \in \M_{n,n}$:
\begin{align}
  q_0 T_0 & = q_0
  \label{far-cond-leading-0}
  \\
  T_0 a\T & = a\T
  \label{far-cond-trailing-0}
\end{align}

Note that Condition~\ref{far-cond-trailing-0}, $T_0 a\T = a\T$, means that for all accepting states of the NFA, reading a 0 is possible and leads to an accepting state. Indeed, $T_0 a\T$ describes the set of NFA states that reach the set of accepting states $a$ after reading a $0$.

Then, we want our NFA's language $\mathcal{L}$ to include all eventually-halting configurations (with finitely many 1s) of a given Turing machine $\mathcal{M}$.  Inductively, we want that:
\begin{align*}
  c\vdash\bot                                    & \implies \hat{c} \in \mathcal{L}  \\
  (c_1\vdash c_2)\land \hat{c}_2 \in \mathcal{L} & \implies\hat{c}_1 \in \mathcal{L}
\end{align*}

With $c, c_1, c_2$ configurations of the TM (with finite support) and $\hat{c}, \hat{c}_1, \hat{c}_2$ any of their finite word-representations, see Definition~\ref{def:wordc}. Let $f,t \in \{A,B,C,D,E\}$ denote TM states (the ``from'' and ``to'' states in a TM transition), and $r,w,b \in \{0,1\}$ denote bits (a bit ``read'', a bit ``written'', and just a bit), then the above conditions turn into:
\begin{align*}
  \forall u,z\in\{0, 1\}^*: \; ufrz \in \mathcal{L},\;                                                           & \text{if $(f,r) \to \bot$ is a halting transition of $\mathcal{M}$}
  \\
  \forall u,z\in\{0, 1\}^*,\,\forall b \in \{0, 1\}: utbwz \in \mathcal{L} \implies ubfrz \in \mathcal{L},\;     & \text{if $(f,r) \to (t,w,\text{left})$ is a transition of $\mathcal{M}$}
  \\
  \forall u,z\in\{0, 1\}^*,\,\forall b \in \{0, 1\}: u w t z \in \mathcal{L} \implies u f r z \in \mathcal{L},\; & \text{if $(f,r) \to (t,w,\text{right})$ is a transition of $\mathcal{M}$}
\end{align*}

Which algebraically becomes:
{\small
\begin{align*}
  \forall u,z\in\{0, 1\}^*: \; q_0 T_u T_f T_r T_z a\T = 1, \;                                                                           & \text{if $(f,r) \to \bot$ is a halting transition of $\mathcal{M}$}
  \\
  \forall u,z\in\{0, 1\}^*,\,\forall b \in \{0, 1\}: q_0 T_{u} T_t T_b T_w T_{z} a\T = 1 \implies q_0 T_{u} T_b T_f T_r T_{z} a\T = 1,\; & \text{if $(f,r) \to (t,w,\text{left})$ is a transition of $\mathcal{M}$}
  \\
  \forall u,z\in\{0, 1\}^*,\,\forall b \in \{0, 1\}: q_0 T_{u} T_w T_t T_{z} a\T = 1 \implies q_0 T_{u} T_f T_r T_{z} a\T = 1,\;         & \text{if $(f,r) \to (t,w,\text{right})$ is a transition of $\mathcal{M}$}
\end{align*}
}

These conditions are unwieldy. Let's seek stronger (thus still sufficient) conditions which are simpler:

\begin{itemize}

  \item For machine transitions going left/right, simply require $T_t T_b T_w\preceq T_b T_f T_r$ and $T_w T_t\preceq T_f T_r$, respectively with $\preceq$ the following relation on same-size matrices: $M\preceq M'$ if $M_{ij}\le M'_{ij}$ element-wise, that is, if the second matrix has at least the same 1-entries as the first matrix.

  \item To simplify the condition for halting machine transitions: define an \emph{accepted steady state-set} $s$ to be a row vector such that $sa\T = 1$, $s T_0\succeq s$, and $s T_1\succeq s$. Given such $s$, we have that: $\forall q\in\M_{1,n}\; q \succeq s\implies \forall z\in\{0, 1\}^*: qT_{z}a\T = 1$. Assuming that such $s$ exists we can simply require: $\forall u\in\{0, 1\}^*: q_0T_u T_f T_r \succeq s$ which is stronger than $\forall u,z\in\{0, 1\}^*: \; q_0 T_u T_f T_r T_z a\T = 1$ with $(f,r) \to \bot$ a halting transition.



\end{itemize}

Combining the above, we get our main result:

\begin{theorem}
  \label{far-main-theorem}
  Machine $\mathcal{M}$ doesn't halt from the initial all-0 configuration if there is an NFA $(q_0, \{T_\gamma\}, a)$ and row vector $s$ satisfying the below:
  \begin{align}
    \label{far-cond-first}
    q_0 T_0                                & = q_0
                                           &                     & \text{(leading zeros ignored)}
    \tag{\ref{far-cond-leading-0}}
    \\
    T_0a\T                                 & = a\T
                                           &                     & \text{(trailing zeros ignored)}
    \tag{\ref{far-cond-trailing-0}}
    \\
    sa\T                                   & = 1
                                           &                     & \text{($s$ is accepted)}
    \label{far-cond-ass-accepted}
    \\
    sT_0,sT_1                              & \succeq s
                                           &                     & \text{($s$ is a steady state)}
    \label{far-cond-ass-steady}
    \\
    \forall u\in\{0, 1\}^*: q_0T_u T_f T_r & \succeq s
                                           &                     & \text{if $(f,r) \to \bot$ is a halting transition of $\mathcal{M}$}
    \label{far-cond-halt}
    \\
    \forall b\in\{0, 1\}: T_b T_f T_r      & \succeq T_t T_b T_w
                                           &                     & \text{if $(f,r) \to (t,w,\text{left})$ is a transition of $\mathcal{M}$}
    \label{far-cond-left}
    \\
    T_f T_r                                & \succeq T_w T_t
                                           &                     & \text{if $(f,r) \to (t,w,\text{right})$ is a transition of $\mathcal{M}$}
    \label{far-cond-last}
    \\
    q_0 T_A a\T                            & = 0
                                           &                     & \text{(initial configuration rejected)}
    \label{far-cond-reject-start}
  \end{align}
\end{theorem}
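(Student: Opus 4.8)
The plan is to consider the regular language $\mathcal{L}=\{\,u : q_0 T_u a\T = 1\,\}$ recognised by the NFA, and to prove two facts: \textbf{(i)} every word-representation (Definition~\ref{def:wordc}) of every finite-support configuration of $\mathcal{M}$ that eventually halts lies in $\mathcal{L}$; and \textbf{(ii)} no word-representation of the all-$0$ configuration lies in $\mathcal{L}$. Granting these, if $\mathcal{M}$ halted from the all-$0$ tape then the all-$0$ configuration would be eventually halting, hence in $\mathcal{L}$ by (i), contradicting (ii); so $\mathcal{M}$ does not halt. Fact (ii) is immediate: the all-$0$ configuration has word-representations $\texttt{A}$, $\texttt{A}0$, $0\texttt{A}0$, \dots, and Condition~\eqref{far-cond-reject-start} says exactly $q_0 T_A a\T = 0$.

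First I would collect the routine ingredients. Since $\operatorname{OR}$ and $\operatorname{AND}$ are monotone, Boolean matrix multiplication is monotone for the entrywise order $\preceq$: $M\preceq M'$ and $N\preceq N'$ imply $MN\preceq M'N'$; also, in $\mathbf{2}$ any value $\succeq 1$ equals $1$. Conditions~\eqref{far-cond-leading-0} and~\eqref{far-cond-trailing-0} give $q_0 T_{0u}a\T=q_0T_ua\T$ and $q_0 T_{u0}a\T=q_0T_ua\T$, so membership in $\mathcal{L}$ is unaffected by adding or deleting leading and trailing $0$s; since two word-representations of the same configuration differ only by such $0$s, ``$c\in\mathcal{L}$'' is a well-defined property of configurations, and when forming a word-representation we may always widen the tape region with extra $0$s. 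Finally, from Conditions~\eqref{far-cond-ass-accepted} and~\eqref{far-cond-ass-steady} I would prove the \emph{steady-state lemma}: if a row vector $q$ satisfies $q\succeq s$ then $q T_z a\T=1$ for every $z\in\{0,1\}^*$ --- an induction on $|z|$, with base case $q a\T\succeq s a\T=1$ and induction step $q T_\gamma\succeq s T_\gamma\succeq s$.

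The core of the proof is fact (i), by induction on the number of steps $\mathcal{M}$ takes to halt starting from $c$. Base case $c\vdash\bot$: the head is in some state $f$ over a bit $r$ with $(f,r)$ a halting transition, so a word-representation of $c$ has the form $u\,f\,r\,z$ with $u,z\in\{0,1\}^*$; Condition~\eqref{far-cond-halt} gives $q_0 T_u T_f T_r\succeq s$, and the steady-state lemma yields $q_0 T_u T_f T_r T_z a\T=1$, i.e.\ $\hat{c}\in\mathcal{L}$. Inductive step: let $c\vdash c_1$ via a transition $(f,r)\to(t,w,\cdot)$, and split on its direction. Choosing a tape region wide enough to contain all $1$s of $c$ and the head positions of both $c$ and $c_1$ (padding with $0$s, as justified above), one reads off word-representations over that common region --- in the left case $\hat{c}=u\,b\,f\,r\,z$ and $\hat{c}_1=u\,t\,b\,w\,z$, where $b$ is the bit immediately left of the head of $c$, and in the right case $\hat{c}=u\,f\,r\,z$ and $\hat{c}_1=u\,w\,t\,z$. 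By the induction hypothesis $\hat{c}_1\in\mathcal{L}$, i.e.\ $q_0 T_u T_t T_b T_w T_z a\T = 1$ (resp.\ $q_0 T_u T_w T_t T_z a\T = 1$); applying Condition~\eqref{far-cond-left}, $T_b T_f T_r\succeq T_t T_b T_w$ (resp.\ Condition~\eqref{far-cond-last}, $T_f T_r\succeq T_w T_t$) together with monotonicity under left-multiplication by $q_0 T_u$ and right-multiplication by $T_z a\T$ gives $q_0 T_u T_b T_f T_r T_z a\T\succeq 1$ (resp.\ $q_0 T_u T_f T_r T_z a\T\succeq 1$), hence it equals $1$, i.e.\ $\hat{c}\in\mathcal{L}$.

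The step I expect to demand the most care is bureaucratic rather than mathematical: aligning the word-representations of two consecutive configurations over one common, suitably padded tape region, and checking that a left (resp.\ right) transition produces exactly the factorisations $u\,b\,f\,r\,z\mapsto u\,t\,b\,w\,z$ (resp.\ $u\,f\,r\,z\mapsto u\,w\,t\,z$) used above; this, and the reduction to configuration-level membership, is what makes the ``$\forall u,z$'' quantifiers in the hypotheses match arbitrary word-representations. The only genuinely mathematical ingredient, the steady-state lemma, is a short induction. Once this scaffolding is in place, the three families~\eqref{far-cond-halt}, \eqref{far-cond-left}, \eqref{far-cond-last} feed directly into the base case and the two cases of the inductive step, while~\eqref{far-cond-reject-start} closes the argument.
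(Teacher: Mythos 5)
Your proposal is correct and follows essentially the same route as the paper: the paper's own proof is a two-sentence summary deferring to the derivation immediately preceding the theorem (closure of the language under backward steps via the $\preceq$ conditions, the accepted-steady-state argument for halting transitions, and rejection of the initial configuration), and your write-up simply fills in that derivation rigorously with the induction on steps-to-halt, the steady-state lemma, and the monotonicity of Boolean matrix products. No gaps.
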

\begin{proof}
  Conditions \eqref{far-cond-leading-0}--\eqref{far-cond-last} ensure that the NFA's language includes at least all eventually halting configurations of $\mathcal{M}$. Condition~\eqref{far-cond-reject-start} ensures that the initial all-0 configuration of the machine is rejected, hence not eventually halting. Hence, if conditions \eqref{far-cond-leading-0}--\eqref{far-cond-reject-start} are satisfied, we can conclude that $\mathcal{M}$ does not halt from the initial all-0 configuration.
\end{proof}

\begin{remark}[Verification]\label{far-remark-verification}
  Theorem~\ref{far-main-theorem} has the nice property of being suited for the purpose of \textit{verification}: given a TM, an NFA and a vector $s$, the task of verifying that equations \eqref{far-cond-first}--\eqref{far-cond-reject-start} hold and thus that the TM does not halt, is computationally simple\footnote{Note that although equation~\eqref{far-cond-halt} has a $\forall$ quantifier, the set of NFA states reachable after reading an arbitrary $u \in \{0,1\}^*$ is computable, and we just have to consider one instance of equation~\eqref{far-cond-halt} replacing $q_0 T_u$ per such state.}. Verifiers have been implemented for Theorem~\ref{far-main-theorem}, see Section~\ref{sec:far-implem}.
\end{remark}

Now, we want to design an efficient search algorithm that will, for a given TM, try to find an NFA satisfying Theorem~\ref{far-main-theorem}. For that search to be feasible, we impose more structure on the NFA so that (a) the search space of NFAs is smaller (b) a subset of Conditions \eqref{far-cond-first}--\eqref{far-cond-last} is automatically satisfied by these NFAs.

\subsection{Search algorithm: direct FAR algorithm}
\label{far-algo-direct}

We design an efficient search algorithm for Theorem~\ref{far-main-theorem} that we call the \textit{direct FAR algorithm}. We start by adding more structure to our NFAs as follows:

\begin{enumerate}

  \item The NFA is constructed from two sub-NFAs: one NFA responsible for handling the left-hand side of the tape (i.e. before reading the tape-head state) and one NFA for handling the right-hand side of the tape (i.e. after reading the tape-head state).
  \item The sub-NFA for the left-hand side of the tape is a Deterministic Finite Automaton (DFA).
  \item Edges labelled by a tape-head state are only those that start in the left-hand side DFA and end in the right-hand side NFA. Furthermore, we require that no such two edges reach the same state in the right-hand side NFA. Hence, the right-hand side NFA has at least $5l$ states with $l$ the number of states in the left-hand side DFA.\label{pt:injective}
  \item In fact, we require that the right-hand side NFA has exactly $5l+1$ states with the extra state $\bot$ that we call the \textit{halt state}.

\end{enumerate}

\begin{example}
  The structure described above is followed by the NFA depicted in Figure~\ref{fig:finite-automata-reduction}(d)~Left. Note that, following above Point~\ref{pt:injective}, it is natural to name states in the right-hand side NFA by prepending left-hand side DFA states to the transitions' TM state letter, e.g. state 1C in Figure~\ref{fig:finite-automata-reduction} is reached from DFA state 1 after reading TM state letter C.
\end{example}

This structure might seem arbitrary but it has a very nice property that we demonstrate here: once the left-hand side DFA is chosen, there is at most one right-hand side NFA (minimal for $\succeq$) such that the overall NFA satisfies Theorem~\ref{far-main-theorem}.

Indeed, let's rewrite the above structural points algebraically:

\begin{enumerate}
  \item We write the state space of the NFA as the direct sum $\mathbf{2}^l \oplus \mathbf{2}^d$ with $l$ the number of states of the left-hand side DFA and $d=5l+1$ the number of states of the right-hand side NFA. Initial state is $\begin{bmatrix}q_0&0\end{bmatrix}$ with $q_0 \in \mathbf{2}^l$,
        transitions
        $T_b=\begin{bmatrix}L_b&0\\0&R_b\end{bmatrix}$ ($b\in\{0,1\}$) with $L_b \in \M_{l,l},\, R_b \in \M_{d,d}$ and
        $T_f=\begin{bmatrix}0&M_f\\0&0\end{bmatrix}$ ($f\in\{A,\ldots,E\}$) with $M_f \in \M_{l,d}$,
        and acceptance $\begin{bmatrix}0&a\end{bmatrix}$ with $a \in \M_{1,d}$.
  \item $(q_0,\{L_0, L_1\})$ comes from a DFA with transition function $\delta: [l] \times \{0,1\} \to [l]$ (with $[l]$ the set $\{0,\dots,l-1\}$) that ignores leading zeros, i.e. $\delta(0,0) = 0$. That ensures \eqref{far-cond-leading-0} of Theorem~\ref{far-main-theorem}.
  \item Row vectors of matrices $M_f$ (with $f\in\{A,\ldots,E\}$) are the standard basis row vectors $e_0,\, \dots,\, e_{5l-1} \in \M_{1,d}$ where basis vector $e_i$ has its $i^\text{th}$ entry set to 1 and the other entries set to 0.\label{pt:basis}
  \item The right-hand side NFA has \textit{halt state} $\bot$ and $e_{5l} = e_\bot$ as its corresponding basis row vector.

\end{enumerate}

For a given Turing machine, our direct FAR algorithm will enumerate left-hand side DFAs and for each, find an associated right-hand side NFA by solving Theorem~\ref{far-main-theorem} \eqref{far-cond-first}--\eqref{far-cond-last} for $R_0$, $R_1$, and $a$. If Condition \eqref{far-cond-reject-start} is also satisfied then, by Theorem~\ref{far-main-theorem}, the Turing machine is proven non-halting and we stop the search.

For a given left-hand side DFA with transition function $\delta$, the right-hand side NFA is constructed by rewriting Theorem~\ref{far-main-theorem} conditions~\eqref{far-cond-ass-steady}--\eqref{far-cond-last} in the following way, where we set the accepted steady state-set to $s=\begin{bmatrix}0&e_\bot\end{bmatrix}$. The algebra is helped by the general observation that for any $i$, the condition $\row_i(M) \succeq v$ with $\row_i(M)$ the $i^\text{th}$ row of matrix $M$ and $v$ some row vector, is equivalent to $M\succeq e_i\T v$ with $e_i$ the $i^\text{th}$ standard basis vector\footnote{This is why we asked that row vectors of matrices $M_f$ are standard basis vectors, Point~\ref{pt:basis} above.}.


\begin{align}
  R_r                                        & \succeq (e_\bot)\T e_\bot
                                             &                                               & \text{for }r\in\{0,1\}
  \tag{\ref{far-cond-ass-steady}'}
  \\
  \forall i\in[l]: R_r                       & \succeq \row_i(M_f)\T e_\bot
                                             &                                               & \text{if $(f,r) \to \bot$ is a halting transition of $\mathcal{M}$}
  \tag{\ref{far-cond-halt}'}
  \\
  \forall b \in\{0,1\}, \forall i\in[l]: R_r & \succeq
  \row_{\delta(i,b)}(M_f)\T \row_i(M_t)R_b R_w
                                             &                                               & \text{if $(f,r) \to (t,w,\text{left})$ is a transition of $\mathcal{M}$}
  \tag{\ref{far-cond-left}'}
  \\
  \forall i\in[l]: R_r                       & \succeq \row_i(M_f)\T \row_{\delta(i,w)}(M_t)
                                             &                                               & \text{if $(f,r) \to (t,w,\text{right})$ is a transition of $\mathcal{M}$}
  \tag{\ref{far-cond-last}'}
\end{align}

\begin{lemma}\label{lem:far-unique-min}
  There's a unique minimal solution (w.r.t $\preceq$) to the system  of inequalities (\ref{far-cond-ass-steady}')--(\ref{far-cond-last}') and an effective way to compute it: initialize $R_0$, $R_1$ to zero,
  then set entries to 1 as (\ref{far-cond-ass-steady}'), (\ref{far-cond-halt}') and (\ref{far-cond-last}') demand then iterate (\ref{far-cond-left}') until $R_0$ and $R_1$ stop changing.
\end{lemma}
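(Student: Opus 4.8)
The plan is to read the primed system as the search for post-fixed points of a monotone operator on a finite complete lattice. First I would collapse, for each $r\in\{0,1\}$, all the inequalities $R_r\succeq(\cdots)$ appearing in (\ref{far-cond-ass-steady}')--(\ref{far-cond-last}') into a single inequality $R_r\succeq G_r(R_0,R_1)$, where $G_r(R_0,R_1)$ is the entrywise $\operatorname{OR}$ (the join in $\M_{d,d}$) of the finitely many right-hand sides involved: the constant matrices $(e_\bot)\T e_\bot$, the $\row_i(M_f)\T e_\bot$ and the $\row_i(M_f)\T\row_{\delta(i,w)}(M_t)$, together with the $\row_i$-indexed products $\row_{\delta(i,b)}(M_f)\T\row_i(M_t)R_bR_w$ coming from (\ref{far-cond-left}'). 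The solution set of the primed system is then exactly the set of post-fixed points $(R_0,R_1)\succeq G(R_0,R_1)$ of $G=(G_0,G_1)$ acting on $\mathcal{L}:=\M_{d,d}\times\M_{d,d}$, ordered by the product of the two $\preceq$ orders; $\mathcal{L}$ is a finite complete lattice with bottom $(0,0)$ and entrywise joins.

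The crucial point is that $G$ is monotone for $\preceq$. Over the Boolean semiring, $\operatorname{OR}$ and $\operatorname{AND}$ are monotone, hence so are Boolean matrix addition and multiplication; since the matrices $M_f, M_t$ and the basis vectors $e_i, e_\bot$ are fixed, each summand defining $G_r$ is a monotone function of $(R_0,R_1)$ (either constant, or a fixed matrix times two of the unknowns), and a join of monotone maps is monotone. From monotonicity I would deduce the uniqueness-of-minimal part directly: if $(R_0,R_1)$ and $(R_0',R_1')$ are two solutions, their entrywise meet $X$ satisfies $X\preceq(R_0,R_1)$, so $G(X)\preceq G(R_0,R_1)\preceq(R_0,R_1)$, and likewise $G(X)\preceq(R_0',R_1')$, whence $G(X)\preceq X$; thus $X$ is again a solution. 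By finiteness of $\mathcal{L}$ the meet of \emph{all} solutions is therefore a solution, and it lies $\preceq$ below every solution, so it is the unique minimal (indeed least) solution. One could instead simply cite Knaster--Tarski.

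For the effective computation I would run $X^{(0)}=(0,0)$ and $X^{(k+1)}=G(X^{(k)})$. Since $(0,0)$ is the bottom element, $X^{(0)}\preceq X^{(1)}$, and monotonicity propagates this to an ascending chain, which stabilises after finitely many steps at some $X^{(K)}$ with $G(X^{(K)})=X^{(K)}$, hence a solution. A one-line induction shows $X^{(k)}\preceq Z$ for every solution $Z$ (base $(0,0)\preceq Z$; step $X^{(k)}\preceq Z\Rightarrow X^{(k+1)}=G(X^{(k)})\preceq G(Z)\preceq Z$), so $X^{(K)}$ is exactly the least solution found above. It then remains to match this synchronous iteration with the procedure stated in the lemma: the constant right-hand sides of (\ref{far-cond-ass-steady}'), (\ref{far-cond-halt}'), (\ref{far-cond-last}') contribute the same entries at every step, so ORing them in once at the start is equivalent; thereafter only (\ref{far-cond-left}') can add new $1$'s, and since $\operatorname{OR}$ is idempotent, commutative and associative, performing these updates in any order converges to the same fixed point (standard chaotic iteration), namely $X^{(K)}$. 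The argument is a routine monotone-fixed-point computation, so I expect no real obstacle; the points that need care are checking that \emph{every} right-hand side is genuinely monotone over $\mathbf{2}$ (no hidden complementation, in particular in (\ref{far-cond-left}')) and that reordering the asynchronous updates of the stated procedure does not change the limit.
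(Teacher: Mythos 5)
Your proposal is correct and follows essentially the same route as the paper's proof: both treat the primed system as a monotone fixed-point problem over the Boolean-matrix lattice, observe that (\ref{far-cond-ass-steady}'), (\ref{far-cond-halt}') and (\ref{far-cond-last}') contribute only constant lower bounds, and use monotonicity of Boolean matrix products to show the iteration of (\ref{far-cond-left}') yields an increasing, eventually stationary sequence of lower bounds whose fixed point is the least solution. Your write-up is merely more explicit about the lattice-theoretic framing (Knaster--Tarski, order-independence of the chaotic iteration), which the paper leaves implicit.
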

\begin{proof}

  First notice that (\ref{far-cond-ass-steady}'), (\ref{far-cond-halt}') and (\ref{far-cond-last}') have their right-hand side constant (with respect to $R$) hence they only amount to constant lower bounds for matrices $R_0$ and $R_1$. Then note that, given any lower bound $B_0\preceq R_0$ and $B_1\preceq R_1$ for true solutions of the system, we have  $\row_{\delta(i,b)}(M_f)\T \row_i(M_t)R_b R_w \succeq \row_{\delta(i,b)}(M_f)\T \row_i(M_t)B_b B_w$ by compatibility of $\succeq$ with the performed operations. Hence, iterating (\ref{far-cond-left}') produces an increasing, eventually stationary, sequence of lower bounds for $R_0$ and $R_1$ whose fixed point is solution to the system.
\end{proof}

Now that we have found $R_0$ and $R_1$ we need to find the set of accepting states $\begin{bmatrix}0&a\end{bmatrix}$ with $a \in \M_{1,d}$.
Conditions \eqref{far-cond-trailing-0}, \eqref{far-cond-ass-accepted} of Theorem \ref{far-main-theorem}   translate to:
\begin{align}
  R_0 a\T & = a\T
          &
  \tag{\ref{far-cond-trailing-0}'}
  \\
  a       & \succeq e_{\bot}
          &
  \tag{\ref{far-cond-ass-accepted}'}
\end{align}

Similarly, there is a unique minimal solution (w.r.t $\preceq$) to this system which is found by initially setting $a_0 = e_\bot$ then iterating $a_{k+1} = (R_0 a_{k}\T)\T$ until a fixed point is reached which gives the value of $a$. Indeed, from (\ref{far-cond-ass-steady}'), we see that the sequence $e_\bot\T \preceq  R_0 e_\bot\T\preceq R_0^2 e_\bot\T  \preceq\dots$ is increasing hence it reaches a fixed point, which satisfies (\ref{far-cond-trailing-0}') and (\ref{far-cond-ass-accepted}').

The last condition from Theorem~\ref{far-main-theorem} that we need to satisfy is \eqref{far-cond-reject-start} (rejection of the initial configuration), which translates to:
\begin{align}
  \row_0 (M_A) a\T & =0
                   & \tag{\ref{far-cond-reject-start}'}
\end{align}

By minimality, a solution of (\ref{far-cond-trailing-0}') and (\ref{far-cond-ass-accepted}') will satisfy (\ref{far-cond-reject-start}') if and only if the minimal solution exhibited above does. Hence, we check (\ref{far-cond-reject-start}') for the minimal $a$ that we constructed and there are two cases:

\begin{itemize}
  \item If $a$ satisfies (\ref{far-cond-reject-start}') then we have found an NFA satisfying Theorem~\ref{far-main-theorem} and we can conclude that the Turing machine does not halt from the all-0 initial configuration.
  \item If $a$ does not satisfy (\ref{far-cond-reject-start}') then we cannot conclude and we continue our search for an appropriate left-hand side DFA.
\end{itemize}

This method relies on a way to enumerate DFAs. In Section~\ref{far-defs-dfa} we give an efficient {\sc search-dfa} algorithm for enumerating canonically-represented DFAs. The search space of DFAs is a tree of partial transition functions and we can skip traversing some sub-trees based on a crucial observation. Solutions $R_0$, $R_1$ and $a$ (given by Lemma~\ref{lem:far-unique-min}) for partial DFA transition function $\delta$ are lower bounds of solutions for any $\delta'$ that extends $\delta$. This observation gives that if $a$, constructed from $\delta$, violates (\ref{far-cond-reject-start}') then, any $a'$, constructed from $\delta'$ extending $\delta$, will violate it too. Hence, in that case, descendants of $\delta$ in the DFA search tree can be skipped.
This efficient pruning technique completes the method, shown below as Algorithm \ref{alg:finite-automata-reduction-direct}.

\begin{algorithm}
  \caption{{\sc decider-finite-automata-reduction-direct}}\label{alg:finite-automata-reduction-direct}

  \begin{algorithmic}[1]
    \Procedure{\textbf{bool} {\sc decider-finite-automata-direct}}{\textbf{TM} machine, \textbf{int} n, \textbf{bool} left\_to\_right}
    \If{\textbf{not} left\_to\_right} switch all left-going transitions of the TM to right-going and vice versa
    \EndIf
    \State \textbf{Matrix$\boldsymbol<$bool}, $5*n+1, 5*n+1${}$\boldsymbol>$ $\textrm{R}[2*n+1][2]$ = $[[0,0],\ldots,[0,0]]$
    \State \textbf{ColVector$\boldsymbol<$bool}, $5*n+1${}$\boldsymbol>$ $aT[2*n+1]$ = 0 // $aT$ for transpose as $a$ is row vector in Section~\ref{far-algo-direct}
    \State \(\triangleright\) Basis vector indexing: for $\row_i(M_f)$ use index $5*i+f$, and for $e_\bot$, use index $5*n$.
    \State Initialize R[0] using (\ref{far-cond-ass-steady}') and (\ref{far-cond-halt}')
    \State Initialize aT[0] = $e_\bot\T$
    \Procedure{{\rm CheckResult} {\sc check}}{List$\boldsymbol<$int$\boldsymbol>$ L}
    \State k $\coloneqq$ L.\textbf{length}
    \State R[k], aT[k] = R[k-1], aT[k-1]
    \State Increase R[k] using (\ref{far-cond-last}'), with $(i,w)=\operatorname{divmod}(\textrm{k-1}, 2)$
    \Repeat
    \State Increase R[k] using (\ref{far-cond-left}'), restricted to $2*i+b<\textrm{k}$
    \Until{R[k] stops changing}
    \Repeat
    \State aT[k] = $\textrm{R}[k][0] \cdot \textrm{aT}[k]$
    \Until{aT[k] stops changing}
    \If{$\row_0(M_\textrm{A})\cdot \textrm{aT}[k]\ne 0$}
    \Return SKIP
    \ElsIf{k == 2*n}
    \Return STOP
    \Else\;\Return MORE
    \EndIf
    \EndProcedure
    \State \Return \Call{search-dfa}{check}
    \EndProcedure
  \end{algorithmic}
\end{algorithm}

\subsection{Efficient enumeration of Deterministic Finite Automata}
\label{far-defs-dfa}
The direct FAR algorithm (Section~\ref{far-algo-direct} and Algorithm~\ref{alg:finite-automata-reduction-direct}) relies on a procedure to enumerate Deterministic Finite Automata (DFA). We first recall the formal definition of DFAs then give an efficient algorithm (Algorithm~\ref{alg:search-dfa}) to enumerate them and to prune the search space early based on using Lemma~\ref{lem:far-unique-min} on partial DFA transition functions.

Textbooks define \emph{deterministic} finite automata (on the binary alphabet, with acceptance unspecified) as tuples $(Q, \delta, q_0)$ of: a finite set $Q$ (states), a $q_0\in Q$ (initial state), and $\delta: Q\times\{0, 1\}\to Q$ (transition function).
Though NFAs generalize DFAs, they can be emulated by (exponentially larger) power-set DFAs. \cite{Sipser}

To put this definition in the linear-algebraic framework:
identify $q_0\in Q$ with $0\in [n]\mathrel{\mathop:}=\{0,\ldots,n-1\}$;
represent states $q$ with elementary row vectors $e_q$;
define transition matrices $T_b$ via $e_q T_b = e_{\delta(q, b)}$.

As we did for transition matrices, extend $\delta$ to words: $\delta(q,\epsilon)=q$, $\delta(q,ub)=\delta(\delta(q,u),b)$.

Given a DFA on $[n]$, call its \emph{transition table} the list $(\delta(0,0),\delta(0,1),\ldots,\delta(n-1,0),\delta(n-1,1))$.

Call $\{\delta(q_0,u): u\in\{0,1\}^*\}$ the set of \emph{reachable} states.

When building a larger recognizer,
we expect no benefit from considering DFAs which just relabel others or add unreachable states.
So motivated, we define a canonical form for DFAs:
enumerate the reachable states via breadth-first search from $q_0$,
producing $f:[n]=\mathrel{\mathop:}Q_\textsf{cf}\to Q$.
Explicitly,
$f(0)=q_0$ and $f(k)$ is the first of
$\delta(f(0),0), \delta(f(0),1), \ldots, \delta(f(k-1),0), \delta(f(k-1),1)$ not in $f([k])$,
valid until $f([k])$ is closed under transitions.
This induces $\delta_\textsf{cf}(q,b)\mapsto f^{-1}(f(q), b)$.
(Warning: this definition and terminology aren't standard.)

\begin{lemma}
  \label{far-dfa-canonical form}
  In a DFA with $(Q,q_0)=([n],0)$, the following are equivalent:
  \begin{enumerate}
    \item it's in canonical form ($Q_\textsf{cf}\to Q$ is the identity)
          and ignores leading zeros (equation \eqref{far-cond-leading-0} or $\delta(0,0)=0$);
    \item its transition table includes each of $0,\ldots,n-1$, whose first appearances occur in order,
          and with each $0 < q < n$ appearing before the $2q$ position in the transition table;
    \item the sequence $\{m_k \mathrel{\mathop:}= \max\{\delta(q,b): 2q+b\le k\}\}_{k=0}^{2n-1}$ of cumulative maxima runs from $0$ to $n-1$ in steps of $0$ or $1$,
          with $m_{2q-1}\ge q$ for $0<q<n$.
  \end{enumerate}
\end{lemma}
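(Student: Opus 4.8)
The plan is to fix notation that turns all three conditions into statements about one finite sequence, and then prove the cycle $(1)\Rightarrow(2)\Rightarrow(3)\Rightarrow(1)$. I will write the transition table as $\tau(0),\tau(1),\dots,\tau(2n-1)$, a list of values in $[n]$ with $\tau(2q+b)=\delta(q,b)$, so that position $k$ holds $\delta(\lfloor k/2\rfloor,\,k\bmod 2)$, the quantity $m_k$ of statement (3) is exactly the running maximum $m_k=\max\{\tau(j):0\le j\le k\}$, and for a value $v\in[n]$ occurring in the table I write $\phi(v)$ for its first position of occurrence. The reading of (1) I use is that the BFS map $f$ is \emph{total} on $[n]$ and equal to the identity; unwinding its recursive definition, this says precisely that for every $1\le k\le n-1$ the first entry among $\tau(0),\dots,\tau(2k-1)$ lying outside $\{0,\dots,k-1\}$ is exactly $k$ (in particular such an entry exists, i.e.\ $\{0,\dots,k-1\}$ is not $\delta$-closed, which is what keeps the BFS from stalling early, whereas $[n]$ itself is trivially $\delta$-closed), together with the ``ignores leading zeros'' hypothesis $\tau(0)=\delta(0,0)=0$.

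For $(1)\Rightarrow(2)$: from $f=\mathrm{id}_{[n]}$ every state is reachable, hence every value of $[n]$ occurs in the table, and $k$ occurs among $\tau(0),\dots,\tau(2k-1)$, i.e.\ $\phi(k)\le 2k-1$, which is the ``before position $2q$'' clause. For ``first appearances occur in order'' I would use that, by the characterisation above, all of $\tau(0),\dots,\tau(\phi(k)-1)$ lie in $\{0,\dots,k-1\}$ while $\tau(\phi(k))=k$; if $\phi(k-1)\ge\phi(k)$ then $k-1$ does not occur strictly before position $\phi(k)$, and combining this with the analogous statement applied at $k-1$ (entries before $\phi(k-1)$ are $\le k-2$) one deduces $\tau(\phi(k))\le k-2$ when $\phi(k)<\phi(k-1)$ and $\tau(\phi(k))=k-1$ when $\phi(k)=\phi(k-1)$, both contradicting $\tau(\phi(k))=k$. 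Hence $\phi(0)<\phi(1)<\dots<\phi(n-1)$, which together with $\tau(0)=0$ gives (2).

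For $(2)\Rightarrow(3)$ I translate into the running maximum: (2) forces $\tau(0)=0$ (else $\phi(\tau(0))=0<\phi(0)$, impossible by monotonicity), whence $m_0=0$; every value occurring forces $m_{2n-1}=n-1$; the unit-step property $m_k-m_{k-1}\in\{0,1\}$ holds because a larger jump $\tau(k)\ge m_{k-1}+2$ would make $m_{k-1}+1$ a value of $[n]$ with $\phi(m_{k-1}+1)<\phi(\tau(k))\le k$ by monotone first appearances, contradicting $m_{k-1}=\max\{\tau(j):j\le k-1\}$; and $\phi(q)\le 2q-1$ rereads as $m_{2q-1}\ge q$. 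For $(3)\Rightarrow(1)$ I reverse this: a sequence of running maxima climbing from $0$ to $n-1$ in unit steps attains each value $v$ for the first time at a position $\pi_v$ with $\tau(\pi_v)=v$, the $\pi_v$ strictly increasing and $\pi_0=0$, so $\phi(v)=\pi_v$; thus $\tau(0)=0$, every value occurs, first appearances increase, and $m_{2q-1}\ge q$ becomes $\phi(q)\le 2q-1$. Then, by induction on $k$, the first entry of $\tau(0),\dots,\tau(2k-1)$ outside $\{0,\dots,k-1\}$ is $\tau(\phi(k))=k$ (it occurs there, at a position $\le 2k-1$, and by monotonicity of first appearances no larger value precedes it), so $f(k)=k$; moreover $\{0,\dots,k-1\}$ is not $\delta$-closed (it misses $k=\tau(\phi(k))$ with $\phi(k)\le 2k-1$), so the BFS reaches all of $[n]$ and $f=\mathrm{id}_{[n]}$, while $\delta(0,0)=\tau(0)=0$ supplies ``ignores leading zeros''. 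This closes the cycle.

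The routine part is the bookkeeping relating $\tau$, $\phi$ and $m_k$; the one genuinely delicate step I expect to fight with is ``first appearances occur in order'' in $(1)\Rightarrow(2)$ — equivalently the unit-step property of $m_k$ — since it is the only place needing a real induction (ruling out $\phi(k-1)\ge\phi(k)$, using the hypothesis at both $k$ and $k-1$) rather than a direct rewrite.
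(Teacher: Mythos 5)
Your proposal is correct and follows essentially the same route as the paper: both hinge on unwinding the recursive definition of the BFS map $f$ into the condition ``the first entry of $\tau(0),\dots,\tau(2k-1)$ outside $\{0,\dots,k-1\}$ is $k$'' and on identifying the first-appearance positions with the indices where the running maximum $m_k$ increases. The only difference is organizational (a cycle $(1)\Rightarrow(2)\Rightarrow(3)\Rightarrow(1)$ instead of the paper's two separate biconditionals, with a contradiction argument where the paper uses a partial-statement induction), plus your slightly more careful treatment of the ``valid until $f([k])$ is closed'' clause.
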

\begin{proof}
  \begin{description}
    \item[$1\iff 2$:]
          We prove a partial version by induction:
          the DFA ignores leading zeros and $f(q)=q$ for $q\le k$,
          iff $0,\ldots,k$ have ordered first appearances in the transition table
          which precede appearances of any $q>k$
          and occur before the $2k$ position in $\delta$ if $k>0$.
          In case $k=0$, the DFA ignores leading zeros iff $0$ comes first in the transition table by definition.
          (The other conditions are vacuous.)
          In case the claim holds for preceding $k$, $f(k)$ is by definition the first number outside of $f([k])=[k]$ in the transition table---if any---and the inductive step follows.
    \item[$2\iff 3$:]
          If the first appearances of $0,\ldots,n-1$ appear in order, any value at its first index is the largest so far, so $m_k$ takes the same values. The sequence $m_k$ is obviously nondecreasing, so to be gap-free it can only grow in steps of 0 or 1.
          Conversely, if $m_k$ runs from $0$ to $n-1$ in steps of $0$ or $1$, each value $q\in [n]$ must appear in the table at the first index $k$ for which $m_k=q$, and all preceding values in the transition table must be strictly less.

          In case these equivalent conditions are true, that last observation shows that $q$ appears before the $\delta(q,0)$ position iff $m_k$ reaches $q$ by index $k=2q-1$, or equivalently $m_{2q-1}\ge q$.
  \end{description}
\end{proof}

\begin{corollary}
  $\{t_k\}_{k=0}^\ell$ ($\ell<2n$) is a prefix of a canonical, leading-zero-ignoring, $n$-state DFA transition table iff
  $m_k \mathrel{\mathop:}= \max\{t_j\}_{j=0}^k$ runs from $0$ to $m_\ell<n$ in steps of $0$ or $1$, and $m_{2q-1}\ge q$ (for all $2q - 1 \le \ell$).
\end{corollary}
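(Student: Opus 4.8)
The plan is to read the corollary off Lemma~\ref{far-dfa-canonical form}, using the equivalence of its conditions~1 and~3, by regarding a length-$(\ell+1)$ prefix $\{t_k\}_{k=0}^\ell$ as a partially specified length-$2n$ transition table and asking exactly when it completes to a canonical, leading-zero-ignoring one. I prove the two implications separately, reading the hypothesis ``$m_{2q-1}\ge q$ for all $2q-1\le\ell$'' with the side condition $0<q<n$ inherited from Lemma~\ref{far-dfa-canonical form}.

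\emph{Necessity.} Suppose $\{t_k\}_{k=0}^\ell$ is a prefix of $(t_0,\dots,t_{2n-1})$, the transition table of a canonical, leading-zero-ignoring DFA on $([n],0)$. By Lemma~\ref{far-dfa-canonical form} (1$\Rightarrow$3) the full cumulative-maxima sequence $M_k=\max\{t_j\}_{j=0}^k$ runs from $0$ to $n-1$ in steps of $0$ or $1$ with $M_{2q-1}\ge q$ for $0<q<n$. Truncating at index $\ell$ leaves a sequence that starts at $M_0=t_0=0$, advances in steps of $0$ or $1$, ends at $M_\ell\le M_{2n-1}=n-1<n$, and retains $M_{2q-1}\ge q$ for every $q<n$ with $2q-1\le\ell$; since $m_k=M_k$ for $k\le\ell$, this is the stated condition.

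\emph{Sufficiency.} Suppose $\{t_k\}_{k=0}^\ell$ satisfies the condition; assume $1\le\ell\le 2n-2$ (the case $\ell=0$, where the hypothesis forces $t_0=0$, and the case $\ell=2n-1$, where the $q=n-1$ instance already forces $m_\ell=n-1$ and nothing needs completing, are immediate). The decisive observation is a lower bound on $m_\ell$: the largest admissible $q$ is $q_\star=\lceil\ell/2\rceil$, which is at most $n-1$, so $m_\ell\ge m_{2q_\star-1}\ge q_\star=\lceil\ell/2\rceil$. Complete the table greedily by setting $t_k=\min(m_{k-1}+1,\,n-1)$ for $\ell<k\le 2n-1$, so the extended cumulative maxima become $m_k=\min\bigl(m_\ell+(k-\ell),\,n-1\bigr)$. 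This sequence still advances in steps of $0$ or $1$; it attains $n-1$ at index $\ell+(n-1-m_\ell)\le 2n-1$, since $\ell-m_\ell\le\lfloor\ell/2\rfloor\le n-1$; and for each $q<n$ with $2q-1>\ell$ it satisfies $m_{2q-1}\ge q$, because either $m_{2q-1}=n-1\ge q$ or $m_{2q-1}=m_\ell+2q-1-\ell\ge\lceil\ell/2\rceil+2q-1-\ell$, which is $\ge q$ as $2q-1\ge\ell+1$ gives $q-1\ge\lfloor\ell/2\rfloor$. Hence $\{m_k\}_{k=0}^{2n-1}$ satisfies condition~3 of Lemma~\ref{far-dfa-canonical form}, so by (3$\Rightarrow$1) the completed list is the transition table of a canonical, leading-zero-ignoring, $n$-state DFA, of which $\{t_k\}_{k=0}^\ell$ is a prefix.

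The substance lies entirely in the sufficiency direction, and within it the only delicate point is verifying that the greedy completion never breaks an $m_{2q-1}\ge q$ constraint falling in the appended suffix. This goes through precisely because the prefix hypothesis itself forces $m_\ell\ge\lceil\ell/2\rceil$; the remaining manipulations are elementary estimates with floors and ceilings, and the translation between transition tables and DFAs is supplied verbatim by Lemma~\ref{far-dfa-canonical form}.
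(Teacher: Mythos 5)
Your proof is correct and follows essentially the same route as the paper's: the paper also completes the prefix greedily via $t_{k}=\min(m_{k-1}+1,n-1)$ and then invokes Lemma~\ref{far-dfa-canonical form}, though it asserts ``the same conditions apply'' one step at a time where you complete the table in one go and verify condition~3 explicitly. Your explicit lower bound $m_\ell\ge\lceil\ell/2\rceil$ and the check that the $m_{2q-1}\ge q$ constraints survive into the appended suffix are exactly the details the paper leaves implicit.
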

\begin{proof}
  If $\ell=2n-1$, $\{m_k\}$ grows to exactly $n-1$ (since $m_{2(n-1)-1}\ge n-1$), and lemma \ref{far-dfa-canonical form} applies.
  Otherwise, we may extend the sequence with $t_{\ell+1}=\min(m_\ell+1,n-1)$, the same conditions apply.
\end{proof}

So, Algorithm \ref{alg:search-dfa} searches such DFAs incrementally (avoiding partial DFAs already deemed unworkable).

\begin{algorithm}
  \caption{{\sc search-dfa}}\label{alg:search-dfa}

  \begin{algorithmic}[1]
    \State \textbf{enum} CheckResult \{MORE, SKIP, STOP\}
    \Statex
    \Procedure{\textbf{bool} {\sc search-dfa}}{\textbf{int} n, \textbf{function$\boldsymbol<$List$\boldsymbol<$int$\boldsymbol>$}, CheckResult$\boldsymbol>$ check}
    \Require{$\operatorname{check}(t)\ne\textrm{MORE}$ if $t$ is a complete (length-$2n$) table}

    \State \textbf{int} k = 1, t[$2*n$] = $[0,\ldots,0]$, m[$2*n$] = $[0,\ldots,0]$
    \Loop
    \State state = check(length-k prefix of t)
    \If{state == MORE}
    \State \textbf{int} q\_new = m[k-1] + 1
    \State t[k] = (q\_new $<$ \textrm{n} \textbf{and} 2*q\_new-1 == k) ? q\_new : 0
    \ElsIf{state == SKIP}
    \Repeat
    \If{k $\le$ 1}
    \Return false
    \EndIf
    \State k -= 1
    \Until{t[k] $\le$ m[k-1] \textbf{and} t[k] $<$ n-1}
    \State t[k] += 1
    \Else\;\Return true
    \EndIf
    \State m[k] = max(m[k-1], t[k])
    \State k += 1
    \EndLoop
    \EndProcedure

  \end{algorithmic}
\end{algorithm}

\subsection{Generality of the method}
In the preceding sections, we started from the idea of a closed language of word-representations of TM configurations, made a series of simplifying assumptions, and obtained a search algorithm.
This raises a question: if \emph{any} regular language proves a given TM infinite by co-CTL argument\footnote{
  As regular languages' complements are regular, this is the same as a regular CTL (in the strict sense of \S\ref{far-overview}) existing.
},
must a proof of the form used in Theorem~\ref{far-main-theorem}, let alone in \S\ref{far-algo-direct}, exist?

The answer is yes, and we sketch the proof below.
The following definitions and results aren't needed to prove this decider method's soundness---or outside of this subsection at all---but they justify using Remark~\ref{far-remark-verification} to build a universal (regular) CTL verification scheme.
Historically, they were discovered together with Algorithm~\ref{alg:finite-automata-reduction-direct}, and motivated its development.

Any closed language $\mathcal{L}$ classifies the binary words $w\in\{0,1\}^*$ by Nerode congruence: $w\sim_\mathcal{L} w'$ if for every $z\in\{0,1,A,\ldots,E\}^*$, $wz\in\mathcal{L}\iff w'z\in\mathcal{L}$.
We may form a modified version of the Turing machine $\mathcal{M}$, herein called $\mathcal{M}/\sim_\mathcal{L}$, with the following semantics:

A \textit{configuration} of $\mathcal{M}/\sim_\mathcal{L}$ is defined by the 3-tuple: (i) a state of $\mathcal{M}$, (ii) an equivalence class $[w]_{\sim_\mathcal{L}}$ of some $w\in\{0,1\}^*$ representing the (strictly) left-of-head portion of the tape,  (iii) a finite word $w\in\{0,1\}^*$, representing the remainder of the tape.
We additionally define one distinct configuration, named $\bot$, which represents the machine in a halted state.

Note that any finitely supported configuration $c$ of $\mathcal{M}$ maps to a configuration $[c]$ of $\mathcal{M}/\sim_{\mathcal{L}}$, by sending the left-of-head contents to their equivalence class modulo $\sim_\mathcal{L}$.
This is a many-to-one mapping.

The \emph{transitions} of $\mathcal{M}/\sim_{\mathcal{L}}$ are the images of those of $\mathcal{M}$: that is, if $c_1\vdash_\mathcal{M} c_2$, $[c_1]\vdash_{\mathcal{M}/\sim_\mathcal{L}} [c_2]$. Since $c_1\mapsto[c_1]$ is a many-to-one mapping, this definition makes $\mathcal{M}/\sim_\mathcal{L}$ a nondeterministic machine.
In case $c_1\vdash_\mathcal{M}\bot$ (i.e., the $\mathcal{M}$-transition from $c_1$ is undefined), we also define $[c_1]\vdash_{\mathcal{M}/\sim_\mathcal{L}}\bot$.

For any configurations $c_1,c_2$ of $\mathcal{M}$, if $[c_1]\vdash_\mathcal{M}/\sim_\mathcal{L} [c_2]$ and a word-representation of $c_2$ is in $\mathcal{L}$, observe that a word-representation of $c_1$ is also in $\mathcal{L}$.
This follows because the $\mathcal{M}/\sim_\mathcal{L}$-transition must come from some transition $c_1'\vdash_\mathcal{M}c'_2$ of $\mathcal{M}$, where $[c_1]=[c_1']$ and $[c_2]=[c_2']$. Now, $c'_2$ has a word-representation which differs from one of $c_2$ only by substituting a Nerode-congruent prefix, so $c'_2$ also has a word-representation in $\mathcal{L}$. By closure of $\mathcal{L}$, this is true of $c'_1$, and similarly by Nerode congruence this is true of $c_1$.

Similarly, if $[c_1]\vdash_\mathcal{M}/\sim_\mathcal{L}\bot$, $c_1$ has a word-representation in $\mathcal{L}$.

Say that $\mathcal{M}/\sim_\mathcal{L}$ halts from its initial configuration (which is the image of $\mathcal{M}$'s initial configuration) if there exists a sequence of $\mathcal{M}/\sim_\mathcal{L}$-transitions from it to $\bot$.
The point of this is: if $\mathcal{L}$ is a closed language for $\mathcal{M}$, separating its initial configuration from all halting configurations, then that's impossible!
For that would imply a sequence $[c_0]\vdash_{\mathcal{M}/\sim_\mathcal{L}}\dots\vdash_{\mathcal{M}/\sim_\mathcal{L}}[c_n]\vdash_{\mathcal{M}/\sim_\mathcal{L}}\bot$, where $c_0$ is the initial configuration of $\mathcal{M}$ and $\{c_i\}_{i=1}^n$ is a sequence of other configurations.
By the above, this would imply that $c_0$ has a word-representation in $\mathcal{L}$, contrary to assumption that $\mathcal{L}$ provides a CTL proof that $c_0\not\vdash^*_\mathcal{M}\bot$.

We now seek to recover $\mathcal{L}$, or another regular language which leads to a CTL proof, by studying the halting problem of $\mathcal{M}/\sim_\mathcal{L}$.
In fact, the work has been done already: observe that, just as any Turing machine $\mathcal{M}$ is equivalent to a PDA equipped with two stacks (corresponding to the strict left-hand side of the tape and the rest of it), the machine $\mathcal{M}/\sim_\mathcal{L}$ is equivalent to a  standard nondeterministic PDA. (The control-state space of the PDA to is simply $\{0,1\}^*/\sim_\mathcal{L} \times \{A,\ldots,E\}$---which is a finite set by the Myhill-Nerode theorem. A transition of $\mathcal{M}/\sim_\mathcal{L}$ corresponding to a leftward TM transition pushes the written bit onto the stack. A transition of $\mathcal{M}/\sim_\mathcal{L}$ corresponding to a rightward TM transition pops the read bit off the stack.)
The halting problem of a PDA is solved in  \cite{BEM_1997}: the eventually-halting configurations of any PDA are in fact described by a regular language, whose construction corresponds exactly to the procedure of \S\ref{far-algo-direct}.

In summary: we may take the Myhill-Nerode DFA for the original language $\mathcal{L}$, restrict it to the alphabet $\{0,1\}$, apply the construction from \S\ref{far-algo-direct} to obtain an NFA which recognizes precisely the halting configurations of $\mathcal{M}/\sim_\mathcal{L}$, and combine the DFA/NFA to form a recognizer for some closed language $\mathcal{L}'$ for $\mathcal{M}$; that is, it satisfies \eqref{far-cond-first}--\eqref{far-cond-last}. We also know that a language solving the halting problem of $\mathcal{M}/\sim_\mathcal{L}$ rejects its initial configuration, and so \eqref{far-cond-reject-start} is also satisfied and the constructed $\mathcal{L}'$ provides a CTL proof for $\mathcal{M}$.

\subsection{Search algorithm II: meet-in-the-middle DFA}
\label{far-algo-mtim_dfa}
A symmetric recognizer construction has also shown good results.
Again, pass the left half-tape through a DFA with $l$ states.
Imagine a DFA with $d$ states scanning the (strict) right half-tape right-to-left.

\begin{remark}
  Our definitions require a left-to-right scan direction.
  Any NFA $(e_0, \{R_b\})$ can be transposed.
  (Transposing transition matrices reverses the arrows in the diagram, as with graph adjacency matrices.)
  We can shoehorn this into the preceding framework by making an accept state from R's transposed initial state $e_0\T$,
  defining middle transitions $M_{fr}$ for the configuration's head state/bit,
  superposing all states of R to get our $s$ vector,
  and trying to satisfy conditions like
  $e_0 M_{A0} e_0\T = 0$, $M_{fr}=\sum_L e_q\T s$ (for halt rules),
  $L_b M_{fr} \succeq M_{tb} R^T_w$ (for left rules),
  $M_{fr} R^T_b \succeq L_w M_{tb}$ (for right rules).
  What follows is more intuitive.
\end{remark}

\begin{figure}
  \begin{tikzpicture}[shorten >=1pt, shorten <=1pt]]
    \node[state,initial above] (0L) at (-2, 2) {$0_L$};
    \node[state]           (1L) at (-2, -2) {$1_L$};
    \node[state,initial above] (0R) at (10, 0) {$0_R$};
    \node[state]           (1R) at (8, 0) {$1_R$};
    \node[state]           (2R) at (6, 0) {$2_R$};
    \node[state]           (3R) at (4, 0) {$3_R$};
    \node[state]           (4R) at (2, 2) {$4_R$};
    \node[state]           (5R) at (2, -2) {$5_R$};

    \path[->]  (0L)  edge [loop left]        node {$0$} (0L)
    edge                    node [right] {$1$} (1L)
    (1L)  edge [bend left=15]     node [left] {$0|1$} (0L)
    (0R)  edge [loop right]       node {$0$} (0R)
    edge                    node [above] {$1$} (1R)
    (1R)  edge [bend right]       node [below] {$0$} (0R)
    edge                    node [above] {$1$} (2R)
    (2R)  edge [loop above]       node {$0$} (2R)
    edge                    node [above] {$1$} (3R)
    (3R)  edge                    node [above left] {$1$} (5R)
    (4R)  edge [bend right=15]    node [below left] {$1$} (3R)
    (5R)  edge [loop right]       node {$0|1$} (5R)
    ;
    \path[<->]   (3R)  edge                    node [above right] {$0$} (4R);
  \end{tikzpicture}
  \caption{This pair of DFAs can also recognize halting configurations for the TM of figure \ref{fig:finite-automata-reduction}.
    Configurations are classified by their head state, head bit, and the two half-tapes (processed outside-in by the DFAs.)}
  \label{fig:far_mitm_dfa}
\end{figure}
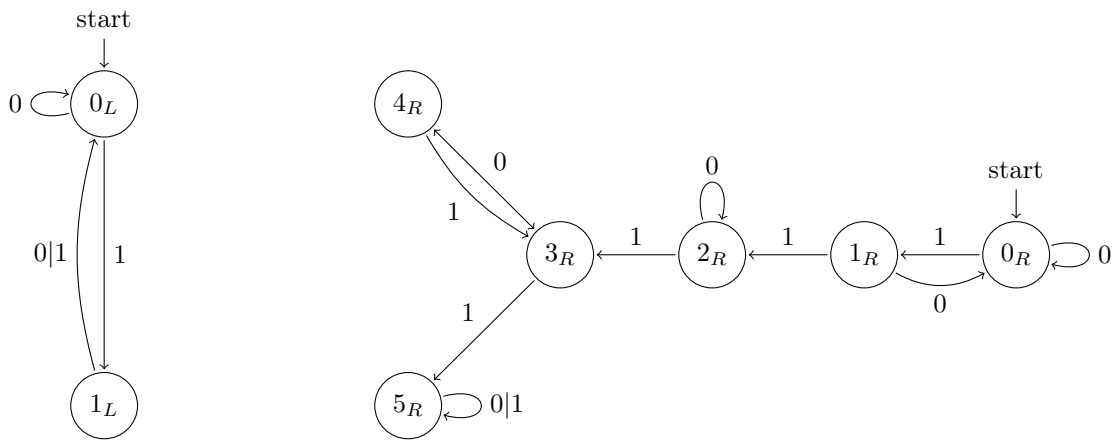

As in Figure \ref{fig:far_mitm_dfa}, let's consider the DFAs on their own terms.
Each one partitions its input into a family of regular languages (one per state).
Accounting for the head state/bit and right half-tape, we obtain $l\cdot 5\cdot 2\cdot d$ classes of TM configuration.
Propose a recognizer which distils this classification into a result.
We'll work out conditions for a good ``accepted'' set $A\subseteq[l]\times\{A,\ldots,E\}\times\{0,1\}\times[d]$.
If they're satisfiable, even if we don't prove the scheme sound, we can feed the left DFA into Algorithm \ref{alg:finite-automata-reduction-direct} to check the result.

Despite the new setting, we can write out closure conditions analogous to \S\ref{far-defs-recognizer}'s, each $\forall i\in[l],j\in[d]$:
\begin{align}
                                                                  & (i,f,r,j)\in A
                                                                  &                               & \text{if $(f,r) \to \bot$ is a halting transition of $\mathcal{M}$}
  \tag{\ref{far-cond-halt}''}
  \\
  \forall b \in \{0,1\},\, (i, t, b, \delta_R(j,w))\in A \implies & (\delta_L(i,b), f, r, j)\in A
                                                                  &                               & \text{if $(f,r) \to (t,w,\text{left})$ is a transition of $\mathcal{M}$}
  \tag{\ref{far-cond-left}''}
  \\
  \forall b \in \{0,1\},\, (\delta_L(i,w), t, b, j)\in A \implies & (i, f, r, \delta_R(j,b))\in A
                                                                  &                               & \text{if $(f,r) \to (t,w,\text{right})$ is a transition of $\mathcal{M}$}
  \tag{\ref{far-cond-last}''}
\end{align}

The goal, analogous to \eqref{far-cond-reject-start}, is $(0, A, 0, 0)\notin A$.

We could now search all DFA pairs, checking if the smallest $A$ closed under (\ref{far-cond-halt}'')--(\ref{far-cond-last}'') rejects $(0,A,0,0)$.
However, to get decent performance, we must express the above as a boolean satisfiability ({\sc sat}) problem.

Other lessons learned in practice:
it was most effective to use the same state count on both sides ($l=d=n$),
and it was decisively faster to impose the canonical form restrictions of Lemma \ref{far-dfa-canonical form}.

Algorithm \ref{alg:finite-automata-reduction-mitm_dfa} shows how this works.
Here especially, actual code can vary from the given pseudocode:
\begin{itemize}
  \item If {\sc sat} solvers use integers for literals (variables and their negations), one needn't ``allocate variables''.
  \item It may be possible to simplify by adding propositional variables for more edge cases.
  \item The ``outcomes are mutually exclusive'' condition may be represented differently.
  \item Checking a solution is valid needn't involve Algorithm \ref{alg:finite-automata-reduction-direct}, if the author proves more.
\end{itemize}

\begin{algorithm}
  \caption{{\sc decider-finite-automata-reduction-MitM-DFA}}\label{alg:finite-automata-reduction-mitm_dfa}

  \begin{algorithmic}[1]
    \Procedure{\textbf{bool} {\sc decider-finite-MitM-DFA}}{\textbf{TM} machine, \textbf{int} n}
    \State \(\triangleright\) Allocate variables.
    \State\textbf{Map$\boldsymbol<$tuple, int$\boldsymbol>$} tk\_eq, tk\_le, mk\_eq, A
    \ForAll{$(\textrm{lr}, \textrm{k}, \textrm{y})\in[2]\times[n]\times[2*n]\times[n+1]$}
    \If{(k, y) == (0, 0)}
    tk\_eq[lr, k, y] = true
    \ElsIf{$0\le y\le\min(k,n-1)$}
    tk\_eq[lr, k, y] = \Call{new-variable}{}
    \Else\;
    tk\_eq[lr, k, y] = false
    \EndIf

    \If{$y\le 0$}
    tk\_le[lr, k, y] = tk\_eq[lr, k, y]
    \ElsIf{$0\le y\le\min(k-1,n-2)$}
    tk\_le[lr, k, y] = \Call{new-variable}{}
    \Else\;
    tk\_le[lr, k, y] = true
    \EndIf

    \If{(k, y) == (2*n-1, n-1)}
    mk\_eq[lr, k, y] = true
    \ElsIf{\textbf{not} $\left\lceil\frac{k}{2}\right\rceil\le y<\min(n,k+1)$}
    mk\_eq[lr, k, y] = false
    \ElsIf{$\min(n, k+1)-((k+1)/2) \le 1$}
    mk\_eq[lr, k, y] = true
    \Else\;
    mk\_eq[lr, k, y] = \Call{new-variable}{}
    \EndIf
    \EndFor

    \ForAll{$(\textrm{i}, \textrm{f}, \textrm{r}, \textrm{j})\in[n]\times[5]\times[2]\times[n]$}
    \If{(k, y) == (2*n-1, n-1)}
    A[i, f, r, j] = false
    \Else\;
    A[i, f, r, j] = \Call{new-variable}{}
    \EndIf
    \EndFor

    \State \(\triangleright\) Transition validity: outcomes are mutually exclusive.
    \ForAll{$(\textrm{lr}, k, \textrm{y})\in[2]\times[2*n]\times[n]$}
    \State \Call{new-clause}{$\textrm{tk\_eq}(\textrm{lr}, k, y)\implies \textrm{tk\_le}(\textrm{lr}, k, y)$}
    \State \Call{new-clause}{$\textrm{tk\_le}(\textrm{lr}, k, y)\implies \textrm{tk\_le}(\textrm{lr}, k, y+1)$}
    \State \Call{new-clause}{$\textrm{tk\_eq}(\textrm{lr}, k, y+1)\implies \neg\textrm{tk\_le}(\textrm{lr}, k, y)$}
    \EndFor
    \State \(\triangleright\) Transition validity: an outcome occurs.
    \ForAll{$(\textrm{lr}, k)\in[2]\times\{1,\ldots,2*n-1\}$}
    \Call{new-clause}{$\bigvee_{y=0}^{\min(k,n-1)} \textrm{tk\_eq}(\textrm{lr}, k, y)$}
    \EndFor

    \State \(\triangleright\) Closure conditions.
    \ForAll{$(i,j,(f,r))\in[n]^2\times$\Call{halt-rules}{machine}}
    \State\Call{new-clause}{$A[i, f, r, j]$}
    \EndFor
    \ForAll{$(i,j,\textrm{ib},\textrm{jw},(f,r,w,L,t))\in[n]^4\times$\Call{left-rules}{machine}}
    \State\Call{new-clause}{$
        \textrm{tk\_eq}[L, i, b, \textrm{ib}]
        \land \textrm{tk\_eq}[R, j, w, \textrm{jw}]
        \land A[i, t, b, \textrm{jw}]
        \implies A[\textrm{ib}, f, r, j]
      $}
    \EndFor
    \ForAll{$(i,j,\textrm{iw},\textrm{jb},(f,r,w,R,t))\in[n]^4\times$\Call{right-rules}{machine}}
    \State\Call{new-clause}{$
        \textrm{tk\_eq}[R, j, b, \textrm{jb}]
        \land \textrm{tk\_eq}[L, i, w, \textrm{iw}]
        \land A[\textrm{iw}, t, b, j]
        \implies A[i, f, r, \textrm{jb}]
      $}
    \EndFor

    \State \(\triangleright\) DFA is in canonical form (Lemma \ref{far-dfa-canonical form}).
    \ForAll{$(\textrm{lr},k)\in[2]\times\{1,\ldots,2*n-1\}$}
    \For{$m=\lfloor k/2\rfloor,\ldots,\min(n, k)$}
    \State\Call{new-clause}{$\textrm{mk\_eq}(\textrm{lr}, k-1, m) \implies \textrm{tk\_le}(\textrm{lr}, k, m+1)$}
    \State\Call{new-clause}{$\textrm{mk\_eq}(\textrm{lr}, k-1, m) \land \textrm{tk\_le}(\textrm{lr}, k, m) \implies \textrm{mk\_eq}(\textrm{lr}, k, m)$}
    \State\Call{new-clause}{$\textrm{mk\_eq}(\textrm{lr}, k-1, m) \land \textrm{tk\_eq}(\textrm{lr}, k, m+1) \implies \textrm{mk\_eq}(\textrm{lr}, k, m+1)$}
    \EndFor
    \EndFor

    \If{\Call{check-sat}{}}
    \State Assert L DFA from the model proves the machine using Algorithm \ref{alg:finite-automata-reduction-direct}.
    \State \Return true
    \Else\;\Return false
    \EndIf
    \EndProcedure
  \end{algorithmic}
\end{algorithm}


\subsection{Implementations}\label{sec:far-implem}

Here are the implementations of the decider that were realised:

\begin{enumerate}
  \item Justin Blanchard's original, optimized Rust implementation: \url{https://github.com/bbchallenge/bbchallenge-deciders/tree/main/decider-finite-automata-reduction}
  \item Tony Guilfoyle's C++ reproduction: \url{https://github.com/TonyGuil/bbchallenge/tree/main/FAR}
  \item Tristan Stérin (cosmo)'s Python reproduction: \url{https://github.com/bbchallenge/bbchallenge-deciders/tree/main/decider-finite-automata-reduction-reproduction}
\end{enumerate}

Verifiers for Theorem~\ref{far-main-theorem} -- i.e. programs that check that a given NFA gives a valid nonhalting proof for a given machine, see Remark~\ref{far-remark-verification} -- have also been given with each of the above deciders and, Nathan Fenner provided one verifier formally verified in Dafny: \url{https://github.com/Nathan-Fenner/busy-beaver-dafny-regex-verifier}.


\newpage
\section{Bouncers}\label{sec:bouncers}

\paragraph{Acknowledgement.}  Sincere thanks to Tony Guilfoyle who initially implemented a decider for bouncers\footnote{See: \url{https://github.com/TonyGuil/bbchallenge/tree/main/Bouncers}.}.
Others have contributed to this method by producing alternative implementations (see Section~\ref{sec:bouncers-implem}) or discussing and writing the formal proof presented here: savask, Iijil, mei, Tristan Stérin (cosmo), Justin Blanchard, Pascal Michel.

\begin{figure}[h!]
    \centering
    \includegraphics*[width=0.9\textwidth]{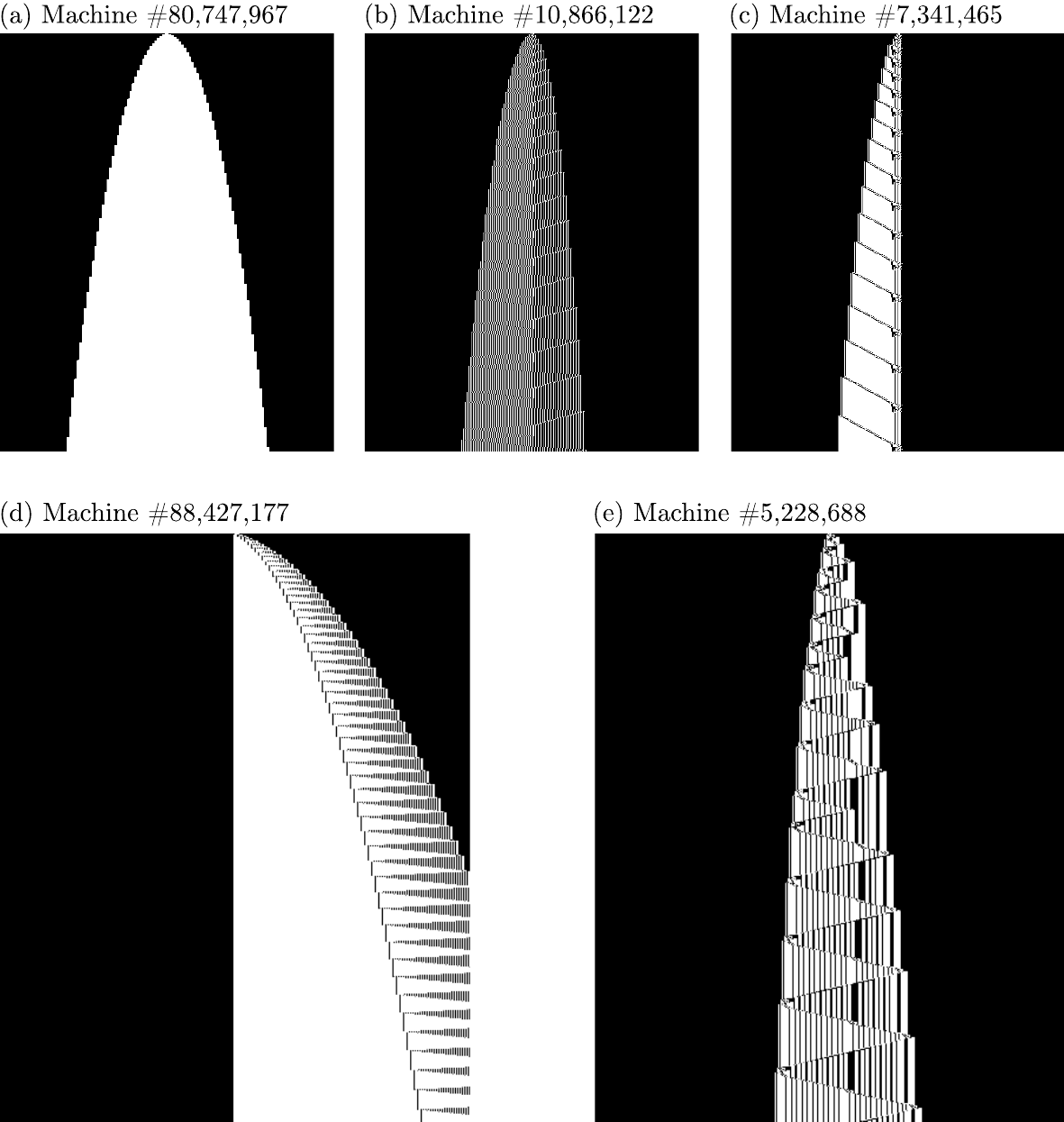}
    \caption{Space-time diagrams (10,000 steps) of several \texttt{bbchallenge} \textit{bouncers}: (a) simple bouncer bouncing back and forth between expanding tape extremeties while writing 1s (b) bouncer with more complex alternating \textit{repeater} patterns left and right of the origin (c) unilateral bouncer with a complex \textit{wall} pattern at the origin (d) unilateral bouncer, main example used throughout this section (e) bouncer entering a repetitive bouncing pattern after $\sim$6,000 steps (bottom half of the image).}\label{fig:bouncers}
\end{figure}

\subsection{Characterising bouncers}

Intuitively, a \emph{bouncer} is a Turing machine that populates a tape with
linearly-expanding patterns, called \textit{repeaters}, possibly separated or enclosed by fixed patterns called \textit{walls}. This intuitive definition corresponds to a wide range of behaviors, from simply bouncing back and forth between the tape's expanding extremities, Figure~\ref{fig:bouncers}~(a), to complex traversal of \textit{repeater} and \textit{wall} patterns, Figure~\ref{fig:bouncers}~(b)-(d), and possibly a delayed onset of the \textit{bouncing} pattern, Figure~\ref{fig:bouncers}~(e). What we call bouncers is a generalisation of the various classes of ``Christmas trees'' used to solve $\text{BB}(4)$ \cite{Brady83}.

The goal of this section is to formally characterise bouncers and show that they do not halt, see Theorem~\ref{th:bouncers}. Then, in Section~\ref{sec:bouncers-implem}, we show how to detect bouncers algorithmically in practice, see Algorithm~\ref{alg:decider-bouncers}.

\subsubsection{Directional Turing machines}\label{sec:bouncers:directionalTM}

We build on the concept of directional Turing machine introduced in Section~\ref{sec:conventions}. Directional Turing machines are an equivalent formulation of Turing machines where the machine head lives in between the tape cells and can point to the left or to the right. We choose here to treat $0^\infty$ as a unique symbol (instead of an infinite collection of 0s) and write $\overline{\Sigma} = \{0^\infty\}\cup\Sigma$, with $\Sigma$ the tape alphabet of the machine -- in the context of $\text{BB}(5)$, we have $\Sigma=\{0,1\}$. For each Turing machine with set of states $S$ we introduce 2$|S|$ new configuration symbols (i.e.\ symbols used to describe machine configurations and not read/write symbols) denoting the machine head in two possible orientations, namely $\Delta = \{\lhead s | s\in S\}\cup\{\rhead s| s \in S\}$.

We define a \textit{tape}\footnote{Note that in the terminology of Section~\ref{sec:conventions}, the definition of a tape that we use here, where the head in part of the tape, corresponds to a (partial) TM configuration.} to be a finite word of the form $uhv$, where $u,v\in \overline{\Sigma}^*$ and $h\in\Delta$, moreover, $u$ and $v$ must have at most one occurrence of $0^\infty$ each, respectively as first symbol for $u$ or last symbol for $v$. We choose the initial tape to be $0^\infty \rhead{\text{A}} 0^\infty$. Now, we define tape rewrite rules which will be used to simulate a directional Turing machine which we fix from now on. Suppose that for $s\in S, x\in \Sigma$ we have $\delta(s,x) = (s',d,x')$ where $s'\in S, d \in \{\text{L},\text{R}\}, x' \in \Sigma$ and $\delta$ the transition function of the machine, then we define the following tape rewrite rules:

\begin{table}[h!]
    \centering
    \begin{tabular}{l|l}
        If $d = \text{L}$                               & If $d = \text{R}$ \\
        \hline
        $\begin{aligned}[t]
                 x \lhead{s}\,\,\, & \to \,\,\, \lhead{s'} x' \\
                 \rhead{s} x\,\,\, & \to \,\,\, \lhead{s'} x'
             \end{aligned}$ & $\begin{aligned}[t]
                                   x \lhead{s}\,\,\, & \to \,\,\, x' \rhead{s'} \\
                                   \rhead{s} x\,\,\, & \to \,\,\, x' \rhead{s'}
                               \end{aligned}$     \\
        \hline
    \end{tabular}\\
    \ \\ If $x=0$, also define: \\
    \begin{tabular}{l|l}
        \hline
        $\begin{aligned}[t]
                 0^\infty \lhead{s}\,\,\, & \to \,\,\, 0^\infty \lhead{s'} x'   \\
                 \rhead{s} 0^\infty\,\,\, & \to \,\,\, \lhead{s'} x'\, 0^\infty
             \end{aligned}$ & $\begin{aligned}[t]
                                   0^\infty \lhead{s}\,\,\, & \to \,\,\, 0^\infty \, x' \rhead{s'} \\
                                   \rhead{s} 0^\infty\,\,\, & \to \,\,\, x' \rhead{s'} 0^\infty
                               \end{aligned}$ \\
        \hline
    \end{tabular}\\
\end{table}

Given a tape $t=uvw$ and a word $t'=uv'w$, with $u,w\in \overline{\Sigma}^*$, $v,v' \in ( \overline \Sigma \cup \Delta)^*$, suppose that there is a rewrite rule $v \to v'$. Then $t'$ is also a tape and in this situation we write $t \vdash t'$ meaning that $t'$ is obtained from $t$ by one simulation step; note that the definition is sound because, to any given tape, at most one rewrite rule applies, hence $v \to v'$ is defined uniquely and so is $t \vdash t'$. Note that the only case where $\vdash$ is not defined on a tape is if the machine halts in this configuration, i.e.\ an undefined transition is reached. Applying $\vdash$ successively $n\in\mathbb{N}$ times is written $\vdash^n$. The transitive closure of $\vdash$ (i.e.\ applying $\vdash$ one or more times) is written $\vdash^+$; applying $\vdash$ zero or more times is written $\vdash^*$.

\begin{example}\label{ex:bouncer88427177}
    Consider \texttt{bbchallenge} machine\footnote{Accessible at \url{https://bbchallenge.org/88427177}} \#88,427,177 (Figure~\ref{fig:bouncers}~(d)), that has the following transition table:
    \[
        \begin{array}{l|ll}
              & 0          & 1          \\
            \hline
            A & 1\text{RB} & 1\text{LE} \\
            B & 1\text{LC} & 1\text{RD} \\
            C & 1\text{LB} & 1\text{RC} \\
            D & 1\text{LA} & 0\text{RD} \\
            E & \mbox{---} & 0LA
        \end{array}
    \]

    Given the above definitions, we will have the following rewrite rules with state A on the left-hand side:
    \begin{align*}
        0 \lhead{A}\,\,\,        & \to\,\,\, 1 \rhead{B}          \\
        \rhead{A} 0\,\,\,        & \to\,\,\, 1 \rhead{B}          \\
        1 \lhead{A}\,\,\,        & \to\,\,\, \lhead{E} 1          \\
        \rhead{A} 1\,\,\,        & \to\,\,\, \lhead{E} 1          \\
        0^\infty \lhead{A}\,\,\, & \to\,\,\, 0^\infty 1 \rhead{B} \\
        \rhead{A} 0^\infty\,\,\, & \to\,\,\, 1 \rhead{B} 0^\infty
    \end{align*}
    The other rewrite rules are those with left-hand side states B, C, D and E. Simulating the machine for 4 steps, starting from initial tape yields:
    $ 0^\infty \rhead{\text{A}} 0^\infty \;\vdash\; 0^\infty \, 1 \rhead{\text{B}} 0^\infty \;\vdash\; 0^\infty \, 1 \lhead{\text{C}} 1\, 0^\infty\ \;\vdash\; 0^\infty \, 1 \rhead{\text{C}} 1\, 0^\infty \;\vdash\; 0^\infty \, 1 1 \rhead{\text{C}} 0^\infty$. Hence, $0^\infty \rhead{\text{A}} 0^\infty \;\vdash^*\; 0^\infty \, 1 1 \rhead{\text{C}} 0^\infty$.
\end{example}

\subsubsection{Wall-repeater formula tapes}\label{sec:bouncers:formula-tapes}

In this section, we formalise the above-stated intuition that bouncers expand a tape by repeating \textit{repeater} patterns between fixed \textit{walls}. Then, we formally define bouncers in Definition~\ref{def:bouncers} and show that they do not halt in Theorem~\ref{th:bouncers}. For this, we are going to express bouncers' tapes using abbreviated regular expressions over the alphabet $\Delta \cup  \overline{\Sigma}$. Given $u\in\Sigma^*$, we represent the regular language $\{u\}^*$ (zero or more repetitions of the word $u$), as $(u)$. Also, we write $\Sigma^+ = \Sigma^* \setminus \{ \varnothing \}$. We define a \textit{wall-repeater formula tape} to be an expression of the form:
\begin{align}\label{math:formulaTapes}w_1(r_1)w_2(r_2)\dots w_n(r_n) w_{n+1} h w'_1(r'_1)w'_2(r'_2)\dots w'_m(r'_m) w'_{m+1}\end{align}

if the following conditions are met:
\begin{align*}
     & n,m \geq 0,                                     \\
     & h \in \Delta,                                   \\
     & r_1,\dots,r_n,r'_1,\dots,r'_m \in \Sigma^+,     \\
     & w_2,\dots,w_{n+1},w'_1,\dots,w'_m \in \Sigma^*, \\
     & w_1, w'_{m+1} \in  \overline{\Sigma}^*
\end{align*}

and, as in the definition of a tape, $w_1$ (resp. $w'_{m+1}$) either does not contain the symbol $0^\infty$ or it starts with it (resp. ends with it). Words $w_i, w'_j$ are called \textit{walls} and can be empty, while the \textit{repeaters}, $r_i, r'_j$ must be nonempty. Note that we allow $n=m=0$, hence a usual tape is also a wall-repeater formula tape. For the rest of this section, we abbreviate wall-repeater formula tapes as \textit{formula tapes}.

Given a formula tape $f$ let $\mathcal{L}(f)$ denote the language described by $f$, i.e.\ the set of tapes that match~it.

\begin{example}\label{ex:formulaTapes}
    Consider the formula tape $f = 0 \rhead{\text{D}} (01)$. We have $\mathcal{L}(f) = \{0\rhead{\text{D}},0\rhead{\text{D}}01,0\rhead{\text{D}}0101,\dots\}$. Consider the formula tape $f'=0^\infty(111)1110 \lhead{\text{A}} 010101(01)10^\infty$. Then: $0^\infty 1110 \lhead{A} 01010110^\infty$, $0^\infty 1111110 \lhead{A} 01010110^\infty$, and $0^\infty 1110 \lhead{A} 0101010110^\infty$ are elements of $\mathcal{L}(f')$.
\end{example}



\paragraph*{Extending $\vdash$ to formula tapes: \textit{shift rules}.} We wish to extend the Turing machine step relation $\vdash$ (see Section~\ref{sec:bouncers:directionalTM}) to formula tapes. This is quite straightforward when the head is pointing at a symbol of a wall (one of the $w_i, w'_j$ in \eqref{math:formulaTapes}): we simply apply a standard Turing machine step, leaving the definition of $\vdash$ unchanged.

However, we need to handle the case where the head is pointing at a repeater (one of the $r_i, r'_j$ in \eqref{math:formulaTapes}). Suppose that for some $u\in\Sigma^*$ and $r,\tilde{r}\in\Sigma^+$ and some state $s\in S$ we have $u \rhead{s} r \vdash^+ \tilde{r} u \rhead{s}$. Then, for any $n\geq 0$, we have $u \rhead{s} r^n \vdash^* \tilde{r}^n u \rhead{s}$. This motivates the definition of
(right) \textit{shift rules}, that is, rewrite rules for formula tapes, which rewrite a subword $u \rhead{s}(r)$ into $(\tilde{r})u\rhead{s}$, denoted by $u \rhead{s}(r) \to (\tilde{r})u\rhead{s}$. Similarly, left shift rules are of the form $(r)\lhead{s}u \to \;\lhead{s}u(\tilde{r})$ given that $r\lhead{s}u \vdash^+ \;\lhead{s}u\tilde{r}$. Note that repeaters $r$ and $\tilde{r}$ of a shift rule necessarily have the same size.

Hence, we have two cases to consider for defining $\vdash$ on the following formula tape $f$ (as defined in~\eqref{math:formulaTapes}):
$$f = w_1(r_1)w_2(r_2)\dots w_n(r_n) w_{n+1} h w'_1(r'_1)w'_2(r'_2)\dots w'_m(r'_m) w'_{m+1}$$

\begin{enumerate}
    \item (Usual step) If $h$ points at nonempty $w_{n+1}$ or $w'_1$, and $w_{n+1} h w'_1 \vdash \tilde{w}_{n+1} \tilde{h} \tilde{w}'_1$ as tapes, replace the terms $w_{n+1} h w'_1(r'_1)$ of $f$ with $\tilde{w}_{n+1} \tilde{h} \tilde{w}'_1$. Call the new formula $f'$, and define $f \vdash f'$.
          As for tapes, $\vdash$ is undefined if $w_{n+1} h w'_1$ corresponds to a halting configuration (i.e.\ undefined transition).
    \item (Shift rule) Two cases:
          \begin{enumerate}

              \item Right shift rule. If $h=\;\rhead{s}$ with $s\in S$ and $w'_1$ is empty, consider the set of shift rules $\mathcal{R} = \{ u \rhead{s}(r'_1) \to (\tilde{r})u\rhead{s} \; | \; \tilde{r}\in\Sigma^+,\; u\in\Sigma^* \text{ is a suffix of } w_{n+1}\}$. If $\mathcal{R}$ is not empty then apply the right shift rule of $\mathcal{R}$ with smallest $u$ (possibly empty), call the new formula $f'$, and we define $f \vdash f'$. If $\mathcal{R}$ is empty, $\vdash$ cannot be applied to $f$.
              \item Left shift rule. If $h=\;\lhead{s}$ with $s\in S$ and $w_{n+1}$ is empty, consider the set of shift rules $\mathcal{R} = \{ (r_{n})\lhead{s}u \to \;\lhead{s}u(\tilde{r}) \; | \; \tilde{r}\in\Sigma^+,\; u\in\Sigma^* \text{ is a prefix of } w'_{1}\}$. If $\mathcal{R}$ is not empty then apply the left shift rule of $\mathcal{R}$ with smallest $u$ (possibly empty), call the new formula $f'$, and we have $f \vdash f'$. If $\mathcal{R}$ is empty, $\vdash$ cannot be applied to $f$.

          \end{enumerate}

\end{enumerate}

From the above definition of $\vdash$ on a formula tape $f$, it is clear that (i) for usual tapes our new definition of $\vdash$ coincides with the old one, (ii) there is at most one formula tape $f'$ such that $f \vdash f'$, and (iii) the only cases where $\vdash$ is not defined on a formula tape are when the machine halts (usual step case) or no shift rule applies (shift rule case). Moreover, we get the following result:

\begin{lemma}\label{lem:vdashFormulaTapes} Let $f$ and $f'$ be formula tapes with $f \vdash f'$. Then, for all $t \in \mathcal{L}(f)$, there exists some $t' \in \mathcal{L}(f')$ such that $t \vdash^* t'$
    and, in case $f \vdash f'$ via usual step, $t \vdash^+ t'$.
\end{lemma}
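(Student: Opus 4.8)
The plan is to prove Lemma~\ref{lem:vdashFormulaTapes} by case analysis on the two ways $f \vdash f'$ can arise, following the definition of $\vdash$ on formula tapes given just above. In each case I will take an arbitrary $t \in \mathcal{L}(f)$, which means $t$ is a tape obtained from $f = w_1(r_1)\cdots w_{n+1} h w'_1(r'_1)\cdots w'_{m+1}$ by choosing multiplicities $k_1,\dots,k_n,k'_1,\dots,k'_m \geq 0$ and substituting $r_i^{k_i}$ for each $(r_i)$ (and similarly for the $r'_j$). I will exhibit the corresponding $t' \in \mathcal{L}(f')$ obtained from $f'$ by a matching choice of multiplicities, and verify $t \vdash^* t'$ (or $t \vdash^+ t'$ in the usual-step case).

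For the usual-step case, $\vdash$ only touches the wall terms $w_{n+1} h w'_1$, which appear literally (not under a star) in both $f$ and $t$. So for any $t \in \mathcal{L}(f)$ with chosen multiplicities, applying the single Turing machine step $w_{n+1} h w'_1 \vdash \tilde w_{n+1} \tilde h \tilde w'_1$ in place rewrites $t$ to exactly the tape $t'$ obtained from $f'$ with the \emph{same} multiplicities; here $t \vdash^+ t'$ holds because it is literally one step. One small point to check is that the rewrite is local: the rule applied depends only on $h$ and the symbol under the head (or on the $0^\infty$ adjacency if $w_{n+1}$ or $w'_1$ borders the blank region), and this is unchanged by inserting extra repeater copies elsewhere; this follows from the soundness remark (ii) about $\vdash$ being uniquely defined on tapes.

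For the shift-rule case, say the right shift rule case (the left case is symmetric). Here $h = \rhead s$, $w'_1$ is empty, and the applied rule is $u \rhead s (r'_1) \to (\tilde r) u \rhead s$ with $u$ the shortest suffix of $w_{n+1}$ for which a shift rule exists, coming from some $u \rhead s r'_1 \vdash^+ \tilde r u \rhead s$. In $t$, the factor $(r'_1)$ has been instantiated to $r_1'^{k}$ for some $k \geq 0$. The key sub-claim is the iteration argument already sketched before the lemma: from $u \rhead s r'_1 \vdash^+ \tilde r u \rhead s$ one gets $u \rhead s r_1'^{k} \vdash^* \tilde r^{k} u \rhead s$ by induction on $k$ (the base case $k=0$ is zero steps; the inductive step applies the rule once then uses the induction hypothesis, noting $u$ reappears unchanged so the rule is again applicable, and that the rest of the tape to the right of $r_1'^k$ and to the left of $u$ is untouched). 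Writing $w_{n+1} = v u$, this local rewrite turns $t = \dots (r_n\text{-part}) v u \rhead s r_1'^{k} w'_2 (r'_2)\cdots \dots$ into $\dots v \tilde r^{k} u \rhead s w'_2 (r'_2) \cdots \dots$, which is precisely a tape in $\mathcal{L}(f')$ with the same multiplicities for all other repeaters and multiplicity $k$ for the new $(\tilde r)$; and $\tilde r$ has the same length as $r'_1$, so the formula-tape shape constraints are preserved. This gives $t \vdash^* t'$ (possibly zero steps when $k=0$), consistent with the weaker conclusion stated for the shift-rule case.

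The main obstacle I anticipate is bookkeeping rather than conceptual difficulty: making sure the ``locality'' of each rewrite is stated cleanly so that inserting or removing repeater copies elsewhere on the tape genuinely does not interfere, and handling the boundary subtleties with $0^\infty$ (the leftmost wall $w_1$ and rightmost wall $w'_{m+1}$ may carry $0^\infty$, and the $0$-specific rewrite rules must be invoked when the head meets the blank region). I would dispatch this by noting that the definition of $\vdash$ on tapes is given by rewrite rules $v \to v'$ with $v, v' \in (\overline\Sigma \cup \Delta)^*$ applied to a decomposition $t = u v w$, so the surrounding context $u, w$ is provably irrelevant; the same observation underlies the iteration lemma for shift rules. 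With that locality principle in hand, both cases reduce to the two short inductions described above.
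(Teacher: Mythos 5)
Your proposal is correct and follows essentially the same route as the paper's proof: one literal step for the usual-step case, and for the shift-rule case, iterating the underlying derivation $u \rhead{s} r \vdash^+ \tilde{r}\, u \rhead{s}$ once per matched repetition of the repeater, yielding zero steps exactly when the repeater is matched zero times. The paper states this more tersely (it just counts the total as $nk$ steps for $n$ repetitions of a $k$-step rule), whereas you spell out the induction and the locality of the rewrite, which the paper leaves implicit.
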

\begin{proof}
    If $f'$ follows from $f$ by a usual step, then applying one usual step to $t$ yields $t'\in\mathcal{L}(f')$. If $f'$ follows from $f$ by a shift rule (of the form $u \rhead{s}(r) \vdash^k (\tilde{r})u\rhead{s}$ or $(r)\lhead{s}u \vdash^k \;\lhead{s}u(\tilde{r})$), and the corresponding repeater $(r)$ is used $n$ times in the match $t\ \in \mathcal{L}(f)$, we apply $nk$ steps to $t$ to obtain $t' \in \mathcal{L}(f')$. This amounts to a positive number of steps (justifying $\vdash^+$)  except in case of a shift rule applied $k=0$ times.
\end{proof}


\begin{example}\label{ex:shiftRules}
    Taking the machine of Example~\ref{ex:bouncer88427177}, we have the right shift rule $0 \rhead{\text{D}}(01) \to (11)0\rhead{\text{D}}$. Indeed, this is because: $0 \rhead{\text{D}}01 \vdash 0\lhead{\text{A}}11 \vdash 1\rhead{\text{B}}11 \vdash 11\rhead{\text{D}}1 \vdash 110\rhead{\text{D}}$, hence $0 \rhead{\text{D}}01 \vdash^* 110\rhead{\text{D}}$, giving the shift rule. Consider the tape formula of previous Example~\ref{ex:formulaTapes}: $f' = 0^\infty(111)1110 \lhead{\text{A}} 010101(01)10^\infty$. We have $f' \vdash^{13} 0^\infty (111)1111110110 \rhead{\text{D}} (01) 10^\infty$, at this point the head points at a repeater and the set of applicable right shift rules is $\mathcal{R} = \{ 0 \rhead{\text{D}}(01) \to (11)0\rhead{\text{D}},\; 10 \rhead{\text{D}}(01) \to (11)10\rhead{\text{D}},\; 110 \rhead{\text{D}}(01) \to (11)110\rhead{\text{D}}\}$, and following the definition of $\vdash$, we apply $0 \rhead{\text{D}}(01) \to (11)0\rhead{\text{D}}$ as it has the smallest left-hand side, giving: $0^\infty (111)111111011(11)0 \rhead{\text{D}} 10^\infty$.

\end{example}

\paragraph*{Aligning formula tapes.} One last tool that we need before characterising bouncers and proving that they do not halt (Theorem~\ref{th:bouncers}) is formula tape \textit{alignment} (Definition~\ref{def:alignment}): sometimes it is necessary to rewrite a formula tape in an equivalent, \textit{aligned} form in order for any shift rules to apply.

\begin{definition}[Alignment operator]\label{def:alignment}
    Take a formula tape, as given in~\eqref{math:formulaTapes}: $$f = w_1(r_1)w_2(r_2)\dots w_n(r_n) w_{n+1} h w'_1(r'_1)w'_2(r'_2)\dots w'_m(r'_m) w'_{m+1}$$

    The alignment operator $f \mapsto \mathcal{A}(f)$ moves repeaters away from the head $h$ by repeatedly applying any of the following rules until none apply anymore:
    \begin{enumerate}
        \item Replace $(r'_{j})v$ with $v(r)$ in $f$, if $r'_j v = v r$ with $v$ a nonempty prefix of $w'_{j+1}$, $r\in\Sigma^+$, and $1 \leq j \leq m$.

        \item Replace $v(r_{i})$ with $(r)v$ in $f$, if $v r_i = r v$ with $v$ a nonempty suffix of $w_{i}$, $r\in\Sigma^+$, and $1 \leq i \leq n$.
    \end{enumerate}

    Clearly, the order application of these rules does not matter, i.e.\ $\mathcal{A}$ is well-defined, and $\mathcal{A}(\mathcal{A}(f)) = \mathcal{A}(f)$.
\end{definition}

\begin{lemma}\label{lem:sameLanguage} For any formula tape $f$, $\mathcal{L}(\mathcal{A}(f)) = \mathcal{L}(f)$, i.e.\ both $f$ and $\mathcal{A}(f)$ represent the same set of tapes.
\end{lemma}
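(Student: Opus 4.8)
The plan is to reduce the statement to a one-step claim. By Definition~\ref{def:alignment}, $\mathcal{A}(f)$ is reached from $f$ by finitely many applications of rules (1) and (2) (finiteness being part of the assertion that $\mathcal{A}$ is well-defined), so it will suffice to show that a single application of rule (1) or of rule (2) does not change the described language; the lemma then follows by induction on the number of steps. By symmetry — rule (2) is the mirror image of rule (1) under reversing all words, i.e.\ swapping the two half-tapes — I would treat only rule (1).

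So fix one application of rule (1), at some index $j$ with prefix $v$, and write $w'_{j+1}=v\,w''$. This application is local: it replaces the factor $(r'_j)\,w'_{j+1}$ of~\eqref{math:formulaTapes} by $v\,(r)\,w''$ and leaves every other factor untouched. Since $\mathcal{L}$ of a formula tape is by definition the concatenation of the languages of its factors ($\mathcal{L}(w)=\{w\}$ for a literal wall, $\mathcal{L}((\rho))=\{\rho\}^{*}$ for a repeater) and concatenation of languages is associative, it will be enough to verify the local identity
\[
  \{(r'_j)^{k}v : k\ge 0\}\;=\;\{v\,r^{k} : k\ge 0\}.
\]
I would establish this by proving $(r'_j)^{k}v = v\,r^{k}$ for all $k\ge 0$ by induction on $k$, the only input being the rule's hypothesis $r'_j v = v r$: the base case $k=0$ is immediate, and $(r'_j)^{k+1}v = r'_j\bigl((r'_j)^{k}v\bigr) = r'_j(v\,r^{k}) = (r'_j v)\,r^{k} = (v r)\,r^{k} = v\,r^{k+1}$. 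This gives both inclusions at once. Comparing lengths in $r'_j v = v r$ shows $|r|=|r'_j|\ge 1$, so $r\in\Sigma^{+}$ and the rewritten expression is again a legitimate formula tape.

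I do not expect a genuine obstacle; the work is essentially bookkeeping. The two places to be careful are: (i) confirming that $\mathcal{L}$ really does factor through the factorisation of a formula tape, so that a local rewrite can be analysed in isolation; and (ii) the boundary cases of rule~(1) — in particular when $j=m$ and $w'_{j+1}$ carries the trailing $0^{\infty}$, or when $w'_j$, $w''$, or the adjacent walls become empty — where one must check that the rewritten object still satisfies all the constraints listed after~\eqref{math:formulaTapes}.
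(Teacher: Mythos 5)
Your proposal is correct and follows essentially the same route as the paper: reduce to a single application of rule (1), establish ${r'_j}^{k}v = v\,r^{k}$ for all $k$ from the hypothesis $r'_j v = v r$, conclude the two factors describe the same language, and handle rule (2) and multiple applications symmetrically/inductively. You merely spell out the induction and the bookkeeping (length preservation, locality of $\mathcal{L}$) that the paper leaves implicit.
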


\begin{proof}
    Consider an alignment rule as in case (1) of Definition~\ref{def:alignment}: replacing $(r'_{j})v$ with $v(r)$ if $r'_j v = v r$. Then, for all $n\in\mathbb{N},\; {r'_j}^n v = v r^n$, hence $(r'_{j})v$ and $v(r)$ describe the same language: $\mathcal{L}((r'_{j})v) = \mathcal{L}(v(r))$. Same for case (2) and for multiple applications of (1) and (2) in any order, hence we have $\mathcal{L}(\mathcal{A}(f)) = \mathcal{L}(f)$.
\end{proof}

\begin{example}\label{ex:alignment}
    Take $f=0^\infty(111)1111\rhead{\text{B}}(01)01010110^\infty$ for the machine given in Example~\ref{ex:bouncer88427177}. One can verify that no right shift rule applies to $f$ hence $\vdash$ does not apply to $f$. However, we have $ \mathcal{A}(f)=0^\infty(111)1111\rhead{B}010101(01)10^\infty$, $\mathcal{L}(\mathcal{A}(f)) = \mathcal{L}(f)$, and we can apply $\vdash$ to $ \mathcal{A}(f)$ by performing usual steps: $ \mathcal{A}(f) \vdash^{12} 0^\infty(111)1111110110 \rhead{D} (01)10^\infty$ and now the shift rule of Example~\ref{ex:shiftRules} can apply, giving $0^\infty(111)111111011(11)0 \rhead{\text{D}} 10^\infty$.
    Another alignment example is $f=0^\infty 101(11)0 \rhead{\text{D}} 10 (100) 11 0^\infty$ for which $\mathcal{A}(f) = 0^\infty 10(11)10 \rhead{\text{D}} 101(001)10^\infty$.
\end{example}

The above example shows that alignment (which preserves the set of recognised tapes) can allow to run a shift rule on $f'$ with $\mathcal{A}(f) \vdash^* f'$ when no shift rule was applicable directly to $f$. In fact, one can show that the alignment operator can only \textit{increase} the number of applicable shift rules: if a shift rule is applicable to $f$ then a shift rule applicable to $f'$ with $\mathcal{A}(f) \vdash^* f'$ can always be found. This motivates the introduction of the notation $f \vdash_\mathcal{A} f'$ which means that we have $\mathcal{A}(f) \vdash f'$.

Given two formula tapes $f$ and $f'$ we will say that $f'$ is a \textit{special case} of $f$, if $\mathcal{A}(f')$ can be obtained from $\mathcal{A}(f)$ by replacing subwords of the form $(r)$ in $\mathcal{A}(f)$ by $r^n(r)r^m$ for some $n,m\geq 0$ and $r\in\Sigma^+$. Note that because of alignment, $n=0$ (resp. $m=0$) if $(r)$ in $\mathcal{A}(f)$ is to the left (resp. to the right) of the head. If $f'$ is a special case of $f$ then $\mathcal{L}(f') \subseteq \mathcal{L}(f)$, and the authors conjecture that the converse is true under some mild additional assumptions\footnote{See \url{https://discuss.bbchallenge.org/t/186}.}.

We finally get to the main result of this section, which characterises bouncers formally -- a bouncer is any machine to which Theorem~\ref{th:bouncers} applies:

\begin{definition}[Bouncers]\label{def:bouncers}
    A bouncer is a Turing machine such that there exists a wall-repeater formula tape $f$ which satisfies:
    \begin{enumerate}
        \item There is a reachable tape $t\in\mathcal{L}(f)$, i.e.\ $0^\infty \rhead{\text{A}} 0^\infty \vdash^* t$
        \item There is a formula tape $f'$ such that $f \vdash_\mathcal{A}^+ f'$ and $f'$ is a special case of $f$
    \end{enumerate}
    We say that formula tape $f$ \textit{solves} the bouncer.
\end{definition}

\begin{theorem}\label{th:bouncers}
    A bouncer does not halt.
\end{theorem}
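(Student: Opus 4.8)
The plan is to lift the formula-tape cycle $f\vdash_{\mathcal A}^{+}f'$ into a genuine infinite computation of the machine from the all-0 tape, which immediately gives non-halting. First I would unfold the derivation as a chain $f=f_0\vdash_{\mathcal A}f_1\vdash_{\mathcal A}\cdots\vdash_{\mathcal A}f_k=f'$ with $k\ge 1$. Each link means $\mathcal A(f_i)\vdash f_{i+1}$, so by Lemma~\ref{lem:sameLanguage} (which gives $\mathcal L(\mathcal A(f_i))=\mathcal L(f_i)$) together with Lemma~\ref{lem:vdashFormulaTapes}, every $t\in\mathcal L(f_i)$ has a descendant $t'\in\mathcal L(f_{i+1})$ with $t\vdash^{*}t'$; composing across the $k$ links, every $t\in\mathcal L(f)$ reaches some $u\in\mathcal L(f')$. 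Since $f'$ is a special case of $f$, we have $\mathcal L(f')\subseteq\mathcal L(f)$, so every $t\in\mathcal L(f)$ reaches some $u\in\mathcal L(f)$. Iterating from the reachable witness $t_0\in\mathcal L(f)$ of Definition~\ref{def:bouncers}(1) yields an infinite sequence $t_0\vdash^{*}t_1\vdash^{*}t_2\vdash^{*}\cdots$ with every $t_i\in\mathcal L(f)$, and prefixing $0^{\infty}\rhead{\text{A}}0^{\infty}\vdash^{*}t_0$ gives what looks like an unbounded computation of the machine from its initial configuration.

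The one gap in that argument -- and the main obstacle -- is that Lemma~\ref{lem:vdashFormulaTapes} only guarantees $\vdash^{+}$ (as opposed to $\vdash^{*}$) for links that are \emph{usual steps}; a shift-rule link can collapse to zero machine steps on a member tape in which the relevant repeater is instantiated as the empty word, so a priori the whole sequence could stagnate. I would close this by proving that the cycle $f\vdash_{\mathcal A}^{+}f'$ must contain at least one usual step. The ingredients are: a shift-rule link leaves the head's direction (and state) unchanged and, being applicable only when the head is flush against a repeater, it consumes exactly that repeater while inserting one fresh repeater on the \emph{opposite} side of the head; and neither the alignment operator $\mathcal A$ of Definition~\ref{def:alignment} (it only exchanges repeater blocks with adjacent wall pieces, moving them away from the head) nor the special-case relation (it only pads existing repeaters) alters the number of repeater blocks lying on each side of the head. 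Consequently, if the cycle had no usual step, all $k\ge 1$ links would be shift rules of the head's constant direction -- say, all right shift rules -- each strictly decreasing the number of repeater blocks to the right of the head, so $f'$ would have strictly fewer right-of-head repeater blocks than $f$, contradicting that $f'$ is a special case of $f$. Hence some link is a usual step, and Lemma~\ref{lem:vdashFormulaTapes} upgrades the corresponding block to $t_i\vdash^{+}t_{i+1}$, so every iteration of the cycle performs at least one machine step.

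Putting it together, $0^{\infty}\rhead{\text{A}}0^{\infty}\vdash^{*}t_0\vdash^{+}t_1\vdash^{+}t_2\vdash^{+}\cdots$ exhibits defined configurations of the machine at an unbounded set of time steps; since the computation from the initial configuration is deterministic and $\vdash$ is undefined only at a halting configuration, the machine never halts. (Equivalently: were it to halt after $N$ steps, each full cycle would lower the remaining step count by at least one, eventually forcing a negative remainder.) I expect the only delicate part to be the bookkeeping behind the ``at least one usual step'' claim -- carefully tracking repeater-block counts through $\mathcal A$, through shift rules, and through the special-case relation; everything else is a direct assembly of Lemmas~\ref{lem:vdashFormulaTapes} and~\ref{lem:sameLanguage}.
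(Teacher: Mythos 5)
Your proposal is correct and follows essentially the same route as the paper's proof: unfold the chain, use Lemmas~\ref{lem:vdashFormulaTapes} and~\ref{lem:sameLanguage} to lift it to actual tapes, iterate via $\mathcal{L}(f')\subseteq\mathcal{L}(f)$, and rule out a stagnant all-shift-rule cycle by observing that shift rules monotonically change the repeater count on one side of the head while $\mathcal{A}$ and the special-case relation preserve it. The paper phrases the counting argument as incrementing the repeaters \emph{behind} the head rather than decrementing those in front, but this is the same observation.
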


\begin{proof}
    Unraveling $f \vdash_\mathcal{A}^+ f'$ gives $n>0$ and $f=f_0, \dots, f_n$ with $f_n = f'$ and $\mathcal{A}(f_i) \vdash f_{i+1}$ for $0 \leq i < n$.
    It follows from Lemma~\ref{lem:vdashFormulaTapes} and Lemma~\ref{lem:sameLanguage} that there exist tapes $t_0, \dots, t_n$, such that $t_0 = t$ and $t_i\in \mathcal{L}(f_i)$ for $0 \leq i \leq n$, and $t_i \vdash^* t_{i+1}$ for $0 \leq i < n$, giving $t_0 \vdash^* t_n$. Moreover, $t_0 \vdash^+ t_n$ if the any of the $\mathcal{A}(f_i) \vdash f_{i+1}$ relations are via usual steps.

    Indeed, this must be the case. Suppose instead we had a sequence of shift rules. Each one would preserve the direction of the head, and increment the number of repeaters behind the head in the formula tape. However, these properties are unchanged by passing from a formula tape $f$ to $\mathcal{A}(f)$ or a special case of $f$. It would follow that $f_0$ has strictly more repeaters behind the head than $f_0$, a contradiction.

    Since $f_n$ is a special case of $f_0$, we have $\mathcal{L}(\mathcal{A}(f_n)) \subseteq \mathcal{L}(\mathcal{A}(f_0))$, and using Lemma~\ref{lem:sameLanguage}, we have $\mathcal{L}(f_n) \subseteq \mathcal{L}(f_0)$ and thus, $t_n \in \mathcal{L}(f_0)$. We can repeat this construction indefinitely and yield an infinite sequence of tapes $(t_n)_{n\in\mathbb{N}}$ such that $t \vdash^+ t_i$ for all $i\in\mathbb{N}$, hence the machine does not halt.
\end{proof}

\begin{definition}[Bouncer certificate]\label{def:bouncer-certificate}
    For a given bouncer, a bouncer certificate consists at least of the following:
    \begin{enumerate}
        \item The formula tape $f$ of Definition~\ref{def:bouncers} that solves the bouncer.
        \item The time step at which tape $t$ such that $t\in\mathcal{L}(f_0)$ is reached from  $0^\infty \rhead{\text{A}} 0^\infty$.
        \item The number of \textit{macro steps} $n$ such that $f \vdash_\mathcal{A}^n f'$ with $f'$ special case of $f$, where a macro step on a formula tape consists in applying $\vdash_\mathcal{A}$, i.e.\ alignment followed by $\vdash$ (one usual step or one shift rule step).
    \end{enumerate}
    Given such certificate, one can verify that the machine is a bouncer by applying Theorem~\ref{th:bouncers} or that the certificate is erroneous. Additional information, such as the list of used shift rules can be added to the certificate for convenience.
\end{definition}

\begin{example}\label{ex:bouncerTheory}
    The machine of Example~\ref{ex:bouncer88427177} (Figure~\ref{fig:bouncers}~(d)) that is used in our series of examples for this section, is a bouncer. Indeed, we have $0^\infty \rhead{\text{A}} 0^\infty \vdash^{64} 0^\infty 11111101100 \rhead{\text{D}} 0^\infty$. The tape $t=0^\infty \vdash^{64} 0^\infty 11111101100 \rhead{\text{D}} 0^\infty$ is in the language the following formula tape:
    $$f_0 = 0^\infty (111)1110(11)00\rhead{\text{D}}0^\infty$$
    At this point, and for the next 25 usual steps alignment does not affect the formulas, and we get: $$f_{25} = 0^\infty (111) 1110 (11) \lhead{\text{A}} 01010110^\infty$$
    One shift rule gives:
    $$ f_{26} = 0^\infty (111) 1110  \lhead{\text{A}} (01) 01010110^\infty$$
    After alignment:
    $$ \mathcal{A}(f_{26}) = 0^\infty (111) 1110  \lhead{\text{A}} 010101(01)10^\infty$$
    From there, $\mathcal{A}(f_{26}) \vdash f_{27}$ with:
    $$ f_{27} = 0^\infty (111) 1111  \rhead{\text{B}} 010101(01)10^\infty$$
    After 12 usual steps, not affected by alignment, we arrive at:
    $$f_{39} = 0^\infty (111)1111110110\rhead{\text{D}}(01)10^\infty$$
    One shift rule gives:
    $$f_{40} = 0^\infty (111)111111011(11)0\rhead{\text{D}}10^\infty$$
    Aligning gives:
    $$\mathcal{A}(f_{40}) = 0^\infty (111)1111110(11)110\rhead{\text{D}}10^\infty$$
    Finally, from there $\mathcal{A}(f_{40}) \vdash f_{41}$ with one usual step:
    $$f_{41} =\mathcal{A}(f_{41}) = 0^\infty (111)1111110(11)1100\rhead{\text{D}}0^\infty$$

    Now, $f_{41}$ is a special case of $f_{0}$ because $\mathcal{A}(f_{41})$ differs from $f_0=\mathcal{A}(f_0)$ only by including one repetition of repeaters $(111)$ and $(11)$ in the walls directly to their right. The assumptions of Theorem~\ref{th:bouncers} hold and our Turing machine is a bouncer: it does not halt.

    A bouncer certificate (Definition~\ref{def:bouncer-certificate}) for this machine is $f_0 = 0^\infty (111)1110(11)00\rhead{\text{D}}0^\infty$, time step $64$ at which $0^\infty 11111101100 \rhead{\text{D}} 0^\infty \in \mathcal{L}(f_0)$ is reached and $41$ macro steps which transform $f_0$ into special case $f_{41}$ under successive $\vdash$ and alignement applications.
\end{example}

Note that Cyclers (Section~\ref{sec:cyclers}) and Translated Cyclers (Section~\ref{sec:translated-cyclers}) are special cases of bouncers, as they satisfy Theorem~\ref{th:bouncers}.

\subsubsection{Linear-quadratic growth}

For a given formula tape, using the notation of Equation~\ref{math:formulaTapes}: $$f=w_1(r_1)w_2(r_2)\dots w_n(r_n) w_{n+1} h w'_1(r'_1)w'_2(r'_2)\dots w'_m(r'_m) w'_{m+1}$$ call $C_f(k) \in \mathcal{L}(f)$ the tape where each repeater is used exactly $k \in \N$ times: $$C_f(k)=w_1 r_1^k w_2 r_2^k\dots w_n r_n^k w_{n+1} h w'_1 {r'_1}^k w'_2 {r'_2}^k \dots w'_m {r'_m}^k w'_{m+1}$$

We denote an infinite sequence $u_0, u_1, \dots u_n, \dots$ with $n\in\N$ by $(u_n)_{n\in\N}$. We define the difference operator $D: \Z^\N \to \Z^\N$ via $D((u_n)_{n\in\N})=(u_{n+1}-u_{n})_{n\in\N}$. The \textit{length} of a tape $t$ is the number of symbols of $\Sigma$ in $t$, i.e.\ ignoring head and $0^\infty$.

\newcommand{\sync}[1]{\operatorname{sync}(#1)}

Theorem~\ref{th:bouncers} characterises bouncers but we lack a practical criterion for recognising a bouncer from its sequence of successive tapes starting from $0^\infty \rhead{\text{A}} 0^\infty$. We make a step in that direction by showing that if a bouncer solved by formula tape $f$ that is not a cycler (Section~\ref{sec:cyclers}), then we can construct from $f$ a new formula tape $\sync{f}$, that also solves the bouncer but such that tapes $C_{\sync{f}}(k)$, where all repeaters are synchronously repeated $k$ times, are reached by the machine for all successive $k\in\mathbb{N}$. Importantly for being detected in practice, the length of tapes $C_{\sync{f}}(k)$ grows linearly in quadratic time:

\begin{theorem}[Linear-quadratic growth]\label{th:linquad}
    Let $M$ be a bouncer that is not a cycler, solved by some formula tape $f$. Call $(t_n)_{n\in\N}$ the sequence of tapes that it visits starting from $t_0 = 0^\infty \rhead{\text{A}} 0^\infty$. Call $l_n$ the length of $t_n$. Then, there is a formula tape called $\sync{f}$, obtained from $f$, that also solves the bouncer such that there is an extraction function $g: \mathbb{N} \to \mathbb{N}$ with $g(n) \geq n$ for all $n\in\N$, satisfying:

    \begin{enumerate}
        \item For all $n\in\N$, $t_{g(n)} = C_{\sync{f}}(n)$.
        \item $(l_{g(n)})_{n\in\N}$ is an arithmetic progression, i.e.\ $D((l_{g(n)})_{n\in\N}) = (K)_{n\in\N}$ for some constant $K\in\N$.
        \item $(g(n))_{n\in\N}$ is a quadratic progression, i.e.\ $D(D((g(n))_{n\in\N})) = (K')_{n\in\N}$ for some constant $K'\in\N$.
    \end{enumerate}
\end{theorem}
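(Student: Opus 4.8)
The plan is to iterate the macro-step sequence supplied by Definition~\ref{def:bouncers}(2), to show that after finitely many iterations it acts on the formula tape in a self-similar way (each repeater's number of copies growing by a fixed amount per iteration), and then to repackage the repeaters so that this uniform growth becomes exactly the synchronous family $C_{\sync f}(\cdot)$.

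\textbf{Iterating the macro-step sequence.} Write $f = f^{(0)}$ and let $f^{(0)} \vdash_\mathcal{A}^{+} f^{(1)}$ with $f^{(1)}$ a special case of $f^{(0)}$. First I would show that this sequence can be replayed on $f^{(1)}$: a special case only lengthens the walls of $f^{(0)}$ by inserting copies of the adjacent repeaters, so during the next macro-step sequence the head's itinerary is the same as before, except that where it now crosses such an inserted block $r^{k}$ sitting inside a wall it performs a run of usual steps instead of a shift rule; crossing such a block in a ``shift-rule state'' is state-periodic (it is the derivation $u\rhead{s}r \vdash^{+} \tilde r\,u\rhead{s}$ unrolled copy by copy, turning $r^{k}$ into $\tilde r^{\,k}$), so the net effect on the formula tape is unchanged and we again get $f^{(j)} \vdash_\mathcal{A}^{+} f^{(j+1)}$ with $f^{(j+1)}$ a special case of $f^{(j)}$. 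Because the only data varying between the $f^{(j)}$ are the wall lengths, while the repeater words, the head state and the ``core walls'' stay fixed, the per-iteration growth eventually stabilises: there are $N$ and integers $d_i \ge 0$ such that, for $j \ge N$, iteration $j$ inserts exactly $d_i$ copies of $r_i$ next to repeater $i$. By the argument in the proof of Theorem~\ref{th:bouncers} each such iteration must contain a usual step (a sequence of shift rules alone strictly changes the number of repeaters behind the head, which passing to a special case does not), so each iteration is $\vdash^{+}$ on tapes; and since $M$ is not a cycler, some $d_i > 0$.

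\textbf{Constructing $\sync f$ and $g$.} Let $t \in \mathcal{L}(f)$ be the reachable tape of Definition~\ref{def:bouncers}(1), with $\kappa_i$ copies of $r_i$. Applying Lemma~\ref{lem:vdashFormulaTapes} and Lemma~\ref{lem:sameLanguage} along $f^{(0)} \vdash_\mathcal{A}^{+} f^{(1)} \vdash_\mathcal{A}^{+} \cdots$ produces tapes $t = t^{(0)} \vdash^{+} t^{(1)} \vdash^{+} \cdots$ with $t^{(j)} \in \mathcal{L}(f^{(j)})$; since each macro step preserves every repeater's multiplicity, the number of copies of $r_i$ in $t^{(j)}$ is $\kappa_i$ plus the amount inserted up to iteration $j$, which for $j \ge N$ equals $c_i + (j-N)\,d_i$ for a constant $c_i$. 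Absorb the dead repeaters ($d_i = 0$, whose count is frozen at $c_i$) into the walls by writing out their copies literally, group each surviving repeater into the block $\rho_i = r_i^{d_i}$, and let $\sync f$ be the formula tape with repeaters $\rho_i$ --- each placed at the site where iteration-growth inserts copies of $r_i$ --- and fixed walls chosen so that $C_{\sync f}(0) = t^{(N)}$; then $C_{\sync f}(n) = t^{(N+n)}$ for all $n$, since using each $\rho_i$ exactly $n$ times contributes $c_i + n\,d_i$ copies of $r_i$. By the first paragraph the macro-step sequence applies to $\sync f$ and carries it to a special case of itself, so $\sync f$ solves the bouncer, and $t^{(N)}$ is reachable, so $0^\infty \rhead{\text{A}} 0^\infty \vdash^{*} C_{\sync f}(0)$. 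Define $g(n)$ to be the time step at which $C_{\sync f}(n)$ is reached; as each iteration is $\vdash^{+}$, $g$ is strictly increasing, so $g(n) \ge g(0) + n \ge n$, and point~1 holds by construction.

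\textbf{Growth estimates, and the main difficulty.} For point~2, $C_{\sync f}(n)$ has length $W + n\sum_i d_i|r_i|$ with $W$ the fixed total wall length, so $(l_{g(n)})_{n\in\N}$ is an arithmetic progression and $D((l_{g(n)})_{n\in\N}) = (K)_{n\in\N}$ with $K = \sum_i d_i|r_i| > 0$. For point~3, count the TM steps in one iteration applied to $C_{\sync f}(n)$: by Lemma~\ref{lem:vdashFormulaTapes} each usual macro step costs one TM step, while a shift-rule macro step on a repeater used $n$ times costs $n$ times the (constant) cost of that shift rule, and the number of usual macro steps is itself constant because the walls of $\sync f$ do not grow with $n$. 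Hence $g(n+1) - g(n) = U + nX$ for constants $U, X \in \N$, so $D((g(n))_{n\in\N}) = (U+nX)_{n\in\N}$ and $D(D((g(n))_{n\in\N})) = (X)_{n\in\N}$, proving point~3. The main obstacle is the replay claim of the first paragraph: that the macro-step sequence genuinely iterates --- staying applicable once the walls are fattened --- and that its net effect is still to produce a special case with eventually-constant growth. The delicate point is exactly the substitution of a run of usual steps for a shift rule across an inserted block: one must verify this run reproduces the shift rule's rewrite precisely, that no alignment obstruction or now-inapplicable shift rule appears, and hence that the two formula-level computations pass through the same tapes. Everything after that is routine bookkeeping with affine functions.
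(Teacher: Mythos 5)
Your proposal follows the paper's proof in all essentials: both determine per-iteration growth increments for each repeater, build $\sync{f}$ by absorbing the non-growing repeaters into the walls and taking powers $r_i^{p_i}$ (your $\rho_i=r_i^{d_i}$) as the new repeaters, define $g(n)$ as the arrival time of $C_{\sync{f}}(n)$, and obtain $g(n+1)-g(n)=M_u+nM'_s$ from the observation that a shift rule applied to a repeater matched $n$ times costs $n$ times its base cost while usual steps cost a constant. The one real divergence is how the iteration is justified. You replay the macro-step sequence at the formula level, producing $f^{(0)}\vdash_\mathcal{A}^+ f^{(1)}\vdash_\mathcal{A}^+\cdots$, correctly flag this replay as the delicate point, and then insert an ``eventual stabilisation'' step (the constants $N$ and $d_i$) which as stated is not justified: nothing in ``only the wall lengths vary'' forces the increments to become constant. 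The paper avoids both issues by iterating at the tape level, exactly as in the proof of Theorem~\ref{th:bouncers}: the single fixed derivation $f\vdash_\mathcal{A}^+ f'$ is pushed through Lemma~\ref{lem:vdashFormulaTapes} and Lemma~\ref{lem:sameLanguage} starting from the current concrete tape, which lands in $\mathcal{L}(f')\subseteq\mathcal{L}(f)$, so the same formula-level derivation is reapplied verbatim; since shift rules preserve repetition counts, every iteration adds exactly the same $p_i$ copies of $r_i$, the increments are constant from the first iteration, and no $N$ is needed. That said, for point~3 the paper does quietly assert that $\sync{f}$ itself reaches a special case of itself in a constant number $M$ of macro steps, which is a replay-type claim of precisely the kind you identify; neither your sketch nor the paper proves it in detail, so your instinct about where the remaining work lies is sound --- it just attaches to $\sync{f}$ rather than to the chain $f^{(j)}$.
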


\begin{proof}
    For simplicity of notations, we suppose that there are no repeaters after the tape head in $f$, as we would apply exactly the same transformations after the head. Using Equation~\eqref{math:formulaTapes}, we write $f=w_1(r_1)w_2(r_2)\dots w_n(r_n) w_{n+1} h w$ with $n \geq 1$, $w_i \in \Sigma^*$ for $2 \leq i \leq n+1$, $w_1,w \in \overline{\Sigma}^*$ and $r_i \in \Sigma^+$ for $1 \leq i \leq n$ and $h\in\Delta$. Without loss of generality we suppose that $f$ is aligned (otherwise we just apply $\mathcal{A}$ to it).

    Because $f$ solves the bouncer, by Definition~\ref{def:bouncers}, the machine reaches tape $t$ of the form $t = w_1 r_1^{k_1} w_2 r_2^{k_2} \dots r_n^{k_n} w_{n+1} h w$ with $k_1, \dots, k_n \in \N$. By Theorem~\ref{th:bouncers}, we have $t \vdash^+ t'$ with $t' \in \mathcal{L}(f')$ and $t' = w_1 r_1^{k_1 + p_1} w_2 r_2^{k_2 + p_2} \dots r_n^{k_n + p_n} w_{n+1} h w$ with $p_1, \dots, p_n \in \N$ and same $w_i$ and $r_i$ as $f$ because $f$ is aligned. Note that $p_i$ is positive because shift rules preserve the size of repeaters. Because $t'\in\mathcal{L}(f') \subseteq \mathcal{L}(f)$ we can repeat this indefinitely and get $t \vdash^+ w_1 r_1^{k_1 + Np_1} w_2 r_2^{k_2 + Np_2} \dots r_n^{k_n + Np_n} w_{n+1} h w$ for all $N \in \N$.

    Hence, define $\sync{f} = w_1 r_1^{k_1} (r^{p_1}) w_2 r_2^{k_2} (r^{p_2}) \dots w_n r_n^{k_n} (r^{p_n}) w_{n+1} h w$. When $p_i=0$, then $r^{p_i} = \varnothing$ and we discard it. All $p_i$ cannot be $0$, otherwise the machine is a cycler which is excluded. Altogether, we can write: $\sync{f} = \tilde{w_1} (\tilde{r}_1) \tilde{w}_2 (\tilde{r}_2) \dots \tilde{w}_m (\tilde{r}_m) \tilde{w}_{m+1} h w$ with $2 \leq m \leq n$ and nonempty $\tilde{r}_i$, which is a valid formula tape which also solves the bouncer. By construction, the machine will reach tapes of the form $C_{\tilde{f}}(n)$ for all $n\in\N$ at increasing time steps. We immediately get Point 2 because $C_{\tilde{f}}(n+1)$ contains $K = \sum_{i=1}^n |\tilde{r}_i|$ more symbols than $C_{\tilde{f}}(n)$, which is a constant.

    Concerning Point 3, call $g(n)$ the time step at which $C_{\tilde{f}}(n)$ is reached. By construction, $g(n) \geq n$. The number of macro steps $M \geq 1$ such that we get $\sync{f} \vdash_\mathcal{A}^M f''$ with $f''$ special case of $\sync{f}$ is a constant. We write $M = M_u + M_s$ with $ M_u$ the number of usual steps and $M_s$ the number of shift rule applications. Call $M'_s$ the number of Turing machine steps taken in total to perform the base case (i.e.\ the steps needed to prove that a shift rule is correct) of each of the $M_s$ shift rules. When processing $C_{\tilde{f}}(n)$, each shift rule is applied once each to the $n$ repetitions of each repeater (because shift rules preserve the number of repetitions), hence we get: $g(n+1) - g(n) = M_u + n M'_s$, which means $D(D((g(n))_{n\in\N})) = (M'_s)_{n\in\N}$ which is constant, as needed.

\end{proof}


\begin{example}\label{ex:linquad}
    Using Example~\ref{ex:bouncerTheory}, we get that $f = 0^\infty (111)1110(11)00\rhead{\text{D}}0^\infty$ solves the bouncer given in Example~\ref{ex:bouncer88427177}. Using Theorem~\ref{th:linquad}, we have $\sync{f}= 0^\infty (111)1111110(11)11 00\rhead{\text{D}}0^\infty$.
    We have $\sync{f} \vdash_\mathcal{A}^{M} f'$ with $M = 47$ and $f'$ special case of $\sync{f}$. Note that $47 > 41$ found in Example~\ref{ex:bouncerTheory} because $\sync{f}$ is a bit bigger than $f$.

    We can decompose $M = M_u + M_s$ with $M_u = 45$ usual steps and $M_s = 2$ shift rule applications. The shift rules that are used are $0 \rhead{\text{D}}(01) \to (11)0\rhead{\text{D}}$ which takes $4$ Turing machine steps to execute (Example~\ref{ex:shiftRules}) and $(11) \lhead{\text{A}}\; \to \;  \lhead{\text{A}} (01)$ which takes $2$, hence $M'_s = 6$.

    Calling $g(n)$ the time step at which $C_{\sync{f}}(n)$ is reached, we have $g(0)= 64$. Using the proof of Theorem~\ref{th:linquad}, $g(n+1)-g(n) = M_u + n M'_s = 45 + 6n$, hence, $g(n) = 3n^2 + 42n + 64$. We also have $l_{g(n)} = 11 + 5n$. One can check by simulation that tapes $C_{\sync{f}}(n)$ are reached at time steps $g(n)$.

\end{example}
\newpage
\subsection{Deciding bouncers in practice}\label{sec:bouncers-algo}

In this section, we use the theory presented above in order to give a practical and efficient algorithm for deciding bouncers in practice.

\subsubsection{Formula tape fitting}

We introduce $\mathcal{A}_r$, the \textit{right-alignment} operator on formula tapes which only applies rule of type 1 in Definition~\ref{def:alignment} to all repeaters, both before and after the formula's head, bubbling them to the right.

In this section, we give a greedy algorithm (Algorithm~\ref{alg:greedy-formula-tape-fitting}) that fits a formula tape $f$ only given three tapes: $C_f(0)$, $C_f(1)$ and $C_f(2)$. The algorithm is guaranteed to work on right-aligned formula tapes with nonempty \textit{intermediary walls}, which are all walls but the first and last one, see Theorem~\ref{th:greedy-formula-tape-fitting}. First, we show that the precondition of having nonempty intermediary walls is without loss of generality, i.e.\ that all bouncers can be solved using such formulas. We prove this result in Lemma~\ref{lem:formula-tapes-WLOG}, using two technical lemmas, Lemmas~\ref{lem:right-alignement}~and~\ref{lem:word-theory}.

\begin{lemma}\label{lem:right-alignement}
    Assume that $M$ is a bouncer solved by some formula tape $f$, then:
    \begin{enumerate}
        \item $\mathcal{A}_r(f)$ also solves the bouncer.
        \item Any special case $f'$ of $f$ also solves the bouncer.

    \end{enumerate}
\end{lemma}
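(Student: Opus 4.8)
The plan is to check, for both $\mathcal{A}_r(f)$ and an arbitrary special case $f'$ of $f$, the two clauses of Definition~\ref{def:bouncers}: that some reachable tape lies in the language, and that the formula macro-reduces (via $\vdash_\mathcal{A}^+$) to one of its own special cases.

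\textbf{Part 1 ($\mathcal{A}_r$).} The operator $\mathcal{A}_r$ performs only ``conjugating slides'' of repeaters within their neighbouring walls — rewrites of the shape $(r)v\leftrightarrow v(r')$ with $rv=vr'$, exactly as in Definition~\ref{def:alignment} — so by the computation in the proof of Lemma~\ref{lem:sameLanguage} it preserves the recognised language: $\mathcal{L}(\mathcal{A}_r(f))=\mathcal{L}(f)$. This gives clause~1 for $\mathcal{A}_r(f)$ at once, since the reachable $t\in\mathcal{L}(f)$ then lies in $\mathcal{L}(\mathcal{A}_r(f))$. For clause~2 I would argue that $\mathcal{A}$ absorbs $\mathcal{A}_r$, i.e. $\mathcal{A}(\mathcal{A}_r(f))=\mathcal{A}(f)$: $\mathcal{A}_r(f)$ differs from $f$ only by sliding each repeater inside its wall, and $\mathcal{A}$ sends each repeater to the unique extreme position reachable by such slides (leftmost of its wall when left of the head, rightmost when right of it); by the confluence of this slide-rewriting on a fixed repeater skeleton, aligning $\mathcal{A}_r(f)$ and aligning $f$ yield the same formula. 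Since both the relation $\vdash_\mathcal{A}$ and the ``special case'' relation depend on their arguments only through their $\mathcal{A}$-normal forms, the witness $f\vdash_\mathcal{A}^{+}f_{*}$ with $f_{*}$ a special case of $f$ is simultaneously a witness $\mathcal{A}_r(f)\vdash_\mathcal{A}^{+}f_{*}$ with $f_{*}$ a special case of $\mathcal{A}_r(f)$. Hence $\mathcal{A}_r(f)$ solves the bouncer.

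\textbf{Part 2 (special cases).} Take $f'$ a special case of $f$; WLOG $f$ and $f'$ aligned, with $\mathcal{A}(f')$ obtained from $\mathcal{A}(f)$ by replacing each repeater block $(r_i)$ with $r_i^{n_i}(r_i)r_i^{m_i}$ (the inserted copies lying on the side of $(r_i)$ away from the head). The core is a \emph{replay lemma}: if $g$ is a special case of $f$ and $f\vdash_\mathcal{A}^{+}h$, then $g\vdash_\mathcal{A}^{+}h'$ for a special case $h'$ of $h$ whose inserted multiplicities are the coordinatewise sum of those of $g$ over $f$ and of $h$ over $f$. Granting it, clause~2 for $f'$ is immediate: apply the replay lemma to $g=f'$ and the solving witness $f\vdash_\mathcal{A}^{+}f_{*}$, obtaining $f'\vdash_\mathcal{A}^{+}f'_{*}$ with $f'_{*}$ a special case of $f'$. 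For clause~1 (reachability) I would iterate: the replay lemma with $g=f_{*}$ and $h=f_{*}$ gives $f_{*}\vdash_\mathcal{A}^{+}f_{*}^{(2)}$ carrying twice the extra multiplicities, and inductively $f\vdash_\mathcal{A}^{+}f_{*}^{(N)}$ for all $N$, where $f_{*}^{(N)}$ adds $N$ times the base multiplicities to each repeater. By Lemma~\ref{lem:vdashFormulaTapes}, from the reachable $t\in\mathcal{L}(f)$ the machine reaches some $t^{(N)}\in\mathcal{L}(f_{*}^{(N)})$; once $N$ is large enough that $f_{*}^{(N)}$ dominates $f'$ (every multiplicity of $f_{*}^{(N)}$ exceeds the corresponding one of $f'$), $\mathcal{L}(f_{*}^{(N)})\subseteq\mathcal{L}(f')$, so $t^{(N)}$ is a reachable tape in $\mathcal{L}(f')$. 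The one loose end is a repeater that this particular loop never grows while $f'$ inflates it; I would neutralise it by first replacing $f$ with a synchronised solving formula $\operatorname{sync}(f)$ (Theorem~\ref{th:linquad}), for which every repeater's per-loop growth is positive, and running the above against $f'$ relative to that formula (together with part~2's own first claim applied to special cases of $\operatorname{sync}(f)$).

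\textbf{Main obstacle.} Essentially all the difficulty is in making the replay lemma precise. For macro-steps that are ordinary Turing steps the replay is routine — the transition is local and the inserted copies sit away from the head — but for shift-rule macro-steps the bookkeeping is delicate: in $g$ the head may face a block $r_i^{n_i}$ of inserted copies before reaching the repeater $(r_i)$, so a single shift-rule step of $f$ is simulated in $g$ by first unrolling several usual-step passes through that block and then performing a shift rule, all while controlling how the alignment operator $\mathcal{A}$ and the ``smallest $u$'' clause in the definition of $\vdash$ on formula tapes behave under this unrolling. Closing the induction — maintaining at each macro-step that $g$'s current formula is exactly the prescribed special case of $f$'s — is the crux of the proof.
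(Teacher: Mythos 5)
Your approach matches the paper's in its essentials. Part~1 is proved in the paper exactly as you propose: $\mathcal{A}(\mathcal{A}_r(f))=\mathcal{A}(f)$, so the $\vdash_\mathcal{A}$-chain and the special-case witness carry over verbatim. For part~2, the paper's entire argument is the one-sentence version of your ``replay lemma'': aligning $f'$ yields $\mathcal{A}(f)$ with each repeater flanked by $r^{n+m}$ on the side away from the head, and the shift rules are said to ``apply similarly,'' the inserted segments being handled by simulating the same shift rule through usual steps. So the bookkeeping you single out as the crux --- how one shift-rule macro-step of $f$ decomposes in $f'$ into usual-step passes through the inserted block followed by a shift rule, and how this interacts with $\mathcal{A}$ and the smallest-$u$ clause --- is not carried out in the paper either; it is simply asserted. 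Where you genuinely go beyond the paper is clause~1 of Definition~\ref{def:bouncers} for $f'$: since $\mathcal{L}(f')\subseteq\mathcal{L}(f)$ is in general strict, a reachable tape must be re-exhibited, and the paper's proof is silent on this, whereas you supply the iteration argument $f\vdash_\mathcal{A}^{+}f_{*}^{(N)}$ combined with Lemma~\ref{lem:vdashFormulaTapes}. That argument does have the loose end you flag: a repeater whose per-loop increment $p_i$ is zero never grows, so no iterate dominates an $f'$ that inflates that repeater beyond the number of copies present in the one known reachable tape, and routing through $\sync{f}$ does not obviously repair this, since $\sync{f}$ deletes exactly those repeaters (and Theorem~\ref{th:linquad} excludes cyclers besides). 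This is best viewed as a gap in the lemma as stated rather than in your write-up; the paper avoids it only by not addressing reachability at all.
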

\begin{proof}

    \begin{enumerate}
        \item We have $\mathcal{A}(\mathcal{A}_r(f)) = \mathcal{A}(f)$ hence, Theorem~\ref{th:bouncers} will reach $\mathcal{A}(f)$ after one application of $\mathcal{A}$ in any case and the bouncer will get solved in the same way.
        \item By definition of special case $f'$ can be constructed from $f$ by replacing any $(r)$ by $r^n(r)r^m$ for some $n,m\in\N$. Aligning $f'$ will have the same effect as aligning $f$ with the addition that repeaters before the head will look like $(r)r^{n+m}$  and after the head $r^{n+m}(r)$. Then, shift rules will apply similarly to $f'$ and $f$, with the addition in $f'$ that segments of the form $r^{n+m}$ will be handled by simulating the same shift rule as for $(r)$ through usual steps. Hence, $f'$ solves the bouncer.
    \end{enumerate}

\end{proof}

\begin{lemma}\label{lem:word-theory}
    Let $a,b \in \Sigma^+$ be two nonempty words such that we have $a[i \text{ mod } |a|] = b^\infty [i]$ for all $i\in\N$, with $b^\infty [i]$ being the character at position $i$ in the infinite concatenation of word $b$ with itself. Call $k=\operatorname{gcd}(|a|,|b|)$. Then, $a$ and $b$ are powers of the same word $c$ such that $a = c^m$ and $b = c^n$, with $c$ the first $k$ characters of $a$ and $b$, and $m = |a|/k$ and $n = |b|/k$.
\end{lemma}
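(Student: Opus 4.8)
The plan is to recognize that the hypothesis is exactly the equality of the two infinite words $a^\infty := aaa\cdots$ and $b^\infty$ (both indexed by $\N$), and then to extract the conclusion from a one-sided version of the Fine--Wilf periodicity lemma. I will work with the notion of a \emph{period}: an infinite word $w$ has period $p \geq 1$ if $w[i] = w[i+p]$ for all $i \in \N$. The word $a^\infty$ trivially has period $|a|$, and since the hypothesis says precisely $a^\infty = b^\infty$, it also has period $|b|$ (both $|a|,|b|\geq 1$ as $a,b$ are nonempty).

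The central step is a short lemma: if an infinite word $w$ has periods $p$ and $q$, then it has period $\operatorname{gcd}(p,q)$. I would prove this by induction on $p+q$: when $p = q$ it is immediate, and when (say) $p < q$ one verifies that $w[i+(q-p)] = w[i+(q-p)+p] = w[i+q] = w[i]$ for every $i \in \N$, using period $p$ and then period $q$, so that $w$ has period $q-p$; since $\operatorname{gcd}(p,q-p) = \operatorname{gcd}(p,q)$, the induction closes. The point to be careful about — and the only real obstacle I anticipate — is that, in contrast to the finite-word Fine--Wilf theorem, no length condition is needed here: every index appearing in the computation is obtained by shifting \emph{forward}, which is always legitimate on a word indexed by $\N$.

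Applying this lemma with $p = |a|$ and $q = |b|$ gives that $a^\infty$ has period $k := \operatorname{gcd}(|a|,|b|)$. I would then let $c$ be the length-$k$ prefix of $a^\infty$; having period $k$ forces $a^\infty = c^\infty$. As $k$ divides $|a|$, reading off the first $|a|$ symbols of $a^\infty = c^\infty$ gives $a = c^{|a|/k}$, and symmetrically $b = c^{|b|/k}$; moreover $k \leq \min(|a|,|b|)$, so $c$ is simultaneously the first $k$ characters of $a$ and (via $b^\infty = a^\infty$) of $b$. Setting $m = |a|/k$ and $n = |b|/k$ then gives exactly the statement, and the remainder is routine bookkeeping.
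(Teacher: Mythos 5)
Your proof is correct and follows essentially the same route as the paper's: both identify the hypothesis with $a^\infty = b^\infty$, observe that $|a|$ and $|b|$ are periods of this one-sided infinite word, derive that $\gcd(|a|,|b|)$ is a period via the (subtractive) Euclidean algorithm, and read off $c$, $m$, $n$ from the resulting $k$-periodicity. If anything, your forward-shift computation $w[i+(q-p)]=w[i+(q-p)+p]=w[i+q]=w[i]$ makes explicit the step the paper compresses into ``any linear combination that has positive value of $|a|$ and $|b|$ is also a period.''
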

\begin{proof}
    Note that $|a| \neq 0$ and $|b| \neq 0$ by hypothesis. Call $f(i) = b^\infty[i]$. It is immediate that $|b|$ is a period of $f$. By hypothesis, we also have that $|a|$ is a period of $f$. Hence, any linear combination that has positive value of $|a|$ and $|b|$ is also a period of $f$. Hence, by Euclid's algorithm, $k=\operatorname{gcd}(|a|,|b|)$ is a period of $f$, from which we get $b = c^{|b|/k}$ with $c$ the first $k$ characters of $b$ since $k \leq |b|$. By hypothesis, we know that $a^{\infty} = b^{\infty}$, hence, we have $a = c^{|a|/k}$, as needed.

\end{proof}

\begin{lemma}\label{lem:formula-tapes-WLOG}
    Let a machine $M$ be a bouncer solved by formula tape $f$, then $f$ can be transformed into a formula tape $f'$ such that $\mathcal{A}_r(f')$ has nonempty intermediary walls that solves the bouncer.
\end{lemma}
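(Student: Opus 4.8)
The plan is to reduce, without loss of generality, to a right-aligned formula tape and then to eliminate its empty intermediary walls one at a time, staying throughout inside the class of formula tapes that solve the bouncer. By Lemma~\ref{lem:right-alignement}(1) we may replace $f$ by $\mathcal{A}_r(f)$ and so assume $f$ is right-aligned, $f=\mathcal{A}_r(f)$. If every intermediary wall of $f$ is nonempty we are done with $f'=f$. Otherwise, since $f$ is right-aligned, an empty intermediary wall can only occur in one of three local shapes inside the formula: (i) between two consecutive repeaters, $(r)(\tilde r)$; (ii) between the last left repeater and the head, $(r)h$; (iii) between the head and the first right repeater, $h(\tilde r)$.

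Shape (iii) is easy: passing to the special case in which $(\tilde r)$ is replaced by $\tilde r^{N}(\tilde r)$ (legitimate by Lemma~\ref{lem:right-alignement}(2)) inserts a nonempty block $\tilde r^{N}$ just to the left of $(\tilde r)$, and since nothing to its left is a repeater, $\mathcal{A}_r$ cannot consume it. For shapes (i) and (ii) we invoke Lemma~\ref{lem:word-theory} to decide whether the two periodic words in play are powers of a common word. Take shape (i), $(r)(\tilde r)$. If $r$ and $\tilde r$ are \emph{not} powers of a common word, we pad as above (on the appropriate side of the head, in the fully aligned normal form so as to obtain a genuine special case); by the contrapositive of Lemma~\ref{lem:word-theory} — two periodic words $r^{\infty}$, $\tilde r^{\infty}$ agreeing on a prefix of length $\ge |r|+|\tilde r|$ must be powers of a common word — the neighbouring repeater can bubble through only a bounded prefix of the inserted power block, so for $N$ large a genuinely nonempty wall survives re-application of $\mathcal{A}_r$. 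If instead $r=c^{p}$ and $\tilde r=c^{q}$ for a common word $c$, padding is useless, and we instead \emph{merge} $(r)(\tilde r)$ into the single repeater $(c^{p+q})$: this leaves every concrete tape $C_f(k)$ unchanged (the factor $r^{k}\tilde r^{k}$ equals $(c^{p+q})^{k}$), and a bouncer-witnessing derivation for $f$ transports to the merged formula, the consecutive shift rules on $(c^{p})$ and $(c^{q})$ collapsing to one shift rule on $(c^{p+q})$. Shape (ii), $(r)h$, is reduced to (i)/(iii) by applying a single $\vdash_\mathcal{A}$-step, a shift rule that carries the repeater across the head onto the other side; iterating, and using that a bouncer never halts so the derivation continues indefinitely, this yields an honest bordering wall after finitely many steps.

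Termination follows because each padding step strictly decreases the number of empty intermediary walls while creating none, each merge strictly decreases the number of repeaters, and each shape-(ii) reduction consumes a repeater from the stack on one side of the head; so finitely many steps produce the desired $f'$, obtained from $f$ by special-casing, merging and finitely many $\vdash_\mathcal{A}$-steps, with $\mathcal{A}_r(f')$ having only nonempty intermediary walls and still solving the bouncer. I expect the main obstacle to be the powers-of-a-common-word case: making precise that a bouncer-solving derivation for $f$ yields one for the merged formula, i.e.\ tracking how shift rules, usual steps, the operator $\mathcal{A}$, and the special-case relation all behave when a maximal run of repeaters that are powers of one word is collapsed to a single repeater. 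The shape-(ii) reduction (a repeater stuck against the head, where one must also check that the single $\vdash_\mathcal{A}$-step preserves the solving property rather than merely invoking closure) is the second delicate point; the rest is routine bookkeeping built on Lemmas~\ref{lem:right-alignement} and~\ref{lem:word-theory}.
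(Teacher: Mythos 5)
Your proposal follows essentially the same route as the paper's proof for the main case of an empty wall between two consecutive repeaters: right-align using Lemma~\ref{lem:right-alignement}, then either pad the gap with a power of one repeater (a special case, which leaves a nonempty wall after re-alignment unless the neighbouring repeater bubbles all the way through for \emph{every} exponent) or, in that latter situation --- which via Lemma~\ref{lem:word-theory} forces the two repeaters to be powers of a common word $c$ --- merge $(c^p)(c^q)$ into $(c^{p+q})$ and compose the two shift rules. Two remarks. First, your quantitative claim that agreement of $r^\infty$ and $\tilde r^\infty$ on a prefix of length $|r|+|\tilde r|$ already forces a common root is Fine--Wilf, which is stronger than Lemma~\ref{lem:word-theory} as stated (the lemma assumes agreement everywhere); but you only need the existence of \emph{some} $N$ for which the padding survives re-alignment, and that follows from the lemma by exactly the dichotomy the paper uses, so this is harmless. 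Second, your shapes~(ii) and~(iii) (empty walls adjacent to the head) are extra relative to the paper, which only treats a pair of adjacent repeaters; your shape~(iii) padding is fine, but the shape~(ii) reduction by $\vdash_\mathcal{A}$-steps is, as you yourself flag, genuinely incomplete: you show neither that an intermediate formula tape of a solving derivation still solves the bouncer in the sense of Definition~\ref{def:bouncers}, nor that a shift rule carrying the repeater across the head must eventually become available. Since the lemma feeds into the headless fitting theorem, where the two head-adjacent walls concatenate into a single wall, this concerns a case the paper's own proof also leaves implicit and does not undermine your core argument.
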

\begin{proof}
    By Lemma~\ref{lem:right-alignement}, $\mathcal{A}_r(f)$ also solves the bouncer. For each case $\dots (r_1)(r_2) \dots$ where two repeaters in $\mathcal{A}_r(f)$ are separated by an empty wall, apply the following transformation to $\mathcal{A}_r(f)$, according to two cases:

    \begin{enumerate}
        \item There is $k\geq 1$ such that right-aligning $(r_1)r_2^k(r_2)$ yields a non-empty intermediary wall, replace $(r_1)(r_2)$ by $\mathcal{A}_r((r_1)r_2^k(r_2))$, yielding a new formula tape $f'$. Noticing that $(r_1)r_2^k(r_2)$ is a special case of $(r_1)(r_2)$ and combining both points of Lemma~\ref{lem:right-alignement} we get that $f'$ still solves the bouncer.

        \item For all $k\geq 1$ there is $r'_k\in\Sigma^+$ such that we have $\mathcal{A}_r((r_1)r_2^k(r_2)) = r_2^k (r'_k)(r_2)$. We deduce that $r_1[i \text{ mod } |r_1|] = r_2^\infty[i]$ for all $i\in\N$. By Lemma~\ref{lem:word-theory} we get that there is $c$ such that $r_1 = c^m$ and $r_2 = c^n$. Hence, we can replace $(r_1)(r_2) = (c^m)(c^n)$ by $(c^m c^n) = (r_1 r_2)$ in the formula tape and still solve the bouncer with this new formula tape. Indeed, because $f$ solves the bouncer we have $u \in\Sigma^*$, $s\in S$, $r', r'' \in \Sigma^+$ and two shift rules, for instance, right shift rules: $u \rhead{s} (c^m) \to (r') u \rhead{s}$ and $u \rhead{s} (c^n) \to (r'') u \rhead{s}$. Hence, considering $u  \rhead{s} c^m c^n$, we get $u  \rhead{s} c^m c^n \vdash^* r' u  \rhead{s} c^n \vdash^+ r' r'' u  \rhead{s}$, meaning that we have a right shift rule $u \rhead{s} (c^m c^n) \to (r'r'') u \rhead{s}$, as needed.

    \end{enumerate}

    By applying case (1) or (2) to all repeaters of $\mathcal{A}_r(f)$ that are separated by an empty wall, we get $f'$, a transformation of $f$ with nonempty intermediary walls that solves the bouncer. Because $f'$ is right-aligned by construction, $\mathcal{A}_r(f') = f'$  and we get the result.

\end{proof}

We are now ready to infer such a formula tape $f$ from three examples $t_i=C_f(i)$.
To simplify the procedure, we wish to drop the head and $0^\infty$ symbols from the tapes and re-attach them to the matching formula.

Given a tape $t$, we call $\mathcal{S}(t) \in \Sigma^*$ the \textit{headless} version of $t$ which is the tape without head and $0^\infty$ symbols. We also introduce headless formula tapes: $\mathcal{S}(f)$ is $f$ without head and $0^\infty$ symbols.

Algorithm~\ref{alg:greedy-formula-tape-fitting} is a greedy algorithm which fits a headless formula tape given three headless tapes $t_0, t_1, t_2$. It proceeds by recursively constructing the leftmost wall using the longest common prefix of $t_0, t_1$ and $t_2$ and then fits the leftmost repeater using the longest prefix $r$ of $t_1$ such that $rr$ is a prefix of $t_2$, we have:

\begin{algorithm}
    \caption{Greedy formula tape fitting algorithm {\sc FitFormulaTape}}\label{alg:greedy-formula-tape-fitting}
    \begin{algorithmic}[1]

        \Procedure{\textbf{HeadlessFormulaTape} {\sc FitFormulaTape}}{\textbf{Word} t0, \textbf{Word} t1, \textbf{Word} t2}

        \If{t0.\textbf{empty}() \&\& t1.\textbf{empty}() \&\& t2.\textbf{empty}()}
        \State \Return \textbf{HeadlessFormulaTape}::\textbf{empty}()

        \EndIf
        \State
        \If{t0.\textbf{len}() $>$ 0 \&\& t1.\textbf{len}() $>$ 0 \&\& t2.\textbf{len}() $>$ 0 \&\& t0[0] == t1[0] \&\& t1[0] == t2[0]}
        \State \Return \textbf{HeadlessFormulaTape}::\textbf{symbol}(t0[0]).\textbf{concat}(\Call{{\sc FitFormulaTape}}{t0[1:], t1[1:], t2[1:]})
        \EndIf
        \State
        \State
        \textbf{uint} longest\_prefix\_size = \textbf{get\_longest\_prefix\_size}(t1, t2)
        \State
        \For{l = longest\_prefix\_size; k $\geq$ 1; k -= 1 }

        \If{2*l $<$ t2.\textbf{len}() \&\& t2[:l] == t2[l:2*l] }
        \State \Return \textbf{HeadlessFormulaTape}::\textbf{repeater}(t2[:l]).\textbf{concat}(\Call{{\sc FitFormulaTape}}{t0, t1[l:], t2[2*l:]})
        \EndIf

        \EndFor
        \State
        \State \textbf{raise} \textbf{Failure}
        \EndProcedure
    \end{algorithmic}
\end{algorithm}

\begin{remark}[Implementation details of Algorithm~\ref{alg:greedy-formula-tape-fitting}]
    We assume that we are given a \textbf{HeadlessFormulaTape} construct, and a function \textbf{get\_longest\_match\_size} which returns the size of the longest prefix of two words. Furthermore, for an array \texttt{a}, we use the notation \texttt{a[b:e]} to mean the slice of this array between indices \texttt{b} and \texttt{e}, excluding \texttt{e}.
\end{remark}



\begin{theorem}[Greedy formula tape fitting]\label{th:greedy-formula-tape-fitting}
    Let $f$ be a headless formula tape with nonempty intermediary walls such that $\mathcal{A}_r(f) = w_1 (r_1) w_2 (r_2)\dots w_m (r_m) w$ with $m\in\N$, $r_i, w_i \in \Sigma^+$ with $2 \leq i \leq m$, and $w_1, w \in \Sigma^*$. Take $t_0, t_1, t_2 \in \Sigma^+$ satisfying:
    \begin{align*}
        t_0 & = \mathcal{S}(C_f(0))  = w_1\; w_2 \dots w_m\; w                       \\
        t_1 & = \mathcal{S}(C_f(1))  = w_1\; r_1\; w_2\; \dots w_m\; r_m\; w         \\
        t_2 & = \mathcal{S}(C_f(2))  = w_1\; r_1 r_1\; w_2\; \dots w_m\; r_m r_m\; w
    \end{align*}
    Then Algorithm~\ref{alg:greedy-formula-tape-fitting}, {\sc FitFormulaTape}($t_0$, $t_1$, $t_2$) does not raise failure and returns $\mathcal{A}_r(f)$.

\end{theorem}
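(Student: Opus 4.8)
The approach I would take is a strong induction on the length $|t_2|$ of the third argument, with the invariant that {\sc FitFormulaTape} is always invoked on a triple $\big({\mathcal S}(C_g(0)),\,{\mathcal S}(C_g(1)),\,{\mathcal S}(C_g(2))\big)$ where $g$ is a suffix of $g_0:=\mathcal A_r(f)$ that is itself right-aligned (no type-1 rule of Definition~\ref{def:alignment} applies to it) and still has nonempty intermediary walls, and the claim at each level is that the call returns $g$. By construction $g_0$ is right-aligned, so the theorem is the top instance $g=g_0$. The one structural fact I would isolate first is: \emph{in a right-aligned formula tape, if a repeater $(r_i)$ is immediately followed by a nonempty wall $v$ (i.e.\ $v=w_{i+1}$ for $i<m$, or $v=w$ for $i=m$), then $r_i[0]\ne v[0]$}; otherwise, writing $c:=r_i[0]=v[0]$ and $r:=r_i[1{:}]\,c\in\Sigma^+$, we get $r_i\,c=c\,r$ with $c$ a nonempty prefix of $v$, so a type-1 alignment rule fires, contradicting right-alignment.

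For the base case $|t_2|=0$: since $|{\mathcal S}(C_g(k))|$ is affine in $k$ with nonnegative slope (walls contribute a constant, each repeater a term linear in $k$), $t_2=\varnothing$ forces $t_0=t_1=\varnothing$ and forces $g$ to consist only of empty walls, so {\sc FitFormulaTape} returns the empty formula. For the inductive step I split on whether $g$ begins with a nonempty wall or with a repeater. If $g=w_1(r_1)\cdots w_m(r_m)w$ with $w_1\ne\varnothing$, then $t_0,t_1,t_2$ all start with $w_1[0]$, so the algorithm takes the symbol-emitting branch and recurses on $(t_0[1{:}],t_1[1{:}],t_2[1{:}])$, which are precisely the $\mathcal S\circ C$ tapes of $g':=w_1[1{:}]\,(r_1)\cdots w_m(r_m)w$ --- still right-aligned (repeater/wall adjacencies untouched), still with nonempty intermediary walls, and with strictly shorter $t_2$. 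The inductive hypothesis returns $g'$, and prepending $w_1[0]$ gives $g$.

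The crux is the case $g=(r_1)\,w_2(r_2)\cdots w_m(r_m)w$ (so $w_1=\varnothing$). Here $t_1[0]=t_2[0]=r_1[0]$, whereas $t_0={\mathcal S}(C_g(0))=w_2\cdots w_m w$ is either empty or begins with the first symbol of the wall right after $(r_1)$, which by the structural fact differs from $r_1[0]$; so the three-way agreement test fails and the algorithm enters the repeater loop. I claim the loop selects $l=|r_1|$. The longest common prefix of $t_1=r_1\cdot\mathrm{rest}_1$ and $t_2=r_1r_1\cdot\mathrm{rest}_2$ is at least $|r_1|$; and at index $|r_1|$ the word $t_2$ has $r_1[0]$ while $t_1$ either ends or has the first symbol of the following wall, which is $\ne r_1[0]$ --- hence the longest common prefix equals $|r_1|$ exactly, so the loop opens at $l=|r_1|$. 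There the doubling test $t_2[{:}l]=t_2[l{:}2l]$ holds because $t_2$ literally starts with $r_1r_1$, and the length guard is met because there is further tape beyond those two copies (the nonempty wall $w_2$ when $m\ge2$, or the wall $w$ otherwise). Hence the algorithm emits $(r_1)$ and recurses on $(t_0,\,t_1[|r_1|{:}],\,t_2[2|r_1|{:}])$ --- note $t_0$ is carried unchanged, which is correct exactly because $(r_1)$ occurs zero times in $C_g(0)$ --- and those three words are the $\mathcal S\circ C$ tapes of $g':=w_2(r_2)\cdots w_m(r_m)w$, again right-aligned with nonempty intermediary walls and strictly shorter $t_2$. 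The inductive hypothesis returns $g'$ and prepending $(r_1)$ recovers $g$. In every case the algorithm commits to a definite branch, so \texttt{Failure} is never raised.

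I expect the main obstacle to be precisely this repeater branch: pinning the greedy pick to exactly $l=|r_1|$, i.e.\ simultaneously ruling out a too-long match (via ``the longest common prefix of $t_1,t_2$ is exactly $|r_1|$'') and confirming the natural candidate $l=|r_1|$ survives the length guard --- equivalently, that in a right-aligned formula tape with nonempty intermediary walls a repeater is never flush against the right end of the tape still to be fitted. Both points rely on right-alignment (through $r_i[0]\ne(\text{next wall})[0]$) and on nonempty intermediary walls; dropping either lets the greedy over- or under-shoot a repeater boundary. One caveat worth checking against the pseudocode: when $(r_1)$ is the \emph{last} component ($w=\varnothing$ and it is the final repeater) one needs the length guard to be non-strict ($2l\le|t_2|$); the strict ``$2l<|t_2|$'' as written would miss this case, and a fully formal proof should either use the non-strict version or reorganize the final step. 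A further routine point is the bookkeeping that the recursive triples stay of the prescribed shape --- in particular that, after a repeater is peeled off, the first component legitimately lags one instantiation level behind the other two --- and that $|t_2|$ strictly decreases, making the induction well-founded.
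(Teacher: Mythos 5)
Your proof is correct and follows essentially the same route as the paper's: both arguments hinge on right-alignment forcing $r_i[0]$ to differ from the first symbol of the following nonempty wall, which pins down the longest-common-prefix computations and hence the greedy choices; you merely organize this as a cleaner strong induction on $|t_2|$. Your caveat about the length guard is moreover a genuine catch: with the strict test $2l < |t_2|$ as written in the pseudocode, a formula tape ending in a repeater with $w = \varnothing$ (including the paper's own worked example $100(100)0000(0)$) would fall through to \textbf{Failure} at the last step, so the guard must indeed be non-strict ($2l \le |t_2|$); the paper's proof silently assumes this corrected version.
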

\begin{proof}
    Notation: if a word $u\in\Sigma^*$ is not empty we use the notation $u[0]$ to mean the first symbol of $u$.
    Let's first prove the case $m=1$ and write $r = r_1$. Using the case of Algorithm~\ref{alg:greedy-formula-tape-fitting}, line 10, we get:
    \setlength{\columnsep}{-7.5cm}
    \begin{multicols}{3}
        \noindent
        \begin{align*}
            t_0          & = w_1\;  w      \\
            t_1          & = w_1\; r\;  w  \\
            t_2          & = w_1\; r r\; w \\
            f_\text{out} & = \varnothing
        \end{align*}
        \begin{align*}
             & \\
            \to_{\text{Algorithm~\ref{alg:greedy-formula-tape-fitting} goes to}}
        \end{align*}
        \begin{align*}
            t_0          & = w        \\
            t_1          & = r\;  w   \\
            t_2          & = r r\;  w \\
            f_\text{out} & = w_1
        \end{align*}
    \end{multicols}

    Indeed, this is because $w_1$ is the biggest prefix shared by $t_0, t_1$ and $t_2$; otherwise, it means that $w_1w[0]$ is a prefix of $t_1$ and we get $r[0] = w[0]$ which is excluded by right-alignment.

    From there, we have two cases: (1) if $w = \varnothing$ then the greedy algorithm returns $f_\text{out} = w_1 (r) = \mathcal{A}_r(f)$, as needed, (2) if $w \neq \varnothing$, because $\mathcal{A}_r(f)$ is right-aligned, we must have $w[0] \neq r[0]$ ($r$ is not empty by hypothesis), and $r$ is the longest prefix common to $t_1$ and $t_2$ since otherwise, using $t_2$ we get $w[0] = r[0]$. Hence, the greedy algorithm returns $f_\text{out} = w_1 (r) w = \mathcal{A}_r(f)$, as needed.

    Assuming $m\geq 2$, as above because of right-alignment, we get that $w_1$ is the longest prefix common to $t_0$, $t_1$ and $t_2$, hence:
    \setlength{\columnsep}{-1.9cm}
    \begin{multicols}{3}
        \noindent
        \begin{align*}
            t_0          & = w_1\; w_2\;\dots w_m\; w                              \\
            t_1          & = w_1\; r_1\; w_2\; r_2\; \dots w_m\; r_m\; w           \\
            t_2          & = w_1\; r_1r_1\; w_2\;  r_2r_2\; \dots w_m\; r_mr_m\; w \\
            f_\text{out} & = \varnothing
        \end{align*}
        \begin{align*}
             &                                                                   \\
            \to_{\text{Algorithm~\ref{alg:greedy-formula-tape-fitting} goes to}} \\
             &
        \end{align*}
        \begin{align*}
            t_0          & = w_2\;\dots w_m\; w                              \\
            t_1          & = r_1\; w_2\; r_2\; \dots w_m\; r_m\; w           \\
            t_2          & = r_1r_1\; w_2\;  r_2r_2\; \dots w_m\; r_mr_m\; w \\
            f_\text{out} & = w_1
        \end{align*}
    \end{multicols}
    By hypothesis, $w_2$ and $r_1$ are not empty and, because $\mathcal{A}_r(f)$ is right-aligned, we get $w_2[0] \neq r_1[0]$.
    Hence, Algorithm~\ref{alg:greedy-formula-tape-fitting} enters the repeater-fitting case on line 8. Necessarily, $r_1$ is the longest prefix to $t_1$ and $t_2$ otherwise using $t_2$ we get $r_1[0] = w_2[0]$ which contradicts right-alignment, hence:
    \begin{align*}
        t_0          & = w_2\;\dots w_m\; w                     \\
        t_1          & = w_2\; r_2\; \dots w_m\; r_m\; w        \\
        t_2          & = w_2\;  r_2r_2\; \dots w_m\; r_mr_m\; w \\
        f_\text{out} & = w_1 (r_1)
    \end{align*}
    From here we can inductively conclude that Algorithm~\ref{alg:greedy-formula-tape-fitting} will reach:
    \begin{align*}
        t_0          & = w_m\; w                           \\
        t_1          & = w_m\; r_m\; w                     \\
        t_2          & = w_m\; r_mr_m\; w                  \\
        f_\text{out} & = w_1 (r_1) \dots w_{m-1} (r_{m-1})
    \end{align*}

    Using the same argument as for $m=1$, we get that Algorithm~\ref{alg:greedy-formula-tape-fitting} returns $f_\text{out} = w_1 (r_1) \dots w_m (r_m) w = \mathcal{A}_r(f)$, as needed.

\end{proof}

\begin{example}
    Consider the headless right-aligned formula tape with nonempty intermediary walls $f=100(100)0000(0)$. Then the Algorithm~\ref{alg:greedy-formula-tape-fitting} proceeds as follows:

    \begin{multicols}{5}
        \noindent
        \begin{align*}
            t_0          & = 100\; 0000                      \\
            t_1          & = 100\; 100\; 0000 \; 0           \\
            t_2          & = 100\; 100\; 100\; 0000 \; 0\; 0 \\
            f_\text{out} & = \varnothing
        \end{align*}
        \begin{align*}
             &                                                                   \\
            \to_{\text{Algorithm~\ref{alg:greedy-formula-tape-fitting} goes to}} \\
             &
        \end{align*}
        \begin{align*}
            t_0          & = 0000                      \\
            t_1          & = 100\; 0000 \; 0           \\
            t_2          & = 100\; 100\; 0000 \; 0\; 0 \\
            f_\text{out} & = 100
        \end{align*}
        \begin{align*}
             &                                                                   \\
            \to_{\text{Algorithm~\ref{alg:greedy-formula-tape-fitting} goes to}} \\
             &
        \end{align*}
        \begin{align*}
            t_0          & = 0000           \\
            t_1          & =  0000 \; 0     \\
            t_2          & =  0000 \; 0\; 0 \\
            f_\text{out} & = 100 (100)
        \end{align*}
    \end{multicols}
    \begin{multicols}{5}
        \noindent
        \begin{align*}
            t_0 & = 0000           \\
            t_1 & =  0000 \; 0     \\
            t_2 & =  0000 \; 0\; 0 \\
            f   & = 100 (100)
        \end{align*}
        \begin{align*}
             &                                                                   \\
            \to_{\text{Algorithm~\ref{alg:greedy-formula-tape-fitting} goes to}} \\
             &
        \end{align*}
        \begin{align*}
            t_0          & = \varnothing    \\
            t_1          & =   0            \\
            t_2          & =   0\; 0        \\
            f_\text{out} & = 100 (100) 0000
        \end{align*}
        \begin{align*}
             &                                                                   \\
            \to_{\text{Algorithm~\ref{alg:greedy-formula-tape-fitting} goes to}} \\
             &
        \end{align*}
        \begin{align*}
            t_0          & = \varnothing        \\
            t_1          & =  \varnothing       \\
            t_2          & =  \varnothing       \\
            f_\text{out} & = 100 (100) 0000 (0)
        \end{align*}
    \end{multicols}

    Hence, the output is $f$, as claimed in Theorem~\ref{th:greedy-formula-tape-fitting}.

    Note that in some case, Algorithm~\ref{alg:greedy-formula-tape-fitting} is able to fit formula tapes that have some empty intermediary walls, such as $f=10110(110)(101)$. But in other cases, it cannot, such as with $f = 01 (1) (0111) 1$, which raises failure.

\end{example}

\subsubsection{Bouncers decider}

We finally piece all the elements of this chapter together and describe a decider for bouncers (excluding cyclers, decided in Section~\ref{sec:cyclers}), Algorithm~\ref{alg:decider-bouncers} which is proven correct in Theorem~\ref{th:bouncer-decider}.

In order to drastically limit the amount of plausible subsequences to check, we limit our interest to \textit{record-breaking} formula tapes, which are formula tapes where the head is pointing at $0^\infty$. In order to fit them, we can simply track record-breaking tapes that share the same head state and direction, i.e.\ tapes where the head is pointing at $0^\infty$. We first show that this is without loss of generality:

\begin{lemma}\label{lem:record-breaking}
    If $M$ is a bouncer that is not a cycler, solved by a formula tape $f$, then there is a record-breaking formula tape that also solves it.
\end{lemma}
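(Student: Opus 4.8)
The plan is to run the macro-step dynamics $\vdash_\mathcal{A}$ starting from $f$ and to catch the first formula tape whose head has reached $0^\infty$; I will argue that this tape, after applying $\mathcal{A}$, is record-breaking and still solves the bouncer. First, since $M$ is not a cycler, I invoke Theorem~\ref{th:linquad}: it yields a formula tape $\sync{f}$ that also solves the bouncer and for which the tapes $C_{\sync{f}}(n)$ are all reached by the machine, with lengths in arithmetic progression of common difference $K=\sum_i|\tilde r_i|\ge 1$ (the exponents $p_i$ are not all zero, as $M$ is not a cycler — see the proof of Theorem~\ref{th:linquad}). Replacing $f$ by $\sync{f}$, I may assume that $f$ has a reachable tape and that running $f$ through a full ``super-cycle'' increases the tape length by $K\ge 1$. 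Write the macro-step trajectory of $f$ as $f=f_0\vdash_\mathcal{A}f_1\vdash_\mathcal{A}f_2\vdash_\mathcal{A}\cdots$; by Definition~\ref{def:bouncers} there is $M\ge 1$ with $f_M$ a special case of $f_0$, and since $\vdash_\mathcal{A}$ is deterministic the trajectory of $f_M$ is the tail $f_M,f_{M+1},\dots$ of the trajectory of $f_0$.

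The heart of the argument is that \emph{every} $f_j$ with $0\le j\le M$ solves the bouncer. Reachability propagates forward: a reachable $t\in\mathcal{L}(f_0)=\mathcal{L}(\mathcal{A}(f_0))$ (Lemma~\ref{lem:sameLanguage}) is carried through each macro step by Lemma~\ref{lem:vdashFormulaTapes}, producing a reachable tape in each $\mathcal{L}(f_j)$. For the second clause of Definition~\ref{def:bouncers} I ``rotate'' the super-cycle using the mechanism in the proof of Lemma~\ref{lem:right-alignement}(2): when $g'$ is a special case of $g$, the macro-step trajectory of $g'$ mirrors that of $g$, each macro step of $g$ being simulated by one or more macro steps of $g'$ (a shift rule picking up extra usual steps on the duplicated repeater blocks) and landing on a special case of the corresponding term. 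Applying this with $g=f_0$, $g'=f_M$ gives, for each $j$, an index $i'\ge j$ such that $f_{M+i'}$ is a special case of $f_j$; since $M+i'-j\ge M\ge 1$, this is a return $f_j\vdash_\mathcal{A}^+ f_{M+i'}$ to a special case of $f_j$ (invoking transitivity of ``special case'' and the fact that, $\mathcal{A}$ being idempotent, a special case of $f_j$ is also a special case of $\mathcal{A}(f_j)$). So $f_j$ solves the bouncer, and hence so does $\mathcal{A}(f_j)$: it has the same language by Lemma~\ref{lem:sameLanguage}, and $\mathcal{A}(f_j)\vdash_\mathcal{A}f_{j+1}$ since $\mathcal{A}(\mathcal{A}(f_j))=\mathcal{A}(f_j)$.

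It remains to locate a record-breaking tape among the $\mathcal{A}(f_j)$. By Lemma~\ref{lem:vdashFormulaTapes}, the TM computation realizing the first super-cycle $f_0\vdash_\mathcal{A}\cdots\vdash_\mathcal{A}f_M$ carries a reachable tape $t\in\mathcal{L}(f_0)$ to some $t'\in\mathcal{L}(f_M)$ of strictly larger length (at least $K\ge 1$ symbols more). A shift-rule macro step leaves the tape length unchanged (the repeater size is preserved), so the increase must occur during some usual-step macro step $\mathcal{A}(f_j)\vdash f_{j+1}$, which is a single ordinary TM step writing a symbol onto a cell of the surrounding $0^\infty$; for that step the head of $\mathcal{A}(f_j)$ must point at $0^\infty$, i.e.\ $\mathcal{A}(f_j)$ is record-breaking. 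By the previous paragraph $\mathcal{A}(f_j)$ solves the bouncer, which is the claim. I expect the rotation argument to be the main obstacle: one must make precise, from the proof of Lemma~\ref{lem:right-alignement}(2), that passing to a special case inflates the macro-step trajectory only by extra usual steps while preserving the special-case relation term by term. Everything else is routine bookkeeping with Lemmas~\ref{lem:vdashFormulaTapes} and~\ref{lem:sameLanguage} and the idempotence of $\mathcal{A}$.
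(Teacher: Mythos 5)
Your proposal is correct and follows essentially the same route as the paper's proof: identify a macro step in the super-cycle at which the concrete tape must grow (hence the head points at $0^\infty$), and argue that this intermediate formula tape still solves the bouncer because the special case $f_M$ of $f_0$ eventually reaches a special case of it under $\vdash_\mathcal{A}$. You supply more detail than the paper — routing the growth claim through Theorem~\ref{th:linquad} rather than the paper's contrapositive ``otherwise the formula tapes do not grow and $M$ is a cycler,'' and flagging the rotation/special-case step, which the paper also asserts without elaboration.
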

\begin{proof}
    We have $f \vdash_\mathcal{A} f_1 \vdash_\mathcal{A} f_2 \dots \vdash_\mathcal{A} f_n$ with $n \geq 1$ and $f_n$ is a special case of $f$. There is $1 \leq i \leq n$ such that $f_{i}$ is record-breaking otherwise, the formula tapes do not grow and $M$ is a cycler, which is excluded. Because $f_n$ is a special case of $f$, it will eventually reach, under applications of $\vdash_\mathcal{A}$, a formula tape $f'$ that is special case of $f_i$ and we have the result.
\end{proof}

\begin{lemma}\label{lem:linquad-intermediary-walls}
    If $f$ is a formula tape with nonempty intermediary walls, then $\sync{f}$ built in Theorem~\ref{th:linquad} has nonempty intermediary walls.
\end{lemma}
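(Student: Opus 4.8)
The plan is to unpack the construction of $\sync{f}$ carried out in the proof of Theorem~\ref{th:linquad} and simply read off its walls. As in that proof I will first treat the case where $f$ has no repeaters to the right of the head (the other side being handled by the identical argument), writing $f = w_1(r_1)w_2(r_2)\cdots w_n(r_n)\,w_{n+1}\,h\,w$ in aligned form. Then the intermediary walls of $f$ are exactly $w_2,\dots,w_{n+1}$, which are nonempty by hypothesis, while $w_1$ is the first wall and $w$ the last. Recall that $\sync{f}$ is obtained by replacing each repeater $(r_i)$ with $r_i^{k_i}(r_i^{p_i})$, where $k_i,p_i\in\N$ are the quantities produced in Theorem~\ref{th:linquad}, and then deleting the repeater altogether whenever $p_i=0$, its factor $r_i^{k_i}$ being absorbed into the neighbouring wall.

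First I would set $I=\{i_1<\cdots<i_m\}=\{i:p_i>0\}$ and note $m\ge 1$, since otherwise no repeater survives and $M$ is a cycler, contrary to assumption. Then I would regroup the expression for $\sync{f}$: the wall sitting between the $(j-1)$-st and the $j$-th surviving repeater is
$$\tilde w_j \;=\; w_{i_{j-1}+1}\,r_{i_{j-1}+1}^{k_{i_{j-1}+1}}\cdots w_{i_j}\,r_{i_j}^{k_{i_j}}\qquad(2\le j\le m),$$
the wall before the first surviving repeater is $\tilde w_1 = w_1 r_1^{k_1}\cdots w_{i_1}r_{i_1}^{k_{i_1}}$, and the wall after the last one is $\tilde w_{m+1} = w_{i_m+1}r_{i_m+1}^{k_{i_m+1}}\cdots w_n r_n^{k_n}\,w_{n+1}$, so that $\sync{f} = \tilde w_1(\tilde r_1)\cdots\tilde w_m(\tilde r_m)\,\tilde w_{m+1}\,h\,w$ with $\tilde r_j = r_{i_j}^{p_{i_j}}$. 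Here $\tilde w_1$ is the first wall of $\sync{f}$ and $w$ (untouched) is its last, so the intermediary walls of $\sync{f}$ are precisely $\tilde w_2,\dots,\tilde w_{m+1}$.

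The final step is to observe that each of these embeds one of the original intermediary walls of $f$ and is therefore nonempty: for $2\le j\le m$ the word $\tilde w_j$ begins with $w_{i_{j-1}+1}$, and since $i_{j-1}\le i_{m-1}<i_m\le n$ we have $2\le i_{j-1}+1\le n$, so $w_{i_{j-1}+1}$ is one of $w_2,\dots,w_n$; and $\tilde w_{m+1}$ ends with $w_{n+1}$. In each case the embedded factor is a nonempty intermediary wall of $f$ by hypothesis, hence $\tilde w_j\ne\varnothing$. I do not expect a real obstacle; the only point requiring care is the index bookkeeping — checking that the shift by one never escapes the range $\{2,\dots,n+1\}$ — together with the degenerate cases ($k_i=0$, $m=1$, $n=1$, and empty products of factors), all of which are controlled by the inequalities $1\le i_1<\cdots<i_m\le n$. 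The general two-sided case then follows by applying the same regrouping to the part of $f$ on the other side of the head.
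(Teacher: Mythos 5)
Your proof is correct and follows the same approach as the paper, which simply observes that the construction of $\operatorname{sync}(f)$ only removes repeaters, replaces repeaters by powers of themselves, and enlarges walls. Your version makes the paper's three-line observation explicit by writing out the regrouped walls and checking that each intermediary wall of $\operatorname{sync}(f)$ contains one of the original nonempty intermediary walls $w_2,\dots,w_{n+1}$ (resp.\ $w'_1,\dots,w'_m$); the index bookkeeping is sound.
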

\begin{proof}
    The construction of $\sync{f}$ only involves (1) removing some repeaters (2) replacing some repeaters by some powers of themselves (3) increasing the size of some walls. Hence, $\sync{f}$ has nonempty intermediary walls.
\end{proof}

Algorithm~\ref{alg:decider-bouncers} proceeds by (1) tracking record-breaking tapes with same head (i.e.\ same state and pointing-direction) that grow linearly in quadratic time, (2) fitting formula tapes from triples from these plausible subsequences until a formula tape solves the bouncer is found or some limits are met.

\begin{algorithm}[h!]
    \caption{{\sc Decider-Bouncers}}\label{alg:decider-bouncers}
    \begin{algorithmic}[1]

        \Procedure{\textbf{bool} {\sc Decider-Bouncers}}{\textbf{TM} machine, \textbf{uint} step\_limit, \textbf{uint} macro\_step\_limit, \textbf{uint} max\_formula\_tapes}

        \State \textbf{Map[TMHead,Vec[Tape]]} record\_breaking\_tapes = \textbf{get\_record\_breaking\_tapes}(machine, step\_limit)

        \For{\textbf{TMHead} head \textbf{in} record\_breaking\_tapes}
        \State \textbf{uint} num\_tested\_formula = 0
        \For{\textbf{uint} i, \textbf{Tape} tape4 \textbf{in} record\_breaking\_tapes[head].\textbf{enumerate}()}
        \State \textbf{if} i $<$ 3 \textbf{then continue}
        \State \textbf{bool} break\_outer = \textbf{false}
        \For{\textbf{uint} j, \textbf{Tape} tape3 \textbf{in} record\_breaking\_tapes[head][:i].\textbf{enumerate}()}
        \State \textbf{if} j $<$ 2 \textbf{then continue}
        \State \textbf{uint} len\_diff = tape4.\textbf{len}() - tape3.\textbf{len}()
        \State \textbf{uint} tape2\_len = tape3.\textbf{len}() - len\_diff
        \State \textbf{Tape or None} tape\_2 = record\_breaking\_tapes[head][:i].\textbf{binary\_search\_len}(tape2\_len)
        \If{tape\_2 \textbf{is} None}
        \State \textbf{continue}
        \EndIf
        \State
        \State \textbf{uint} tape1\_len = tape2.\textbf{len}() - len\_diff
        \State \textbf{Tape or None} tape\_1 = record\_breaking\_tapes[head][:i].\textbf{binary\_search\_len}(tape1\_len)
        \If{tape\_1 \textbf{is} None}
        \State \textbf{continue}
        \EndIf
        \State
        \If{\textbf{not} \textbf{is\_quadratic}([tape\_1.step,tape\_2.step,tape\_3.step,tape\_4.step])}
        \State \textbf{continue}
        \EndIf

        \State

        \State \textbf{try} \textbf{FormulaTape} f = {\sc FitFormulaTape}(tape\_1, tape\_2, tape\_3).\textbf{headless}()
        \If{\textbf{Failure} was raised}
        \State \textbf{continue}
        \EndIf
        \State
        \State f.\textbf{attach\_head}(tape\_1.head)
        \If{f.\textbf{reaches\_special\_case}(macro\_step\_limit)}

        \State \Return \textbf{true}

        \EndIf

        \State
        \State num\_tested\_formula += 1

        \State
        \If{num\_tested\_formula == max\_formula\_tapes}
        \State break\_outer = \textbf{true}
        \State \textbf{break}
        \EndIf

        \EndFor
        \State
        \If{break\_outer}
        \State \textbf{break}
        \EndIf
        \EndFor
        \EndFor
        \State \Return \textbf{false}
        \EndProcedure
    \end{algorithmic}
\end{algorithm}

\begin{theorem}[Deciding bouncers]\label{th:bouncer-decider}
    Let $M$ be a bouncer (Definition~\ref{def:bouncers}). Then, there exists a step limit $s\in\N$, a macro step limit $m\in\N$ and a formula tape testing limit $l$ such that Algorithm~\ref{alg:decider-bouncers}, {\sc Decider-Bouncers}($M$,$s$,$m$,$l$) outputs \textbf{true}.
\end{theorem}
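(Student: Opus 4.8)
The plan is to produce, for the given bouncer $M$, one particular formula tape that {\sc Decider-Bouncers} is forced to reconstruct and then verify, and to take the three numerical limits large enough for that to happen. If $M$ is a cycler it is already decided in Section~\ref{sec:cyclers}, so — matching the scope of Algorithm~\ref{alg:decider-bouncers} — I would assume $M$ is a bouncer that is not a cycler; by Definition~\ref{def:bouncers} it is solved by some formula tape. Starting from such a solving formula tape I would apply, in order: Lemma~\ref{lem:record-breaking}, to make it \emph{record-breaking} (head pointing at $0^\infty$, say on the right, with head datum $h_F$); Lemma~\ref{lem:formula-tapes-WLOG} together with Lemma~\ref{lem:right-alignement}, to make it right-aligned with nonempty intermediary walls; and Theorem~\ref{th:linquad} together with Lemma~\ref{lem:linquad-intermediary-walls}, to apply the $\sync{\cdot}$ construction. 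Checking (with some care about the order of these operations) that each step preserves the properties the earlier ones provide — in particular, the head and the adjacent $0^\infty$ are untouched by $\mathcal{A}_r$, by the transformation of Lemma~\ref{lem:formula-tapes-WLOG}, and by the $\sync{\cdot}$ construction, so the tape remains record-breaking — this yields a formula tape $F$ that: (i) solves $M$; (ii) is record-breaking with head $h_F$; (iii) is right-aligned and has nonempty intermediary walls; and (iv), writing $g(k)$ for the step at which $M$ reaches $C_F(k)$, is such that the lengths $l_{g(k)}$ form an arithmetic progression with common difference $K\ge 1$ (here $K\ge1$ because $M$ is not a cycler) and the times $g(k)$ form a quadratic progression.

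Next I would pin down a \emph{collision-free window}. By (iv) the record-breaking tapes of $M$ with head datum $h_F$ have lengths tending to $\infty$, and past a finite transient they are exactly the tapes $C_F(k)$, whose lengths $L, L+K, L+2K,\dots$ are strictly increasing. Fix $k_1$ past this transient with $|C_F(k_1)|$ larger than the length of every earlier record-breaking tape at $h_F$; then for every $k\ge k_1$ the tape $C_F(k)$ is the \emph{unique} record-breaking tape at $h_F$ of its length. Replacing $F$ by the special case $F'$ obtained by substituting $r_i^{k_1}(r_i)$ for every repeater $(r_i)$ of $F$ preserves (i)--(iii) — by Lemma~\ref{lem:right-alignement}, and because the inserted blocks only enlarge walls — and satisfies $\mathcal{S}(C_{F'}(j))=\mathcal{S}(C_F(k_1+j))$, so that $C_{F'}(j)$ is reached by $M$ at step $g(k_1+j)$.

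Now I would choose the limits and trace the decider. Take the step limit $s=g(k_1+3)$, so that the call \texttt{get\_record\_breaking\_tapes} with this limit records $C_F(k_1),\dots,C_F(k_1+3)$ in the length-sorted list for $h_F$; take the macro-step limit $m$ to be a macro-step count witnessing item~2 of Definition~\ref{def:bouncers} for $F'$; and take $l$ to be any bound on the number of formula-tape fits the decider can attempt for head $h_F$ within step limit $s$ (a finite, explicit quantity), so that \texttt{num\_tested\_formula} never aborts the search prematurely. When its loops reach the entries $C_F(k_1+3)$ (as \texttt{tape4}) and $C_F(k_1+2)$ (as \texttt{tape3}), the value \texttt{len\_diff} equals $K$, so the two length look-ups request the lengths $|C_F(k_1+1)|$ and $|C_F(k_1)|$ and, by the uniqueness just arranged, return $C_F(k_1+1)$ (as \texttt{tape\_2}) and $C_F(k_1)$ (as \texttt{tape\_1}); the \texttt{is\_quadratic} test on the four step-times $g(k_1),g(k_1+1),g(k_1+2),g(k_1+3)$ passes because $(g(k))_{k\in\N}$ is a quadratic progression; so {\sc FitFormulaTape} is called on the headless forms of $C_F(k_1),C_F(k_1+1),C_F(k_1+2)$, which are $\mathcal{S}(C_{F'}(0)),\mathcal{S}(C_{F'}(1)),\mathcal{S}(C_{F'}(2))$, and by Theorem~\ref{th:greedy-formula-tape-fitting} this call does not raise failure and returns $\mathcal{A}_r(F')=F'$; finally, after re-attaching the head $h_F$, \texttt{reaches\_special\_case} with limit $m$ returns \textbf{true} by item~2 of Definition~\ref{def:bouncers}. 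Hence {\sc Decider-Bouncers}$(M,s,m,l)$ returns \textbf{true}, as required.

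The main obstacle is the collision-free-window step: turning the \emph{existence} of a good formula tape $F$ into control over the \emph{concrete}, length-indexed search of Algorithm~\ref{alg:decider-bouncers}. The substantive claim to establish is that, beyond a finite transient, a bouncer's record-breaking tapes at a fixed head datum are precisely the synchronous instances $C_F(k)$ of one good formula tape — so that the length-keyed look-ups recover the intended tapes and \texttt{is\_quadratic} discards every spurious window; this should follow by unwinding the constructions behind Lemma~\ref{lem:record-breaking} and Theorem~\ref{th:linquad}, but it requires a genuine argument about how a bouncer settles into its formula-tape behavior. Everything else — propagating "record-breaking", "right-aligned" and "nonempty intermediary walls" simultaneously through the normalisation chain, and checking that $s,m,l$ only need to be large enough (each bounded by an explicit function of $g(k_1+3)$) — is careful but mechanical.
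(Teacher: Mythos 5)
Your proposal is correct and follows essentially the same route as the paper's proof: normalise a solving formula tape via Lemma~\ref{lem:record-breaking}, Lemma~\ref{lem:formula-tapes-WLOG}, Theorem~\ref{th:linquad} and Lemma~\ref{lem:linquad-intermediary-walls}, then invoke Theorem~\ref{th:greedy-formula-tape-fitting} on the triple $C_{\operatorname{sync}(f)}(0),C_{\operatorname{sync}(f)}(1),C_{\operatorname{sync}(f)}(2)$ and Lemma~\ref{lem:right-alignement} to conclude. The ``collision-free window'' issue you flag as the remaining obstacle is in fact benign (record-breaking tapes at a fixed head datum have strictly increasing lengths, so the length look-ups are unambiguous whenever the $C_F(k)$ appear among them), and the paper simply glosses over this point with ``if $s$ and $l$ are large enough''; your treatment is, if anything, more careful than the paper's.
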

\begin{proof}
    Because $M$ is a bouncer, using Lemma~\ref{lem:record-breaking} we know that there is a record-breaking formula tape $\tilde{f}$ that solves the bouncer. By Lemma~\ref{lem:formula-tapes-WLOG} we know that we can transform $\tilde{f}$ into $f$ with nonempty intermediary walls and that also solves the bouncer.

    Algorithm~\ref{alg:decider-bouncers} enumerates all record-breaking tapes triple that grow linearly in quadratic time. If $s$ and $l$ are large enough, by Theorem~\ref{th:linquad}, we know that it will, eventually meet a triple of the form: \begin{align*}
        t_0 & = C_{\sync{f}}(0) \\
        t_1 & = C_{\sync{f}}(1) \\
        t_2 & = C_{\sync{f}}(2)
    \end{align*}

    By Lemma~\ref{lem:linquad-intermediary-walls}, $f'=\sync{f}$ is with nonempty intermediary walls and, at line 23, Algorithm~\ref{alg:decider-bouncers} will feed $(\mathcal{S}(C_{f'}(0)),\mathcal{S}(C_{f'}(1)),\mathcal{S}(C_{f'}(2)))$ to Algorithm~\ref{alg:greedy-formula-tape-fitting}, which, by Theorem~\ref{th:greedy-formula-tape-fitting}, will output $\mathcal{S}(\mathcal{A}_r(\sync{f}))$ which can be reconstructed into $\mathcal{A}_r(\sync{f})$ at line 27. By Theorem~\ref{th:linquad}, we know that $\sync{f}$ also solves the bouncer, and by Lemma~\ref{lem:right-alignement}, $\mathcal{A}_r(\sync{f})$ also solves the bouncer. Theorem~\ref{th:bouncers} will be verified on $\mathcal{A}_r(\sync{f})$ using \textbf{reaches\_special\_case} at line 28, which, by hypothesis will output \textbf{true}, and we have the result.
\end{proof}

\begin{remark}
    While Theorem~\ref{th:bouncer-decider} assures us that the decider will be able to detect a bouncer in bounded time, the formula tapes found in practice by the algorithm can be smaller than the $\sync{f}$ construction we used to prove the upper bound.
\end{remark}

\begin{remark}[Implementation details of Algorithm~\ref{alg:decider-bouncers}]
    In Algorithm~\ref{alg:decider-bouncers}, we assume that we are given (1) a \textbf{get\_record breaking\_tapes} routine which returns the record-breaking tapes for each tape head $h\in\Delta$ (i.e.\ tape head state and pointing-direction), in increasing length (2) a \textbf{binary\_search\_len} routine which finds by binary search a tape of a given length, or returns \textbf{None} if it does not exist and (3) a \textbf{is\_quadratic} which tests that a sequence of integers is in quadratic progression (for instance by computing second differences and testing they are constant), (4) routines \textbf{headless} and \textbf{attach\_head} to manipulate tapes/formula tapes with and without head, and (5) a \textbf{reaches\_special\_case} routine on formula tapes which returns \textbf{true} if successive simulation and alignment of the formula tape, as described in Theorem~\ref{th:bouncers}, reaches a special case in a given amount of macro steps, where a macro step consists in performing alignment followed by either performing a usual step or a shift rule step. We can note that when detecting shift rules (see Section~\ref{sec:bouncers:formula-tapes}), it is important to implement cycler detection since it is possible for a machine to cycle indefinitely on a finite tape.
\end{remark}

\subsection{Implementations}\label{sec:bouncers-implem}

Here are the implementations of the decider that were realised:

\begin{enumerate}
    \item Tony Guilfoyle's C++ initial implementation (does not use the theory presented in this section): \url{https://github.com/TonyGuil/bbchallenge/tree/main/Bouncers}
    \item Iijil's Go implementation (does not use the theory presented in this section): \url{https://github.com/Iijil1/Bouncers}
    \item savask's Haskell implementation (basis of the theory presented in this section): \url{https://gist.github.com/savask/888aa5e058559c972413790c29d7ad72}
    \item mei's optimised Rust implementation, reproducing savask's: \url{https://github.com/meithecatte/busycoq/}. This implementation outputs certificates that are verified using Coq, more details about this approach will be given in future versions of this text.
    \item Tristan Stérin's (cosmo's) Rust implementation, reproducing savask's and mei's: \url{https://github.com/bbchallenge/bbchallenge-deciders/tree/main/decider-bouncers-reproduction}. This implementation follows this text to the letter, reproducing each concept and algorithm as presented here. It is less efficient than mei's.

\end{enumerate}

\textbf{Verifiers.} Verifiers for Theorem~\ref{th:bouncers}, i.e.\ programs that verify bouncer certificates (Definition~\ref{def:bouncer-certificate}) have also been given as part of the above implementations. Mei's implementation provides a Coq implementation of the verifier. There is an ongoing effort for standardising the format of bouncers certificates (and certificates in general).


\begin{figure}[h!]
    \centering
    \includegraphics[scale=0.35]{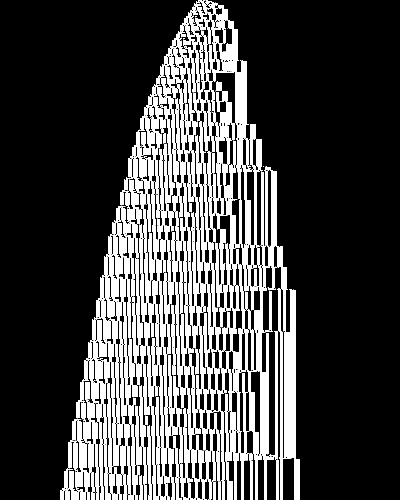}
    \caption{\small 50,000-step space-time diagram of \url{https://bbchallenge.org/5608043}. Out of the 29,799 decided bouncers, this bouncer takes the most steps (141,509) to be detected and fits the biggest formula tape, using Algorithm~\ref{alg:decider-bouncers} presented in this section.}\label{fig:big-bounce}
\end{figure}
\vspace{-2.5ex}
\paragraph*{Remarkable bouncers.} Out of the 29,799 bouncers that were decided using Algorithm~\ref{alg:decider-bouncers}, here are some remarkable facts:
\begin{enumerate}
    \item Using Algorithm~\ref{alg:decider-bouncers}, only two bouncers with 3 repeaters were found (and that's the maximum): \url{https://bbchallenge.org/347505} and \url{https://bbchallenge.org/8131743}. Otherwise, 2132 bouncers have 2 repeaters and the rest have only 1.
    \item\label{pt:big-formula-tape} The biggest fitted formula tapes by the algorithm have 328 symbols (summing walls and repeater symbols, not counting head and $0^\infty$), there are two of them, such as for machine \url{https://bbchallenge.org/5608043}, see Figure~\ref{fig:big-bounce}:
          \begin{align*}
               & 0^\infty\lhead{\text{A}}  10000011100011100000011100001110111001110111001110000111 \\ &0000111011100111000011101110011101110011100001110000111011100\\ &1110000111000011100001110000111011100111011100111011100111011\\ &100111011100111011100(111011100111011100111011100111011100)000\\ &1110000111000000111001111111111111110000111(111111111111)00111\\ &000000011100000011111100111111(0)^\infty
          \end{align*}

    \item The above machine of Point~\ref{pt:big-formula-tape} is also the machine that is detected after the most steps: 141,509. Over this dataset, it took 207 steps on average.
    \item The most macro steps (i.e.\ number of usual or shift rule steps in formula tape simulation) needed to conclude using Theorem~\ref{th:bouncers} was 41,628 for \url{https://bbchallenge.org/347505}. Otherwise, it took 66 macro steps on average.

\end{enumerate}

\appendix

\section{Author contributions}

\paragraph{The bbchallenge Collaboration (credits).} The following contributions resulted in the present work: Iijil (Halting Segment); Mateusz Na\'{s}ciszewski, Nathan Fenner, Tony Guilfoyle (Halting Segment reproductions); Justin Blanchard, Mateusz Naściszewski, Konrad Deka (FAR); Tony Guilfoyle, Tristan Stérin (FAR reproductions); Tony Guilfoyle (Bouncers); Iijil, savask, Maja Kądziołka, Tristan Stérin (Bouncers reproductions); savask, Tristan Stérin (Bouncers theory); Tristan Stérin, Justin Blanchard, savask (paper writing); Pavel Kropitz, Shawn Ligocki, Pascal Michel (paper review).

\begin{itemize}
  \item The bbchallenge Collaboration, \url{bbchallenge.org}, \texttt{bbchallenge@bbchallenge.org}
  \item Justin Blanchard, \texttt{UncombedCoconut@gmail.com}
  \item Konrad Deka, \texttt{deka.konrad@gmail.com}
  \item Nathan Fenner, \texttt{nfenneremail@gmail.com}
  \item Tony Guilfoyle, \texttt{tonyguil@gmail.com}
  \item Iijil, \texttt{hheussen@web.de}
  \item Maja Kądziołka, \texttt{bb@compilercrim.es}
  \item Pavel Kropitz
  \item Shawn Ligocki, \texttt{sligocki@gmail.com}
  \item Pascal Michel, \texttt{pascalmichel314@gmail.com}
  \item Mateusz Na\'{s}ciszewski
  \item Tristan Stérin, PRGM DEV, \texttt{tristan@prgm.dev}
\end{itemize}

\bibliographystyle{abbrv}

\bibliography{correctness-deciders}

\end{document}